\newtheorem{theorem}{Theorem}[section]
\newtheorem{lemma}[theorem]{Lemma}
\newtheorem{definition}[theorem]{Definition}
\newtheorem{proposition}[theorem]{Proposition}
\newtheorem{claim}[theorem]{Claim}
\newtheorem{corollary}[theorem]{Corollary}
\theoremstyle{definition}
\newtheorem{example}{Example}[section]
\newtheorem{problem}{Problem}[section]
\newtheorem{remark}{Remark}[section]
\newcommand{\R}{\mathbb{R}}
\newcommand{\x}{\mathbf{x}}
\newcommand{\mb}[1]{\mathbf{#1}}
\newcommand{\mc}[1]{\mathcal{#1}}
\newcommand{\argmax}{\mathop{\mathrm{argmax}}}
\newcommand{\ip}[1]{\langle{ #1 \rangle} }
\newcommand{\epsscore}{\epsilon'} 
\newcommand{\epsprivacy}{\epsilon}
\icmltitlerunning{Fast and Private Submodular and $k$-Submodular Functions Maximization}
\begin{document}

\onecolumn
\icmltitle{Fast and Private Submodular and $k$-Submodular Functions Maximization with Matroid Constraints}



\icmlsetsymbol{equal}{*}

\begin{icmlauthorlist}
\icmlauthor{Akbar Rafiey}{to}
\icmlauthor{Yuichi Yoshida}{goo}
\end{icmlauthorlist}

\icmlaffiliation{to}{Department of Computing Science, Simon Fraser University, Burnaby, Canada}
\icmlaffiliation{goo}{National Institute of Informatics, Tokyo, Japan}

\icmlcorrespondingauthor{Akbar Rafiey}{arafiey@sfu.ca}
\icmlcorrespondingauthor{Yuichi Yoshida}{yyoshida@nii.ac.jp}

\icmlkeywords{Differential Privacy, Submodular Maximization}

\vskip 0.3in



\printAffiliationsAndNotice{}  


\begin{abstract}
The problem of maximizing nonnegative monotone submodular functions under a
certain constraint has been intensively studied in the last decade, and a wide range of efficient approximation algorithms have been developed for this problem. Many machine learning problems, including data summarization and influence maximization, can be naturally modeled
as the problem of maximizing monotone submodular functions. However, when such applications involve
sensitive data about individuals, their privacy
concerns should be addressed. In this paper, we study the problem of maximizing monotone submodular functions subject to matroid constraints in the
framework of differential privacy. We provide $(1-\frac{1}{\mathrm{e}})$-approximation algorithm which improves upon the previous results in terms of approximation guarantee. This is done with an almost cubic number of function evaluations in our algorithm.

Moreover, we study $k$-submodularity, a natural generalization of submodularity. We give the first $\frac{1}{2}$-approximation algorithm that preserves
differential privacy for maximizing monotone $k$-submodular
functions subject to matroid constraints. The approximation ratio is asymptotically tight and is obtained with an almost linear number of function evaluations.
\end{abstract}


\section{Introduction}

A set function $F\colon 2^E \to \mathbb{R}$ is \emph{submodular} if for any $S \subseteq T \subseteq E$ and $e \in E\setminus T$ it holds that
$
F(S \cup \{e \}) - F(S) \geq F(T \cup \{e\}) - F(T).
$
The theory of \emph{submodular maximization} provides a general and unified framework for various combinatorial optimization problems including the Maximum Coverage, Maximum Cut, and Facility Location
problems. Furthermore, it also appears in a wide variety of applications such as viral marketing~\cite{kempe2003maximizing}, information gathering~\cite{krause2007near}, feature selection for classification~\cite{krause2005near}, influence maximization in social networks~\cite{kempe2003maximizing}, document summarization~\cite{lin2011class}, and speeding up satisfiability solvers~\cite{streeter2009online}. For a survey,  see~\cite{krause2014submodular}. As a consequence of these applications and importance, a wide range of efficient approximation algorithms have been developed for maximizing submodular functions subject to different constraints~\cite{calinescu2011maximizing,nemhauser1978best,nemhauser1978analysis,vondrak2008optimal}. 

The need for efficient optimization methods that guarantee the privacy of individuals is wide-spread across many applications concerning sensitive data about individuals, e.g., medical data, web search query data, salary data, social networks. Let us motivate privacy concerns by an example.

\begin{example}[Feature Selection~\cite{krause2005near,MitrovicB0K17}] \label{example-1}
A sensitive dataset $D=\{(\mb{x}_i,C_i)\}_{i=1}^{n}$ consists of a feature vector $\mb{x}_i=(\mb{x}_i(1),\dots,\mb{x}_i(m))$ associated to each individual $i$ together with a binary class label $C_i$. The objective is to select a small (e.g., size at most $k$) subset $S\subseteq [m]$ of features that can provide a good classifier for $C$. One particular example for this setting is determining collection of features such as height, weight, and age that are most relevant in predicting if an individual is likely to have a particular disease such as diabetes and HIV. One approach to address the feature selection problem, due to~\citet{krause2005near}, is based on maximizing a submodular function which captures the mutual information between a subset of features and the class label of interest. Here, it is important that the selection of relevant features does not compromise the privacy of any individual who has contributed to the training dataset.
 \end{example}

\emph{Differential privacy} is a rigorous notion of privacy that allows statistical analysis of sensitive data while providing strong
privacy guarantees. Basically, differential privacy requires that computations be insensitive
to changes in any particular individual's record. 
A dataset is a collection of records from some domain, and two datasets are \textit{neighboring} if they differ in a single record. Simply put, the requirement for differential privacy is that the computation behaves nearly identically on two neighboring datasets; 
Formally, for $\epsilon,\delta \in \R_+$, we say that a randomized computation $M$ is \emph{$(\epsilon,\delta)$-differentially private} if for any neighboring datasets $D \sim D'$, and for any set of outcomes $S \subseteq \mathrm{range}(M)$,
  \[
    \Pr[M(D)\in S] \leq \exp(\epsilon) \Pr[M(D')\in S]+\delta.
  \]
  When $\delta=0$, we say $M$ is \emph{$\epsilon$-differentially private}. Differentially private algorithms must be calibrated to the \emph{sensitivity} of the function of interest with respect to small changes in the input dataset.

In this paper we consider designing a differentially private algorithm for maximizing nonnegative and \emph{monotone} submodular functions in \emph{low-sensitivity} regime. Whilst, a \emph{cardinality} constraint (as in Example~\ref{example-1}) is a natural one to place on a submodular maximization problem, many other problems, e.g., personalized data summarization~\cite{mirzasoleiman2016fasta}, require the use of more general types of constraints, i.e., \emph{matroid} constraints. The problem of maximizing a submodular function under a matroid constraint is a
classical problem~\cite{edmonds71}, with many important special cases, e.g., uniform matroid (the
subset selection problem, see Example~\ref{example-1}), partition matroid (submodular welfare/partition problem). We consider the following.
\begin{problem}\label{problem1}
Given a sensitive dataset $D$ associated to a monotone submodular  function $F_D\colon 2^E\to \mathbb{R}_+$ and a matroid $\mc{M}=(E,\mc{I})$. Find a subset $S \in \mc{I}$ that approximately maximizes $F_D$ in a manner that guarantees differential privacy with respect to the input dataset $D$.
\end{problem}

Furthermore, we consider a natural generalization of submodular functions, namely, $k$-submodular functions. $k$-submodular function maximization allows for richer problem structure than submodular maximization. For instance, coupled feature selection~\cite{SinghB12}, sensor placement with $k$ kinds of measures~\cite{ohsaka2015monotone},  and influence maximization with $k$ topics can be expressed as $k$-submodular function maximization problems. To motivate the privacy concerns, consider the next example. More examples are given in Section~\ref{section:k-sub-examples}.

\begin{example}[Influence Maximization with $k$ Topics]
For $k$ topics, a sensitive dataset is a directed graph $G=(V,E)$ with an edge probability $p_{u,v}^{i}$ for each edge $(u,v)\in E$, representing the strength of influence from $u$ to $v$ on the $i$-th topic. The goal is to distribute these topics to $N$ vertices of the graph so that we maximize \emph{influence spread}. The problem of maximizing influence spread can be formulated as $k$-submodular function maximization problem~\cite{ohsaka2015monotone}. An example for this setting is in viral marketing where dataset consists of a directed graph where each vertex represents a user and each edge represents the friendship between a pair of users. Given $k$ kinds of products, the objective is to promote products by giving (discounted) items to a selected group of influential people in the hope that large number of product adoptions will occur. 
Here, besides maximizing the influence spread, it is important to preserve the privacy of individuals in the dataset.
\end{example}


\begin{problem}\label{problem2}
Given a sensitive dataset $D$ associated to a monotone $k$-submodular function $F_D\colon (k+1)^E\to \mathbb{R}_+$ and a matroid $\mc{M}=(E,\mc{I})$.
   Find $S = (S_1, \dots, S_k)$ with $\bigcup_{i\in[k]} S_i \in \mc{I}$ that approximately maximizes $F_D$ in a manner that guarantees differential privacy with respect to the input dataset $D$.
\end{problem}

\subsection{Our Contributions}
 
\noindent  \textbf{Submodular Maximization:} For maximizing a nonnegative monotone submodular function subject to a matroid constraint, we show that a modification of the \emph{continuous greedy} algorithm~\cite{calinescu2011maximizing} yields a good approximation guarantee as well as a good privacy guarantee. Following the same idea, we maximize the so-called \emph{multilinear extension} of the input submodular function in the corresponding \emph{matroid polytope}, denoted by $\mc{P}(\mc{M})$. However, in order to greedily choose a direction, it requires to have a \emph{discretization} of the matroid polytope. Fortunately, due to \citet{yoshida2019cheeger}, an efficient discretization can be achieved. That is, we can \emph{cover} a polytope with a small number of balls in polynomial time. Having these in hand, we prove the following. 

\begin{theorem}\label{thm:main-introduction}
Suppose $F_D$ is monotone with sensitivity $\Delta$ and $\mc{M}=(E,\mc{I})$ is a matroid. For every $\epsprivacy>0$, there is an $(\epsprivacy r(\mc{M})^2)$-differentailly private algorithm that, with high probability, returns $S\in \mc{I}$ with quality at least $(1-\frac{1}{\mathrm{e}})OPT-O\left(\sqrt{\epsprivacy}+\frac{\Delta r(\mc{M})|E|\ln{|E|}}{\epsprivacy^3}\right)$.
\end{theorem}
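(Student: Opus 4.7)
The plan is to adapt the continuous greedy algorithm of Calinescu--Chekuri--Pal--Vondr\'ak to the differentially private setting by replacing its deterministic direction choice at each step with a call to the exponential mechanism on a discretization of the matroid polytope. Concretely, I would lift the problem to the multilinear extension $\tilde{F}_D\colon [0,1]^E \to \R_+$ optimized over $\mc{P}(\mc{M})$, invoke the polynomial-size cover of $\mc{P}(\mc{M})$ from Yoshida (2019) to obtain a finite candidate set $Y\seq \mc{P}(\mc{M})$ of spacing $\eta$ and size $|Y|\leq |E|^{O(r(\mc{M}))}$, and then run continuous greedy with step size $\delta$ for $T=1/\delta$ iterations, updating $x_{t+1}=x_t+\delta v_t$, where $v_t$ is sampled from $Y$ by the exponential mechanism with score $q_t(v)=\tilde{F}_D(x_t+\delta v)-\tilde{F}_D(x_t)$ and per-step parameter $\epsprivacy$. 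At the end, I would round the fractional solution $x_T$ to an integral $S\in\mc{I}$ via data-oblivious pipage or swap rounding.

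For the privacy analysis, since each $q_t$ has sensitivity at most $\delta\Delta$ (using monotonicity and the fact that $F_D$ has sensitivity $\Delta$), after the standard rescaling each call to the exponential mechanism is $\epsprivacy$-DP. The iterate $x_t$ is a function only of the previous mechanism outputs, so adaptive basic composition applies and the whole algorithm is $T\epsprivacy$-DP; choosing $T\leq r(\mc{M})^2$ (and hence $\delta \geq 1/r(\mc{M})^2$) yields the claimed $(\epsprivacy\, r(\mc{M})^2)$-privacy guarantee. The rounding step accesses only $x_T$, so it is post-processing and contributes no privacy cost.

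For the utility analysis, the noiseless continuous greedy gives $\tilde{F}_D(x_T)\geq (1-1/\mathrm{e})\mathrm{OPT}$; with noisy steps, an inductive argument propagates per-step additive errors through the recurrence $\tilde{F}_D(x_{t+1})\geq \tilde{F}_D(x_t)+\delta(\mathrm{OPT}-\tilde{F}_D(x_t))-\mathrm{err}_t$, yielding $\tilde{F}_D(x_T) \geq (1-1/\mathrm{e})\mathrm{OPT} - \sum_t \mathrm{err}_t$. Two sources contribute to $\mathrm{err}_t$: the discretization, which by the Lipschitz continuity of $\tilde{F}_D$ along $\mc{P}(\mc{M})$ costs $O(\eta\, r(\mc{M}))$ per step; and the exponential mechanism, which with high probability loses $O((\delta\Delta/\epsprivacy)\ln|Y|)$ per step. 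Summing over $T$ steps, using $\ln|Y|=O(r(\mc{M})\ln|E|)$, and balancing $\delta$ and $\eta$ against $\epsprivacy$ gives the total additive loss $O\!\bigl(\sqrt{\epsprivacy}+\Delta\, r(\mc{M})\,|E|\ln|E|/\epsprivacy^3\bigr)$. Since swap rounding preserves the expected value, the same guarantee transfers to $F_D(S)$, and a standard concentration argument converts this to a high-probability bound.

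The main obstacle I anticipate is the parameter balancing. The step size $\delta$, the cover granularity $\eta$, and the per-step privacy parameter $\epsprivacy$ are tightly coupled: the total number of iterations drives both the composition cost and the discrete-to-continuous approximation slack, while $\eta$ controls both $|Y|$ (hence the exponential mechanism noise) and the per-step Lipschitz error. Extracting exactly $\sqrt{\epsprivacy}$ and $1/\epsprivacy^3$ rather than looser polynomial dependencies requires a careful trade-off that I would need to pin down explicitly. A secondary subtlety is ensuring that the Yoshida (2019) cover is fine enough relative to the smoothness modulus of $\tilde{F}_D$ on $\mc{P}(\mc{M})$ so that the per-step discretization loss is truly $O(\eta\,r(\mc{M}))$; here I would lean on the structural guarantees of the cited cover rather than try to construct one ad hoc.
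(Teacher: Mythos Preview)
Your high-level plan coincides with the paper's: run continuous greedy on the multilinear extension, replace the direction-finding step by an exponential mechanism over a finite cover of $\mc{P}(\mc{M})$ obtained from Yoshida's result, compose across iterations, and finish with swap rounding. However, two concrete errors prevent the argument from going through as written.

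First, and most importantly, the sensitivity of your score $q_t(v)=\tilde F_D(x_t+\delta v)-\tilde F_D(x_t)$ is not $\delta\Delta$; it is $\Theta(\delta\,r(\mc{M})\,\Delta)$ in general. For a counterexample take $F_D\equiv 0$ and $F_{D'}(S)=\Delta\cdot\mb{1}[S\neq\emptyset]$: both are monotone submodular with pointwise gap at most $\Delta$, yet at $x_t=\mb{0}$ and $v=\mb{1}_B$ for a base $B$ one has $q_t^{D'}(v)-q_t^D(v)=\Delta\bigl(1-(1-\delta)^{r(\mc{M})}\bigr)\approx \delta\,r(\mc{M})\,\Delta$. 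Monotonicity does not help here. With the correct sensitivity, each exponential-mechanism call is $O(\epsprivacy\,r(\mc{M}))$-DP, not $\epsprivacy$-DP, and your choice $T\le r(\mc{M})^2$ then yields only $O(\epsprivacy\,r(\mc{M})^3)$ after composition. The paper instead uses the linear score $\langle \mb{y},\nabla f_D(\mb{x}_t)\rangle$, bounds its sensitivity by $O(r(\mc{M})\Delta)$ via $\|\mb{y}\|_1\le r(\mc{M})$, and runs only $T=r(\mc{M})$ steps; the $r(\mc{M})^2$ in the privacy bound is one factor from the per-step sensitivity and one from composition, not two from composition.

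Second, your cover-size and discretization estimates do not match the cited result. Yoshida's theorem gives a $\rho$-cover of size $|E|^{O((B/\rho)^2)}$ with $B^2\le r(\mc{M})$, so $\ln|Y|=O\bigl((r(\mc{M})/\rho^2)\ln|E|\bigr)$ rather than $O(r(\mc{M})\ln|E|)$; this is exactly where the $|E|/\epsprivacy^2$ factor in the additive error arises once one sets $\rho=\epsprivacy/\sqrt{|E|}$. Moreover, the per-step discretization loss is not $O(\eta\,r(\mc{M}))$ via a Lipschitz bound; the paper controls $|f(\mb{y})-f(\mb{x}^*)|$ by a Hellinger/total-variation argument giving $4|E|^{1/4}\sqrt{\rho}$, and the $\sqrt{\epsprivacy}$ term in the final bound comes from this estimate after the same substitution for $\rho$. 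Without these two corrections you will not recover the stated $\sqrt{\epsprivacy}$ and $1/\epsprivacy^3$ dependencies.
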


  For covering $C$ of $\mc{P}(\mc{M})$, the algorithm in Theorem~\ref{thm:main-introduction} makes $O(r(\mc{M})|E||C|)$ queries to the \emph{evaluation oracle}. We point out that $C$ has a size of roughly $|E|^{1/\epsprivacy^2}$. In Section~\ref{section:IDPCG}, we present an algorithm that makes significantly fewer queries to the evaluation oracle.

  \begin{theorem}\label{thm:imp-introduction}
      Suppose $F_D$ is monotone and has sensitivity $\Delta$ and $\mc{M}=(E,\mc{I})$ is a matroid. For every $\epsprivacy>0$, there is an $(\epsprivacy r(\mc{M})^2)$-differentailly private algorithm that, with high probability, returns $S\in \mc{I}$ with quality at least $(1-\frac{1}{\mathrm{e}})OPT-O\left(\sqrt{\epsprivacy}+ \frac{\Delta r(\mc{M})|E|\ln( |E|/\epsprivacy)}{\epsprivacy^3}\right)$. Moreover, this algorithm makes at most $O(r(\mc{M})|E|^2\ln{\frac{|E|}{\epsprivacy}})$ queries to the evaluation oracle.
  \end{theorem}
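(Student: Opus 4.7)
The plan is to bypass the exponentially large covering $C$ of $\mc{P}(\mc{M})$ used in Theorem~\ref{thm:main-introduction} by running a private variant of the continuous greedy algorithm that queries the oracle only at the iterates it actually visits, rather than at every centre of a covering. Concretely, I would maintain a fractional iterate $x_t \in \mc{P}(\mc{M})$ with $x_0 = 0$; at each of $T$ outer steps, call a private matroid--greedy subroutine that selects a basis $B_t$ approximately maximizing $\langle \nabla F(x_t), \mathbf{1}_{B_t}\rangle$ subject to $B_t \in \mc{I}$; then update $x_{t+1} = x_t + (1/T)\mathbf{1}_{B_t}$ and finally round $x_T$ via a data--oblivious procedure such as pipage or swap rounding, which incurs no additional privacy cost.

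The inner private matroid--greedy would select its $r(\mc{M})$ elements one at a time via the exponential mechanism with scores given by the partial derivatives $\partial F/\partial x_e(x_t)$, filtered to elements whose addition preserves independence. Each such selection costs $O(|E|)$ oracle queries for scoring, so each outer iteration costs $O(r(\mc{M})|E|)$ queries. Choosing $T = \Theta(|E|\ln(|E|/\epsprivacy))$ --- large enough to control the continuous--greedy discretization error, yet small enough to keep the composition loss manageable --- reproduces the advertised query bound $O(r(\mc{M})|E|^2\ln(|E|/\epsprivacy))$.

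For privacy, I would allocate a per--call budget $\epsprivacy' = \Theta(\epsprivacy r(\mc{M})/T)$ and apply basic composition over the $T \cdot r(\mc{M})$ exponential--mechanism invocations to obtain $\epsprivacy r(\mc{M})^2$--differential privacy. For utility, I would follow the standard continuous--greedy potential argument: at each step the private inner loop returns a basis whose linear value is within $O(r(\mc{M})\Delta\ln|E|/\epsprivacy')$ of optimal; plugging this into the concavity--along--lines inequality that drives the $(1-1/\mathrm{e})$ analysis and telescoping over the $T$ steps, together with the $\sqrt{\epsprivacy}$ step--size error of continuous greedy, yields the claimed additive loss.

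The main obstacle will be controlling how the inner--loop noise compounds through the outer continuous--greedy recursion. A loose bound would give multiplicative blow--up and destroy the $(1-1/\mathrm{e})$ ratio, so one has to exploit the $1/T$ step size to telescope the per--iteration errors into a single additive loss matching the stated $O(\Delta r(\mc{M})|E|\ln(|E|/\epsprivacy)/\epsprivacy^3)$ term. A secondary subtlety is that $\partial F/\partial x_e(x_t)$ is itself an expectation over subsets sampled according to $x_t$; one must either evaluate it exactly when feasible or argue that unbiased sampling estimates plug into both the sensitivity bound for the exponential mechanism and the approximation analysis without degrading the stated query budget.
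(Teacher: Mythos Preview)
Your route is genuinely different from the paper's. The paper does not discard the $\rho$-covering $C_\rho$; Algorithm~\ref{alg:IDPGA} keeps the exponential mechanism over $C_\rho$ from Algorithm~\ref{alg:DPGA} but avoids enumerating $C_\rho$ by \emph{estimating} that mechanism via random sampling: it partitions $C_\rho$ into $k=O(r(\mc{M})/\epsprivacy)$ geometric layers according to the value of $\exp\bigl(\epsscore\ip{\mb{y},\nabla f(\mb{x}_t)}\bigr)$, draws $\Theta(|E|\ln(|E|/\epsprivacy))$ uniform samples from $C_\rho$ to estimate the layer sizes, picks a layer with probability proportional to (estimated size)$\times$(layer score), and then returns a uniform point from that layer. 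The paper's approach buys a direct reuse of the Algorithm~\ref{alg:DPGA} analysis (Lemma~\ref{lemma:EM-Analogous} is the analogue of Theorem~\ref{thm:EM-bound}, and the privacy proof of Theorem~\ref{IDPGA-privacy} tracks how layers shift between neighbouring datasets); your approach buys simplicity and dispenses with the covering machinery altogether.

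Two caveats on your write-up. First, the $\sqrt{\epsprivacy}$ in the target bound is not a continuous-greedy step-size error --- it is the covering-approximation term $C_{\ref{lem:existance-in-net}}\sqrt{\rho}$ of Lemma~\ref{lem:existance-in-net} with $\rho=\epsprivacy/\sqrt{|E|}$, and it would be absent from your route. Second, the scores $\partial f/\partial x_e(\mb{x}_t)$ do not change across the $r(\mc{M})$ inner exponential-mechanism calls, so one gradient computation per outer step suffices rather than $r(\mc{M})$; with the natural choice $T=r(\mc{M})$ and per-call budget $\epsprivacy$ your scheme already composes to $\epsprivacy r(\mc{M})^2$-DP with additive error $O(\Delta r(\mc{M})\ln|E|/\epsprivacy)$, which is tighter than the stated bound. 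Your larger $T=\Theta(|E|\ln(|E|/\epsprivacy))$ appears reverse-engineered to match the stated query count and only inflates the noise term.
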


\noindent  \textbf{$k$-submodular Maximization:}
To the best of our knowledge, there is no algorithm for maximizing $k$-submodular functions concerning differential privacy.
We study Problem~\ref{problem2} in Section~\ref{section;k-submodular}.
First, we discuss an $(\epsprivacy r(\mc{M}))$-differentially private algorithm that uses the evaluation oracle at most $O(k r(\mc{M}) |E|)$ times and outputs a solution with quality at least $1/2$ of the optimal one.

\begin{theorem}
  Suppose $F_D:(k+1)^E\to \mathbb{R}_+$ is monotone and has sensitivity $\Delta$. For any $\epsilon > 0$, there is an $O(\epsprivacy r(\mc{M}))$-differentially private algorithm that, with high probability, returns a solution $X=(X_1,\ldots,X_k) \in (k+1)^E$ with $\bigcup_{i\in[k]} X_i \in \mc{I}$ and $F_D(X) \geq \frac{1}{2}\mathrm{OPT}-O(\frac{\Delta r(\mc{M})\ln{|E|}}{\epsprivacy})$ by evaluating $F_D$ at most $O(k r(\mc{M})|E|)$ times.
\end{theorem}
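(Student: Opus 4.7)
The plan is to privatize the natural greedy algorithm for monotone $k$-submodular maximization under a matroid constraint using the exponential mechanism, and then compose the per-step privacy losses. The algorithm maintains $X=(X_1,\dots,X_k)$ initialized to $(\emptyset,\dots,\emptyset)$ and runs for $r(\mc{M})$ rounds. In round $t$, let $U_t = \{(e,i) : e \in E\setminus \bigcup_j X_j,\ \bigcup_j X_j \cup \{e\} \in \mc{I},\ i\in[k]\}$, and define the score $q_t(e,i) = F_D(X \oplus (e,i)) - F_D(X)$, where $X\oplus(e,i)$ denotes adding $e$ to coordinate $i$. Sample $(e^*,i^*) \in U_t$ by the exponential mechanism with parameter $\epsilon$ and score $q_t$, then update $X_{i^*} \leftarrow X_{i^*}\cup\{e^*\}$. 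The procedure stops after $r(\mc{M})$ rounds (by which point $\bigcup_j X_j$ is a base of $\mc{M}$). Counting the evaluation queries used in each round's exponential-mechanism scoring gives at most $|U_t|\le k|E|$ evaluations, for a total of $O(kr(\mc{M})|E|)$ calls to the oracle, as required.

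For privacy, I would first verify that $q_t$ has sensitivity at most $2\Delta$: for neighboring $D\sim D'$,
\[
|q_t(e,i)-q'_t(e,i)| \le |F_D(X\oplus(e,i))-F_{D'}(X\oplus(e,i))| + |F_D(X)-F_{D'}(X)| \le 2\Delta.
\]
Thus each round is $\epsilon$-differentially private by the standard guarantee of the exponential mechanism, and basic composition across the $r(\mc{M})$ rounds yields $\bigl(O(\epsilon r(\mc{M}))\bigr)$-differential privacy, matching the statement.

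For utility, I would combine two ingredients. First, the utility bound for the exponential mechanism guarantees that with probability at least $1-\beta$, the chosen pair $(e^*,i^*)$ satisfies
\[
q_t(e^*,i^*) \ge \max_{(e,i)\in U_t} q_t(e,i) - O\!\left(\tfrac{\Delta}{\epsilon}\ln\tfrac{k|E|}{\beta}\right).
\]
Union-bounding over the $r(\mc{M})$ rounds with $\beta = 1/\mathrm{poly}(|E|)$ contributes a high-probability event with a total additive loss of $O\!\bigl(\tfrac{\Delta r(\mc{M})\ln|E|}{\epsilon}\bigr)$. Second, I would invoke the standard $\tfrac{1}{2}$-approximation analysis for monotone $k$-submodular maximization under matroid constraints: using the matroid basis-exchange property, one matches each optimal element to one greedy element in the same coordinate, and monotonicity plus $k$-submodularity bound the loss in the "tracked optimum'' at each step by (twice) the greedy marginal. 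Applied to our noisy greedy and absorbing the per-round $O(\Delta\log|E|/\epsilon)$ deficit, the telescoping argument delivers
\[
F_D(X) \ge \tfrac{1}{2}\mathrm{OPT} - O\!\left(\tfrac{\Delta r(\mc{M})\ln|E|}{\epsilon}\right).
\]

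The main technical step is verifying that the deterministic $\tfrac{1}{2}$-approximation analysis is robust to additive errors in each greedy step, so that the per-round additive slack $O(\Delta\log|E|/\epsilon)$ aggregates linearly over $r(\mc{M})$ rounds rather than amplifying; this is straightforward for the standard exchange-based argument because the loss in the tracked optimum at each step is bounded by the \emph{actual} gain obtained by the algorithm, and replacing that gain with (gain $+ O(\Delta\log|E|/\epsilon)$) only changes the final bound by an additive $O(\Delta r(\mc{M})\log|E|/\epsilon)$. Feasibility is immediate, since the exponential mechanism only samples from $U_t$, whose definition enforces the matroid constraint.
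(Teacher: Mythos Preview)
Your proposal is correct and matches the paper's approach essentially step for step: the same privatized greedy via the exponential mechanism on marginal gains, the same $2\Delta$ sensitivity bound on the score plus basic composition for privacy, and the same tracked-optimum exchange argument (the paper cites \cite{sakaue2017maximizing} for the key per-step inequality $F(\mathbf{x}^{(j)})-F(\mathbf{x}^{(j-1)})\ge F(\mathbf{o}^{(j-1)})-F(\mathbf{o}^{(j)})$) with the exponential-mechanism error absorbed additively before telescoping. The only cosmetic difference is that your per-round error has $\ln(k|E|/\beta)$ rather than the paper's $\ln(|\Lambda(\mathbf{x})|/\beta)$, but both are $O(\ln|E|)$ after setting $\beta=1/\mathrm{poly}(|E|)$.
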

This $1/2$ approximation ratio is asymptotically tight due to the hardness result in~\cite{iwata2016improved}.
Applying a sampling technique~\cite{MirzasoleimanBK15,MitrovicB0K17,ohsaka2015monotone}, we propose an algorithm that preserves the same privacy guarantee and the same quality as before while evaluating $F_D$ almost linear number of times, namely $O\left(k|E|\ln{r(\mc{M})} \ln{\frac{r(\mc{M})}{\gamma}} \right)$. Here, $\gamma$ is the failure probability of our algorithm.


\subsection{Related Works}
\citet{gupta2010differentially} considered an important case of Problem~\ref{problem1} called the \emph{Combinatorial Public Projects} (CPP problem). 
The CPP problem was introduced by \citet{PapadimitriouSS08} and is as follows. For a data set $D=(x_1,\ldots,x_n)$, each individual $x_i$ submits a \emph{private} non-decreasing and submodular valuation function $F_{x_i}\colon 2^E\to [0, 1]$. Our goal is to select a subset $S\subseteq E$ of size $k$ to maximize function $F_D$ that takes the particular form $F_D(S) =\frac{1}{n} \sum\limits_{i=1}^{n} F_{x_i}(S)$. Note that in this setting, the sensitivity can be always bounded from above by $\frac{1}{n}$.
Gupta~\emph{et~al.} showed the following.


\begin{theorem}[\citet{gupta2010differentially}]\label{Gupta}
For any $\delta\leq 1/2$, there is an $(\epsprivacy,\delta)$-differentially private algorithm for the CPP problem under cardinality constraint that, with high probability, returns a solution $S\subseteq E$ of size $k$ with quality at least $(1-\frac{1}{\mathrm{e}})\mathrm{OPT}-O(\frac{k\ln{(\mathrm{e}/\delta)}\ln |E|}{\epsprivacy})$.
\end{theorem}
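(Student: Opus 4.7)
The plan is to privatize the classical greedy algorithm for monotone submodular maximization under a cardinality constraint. In the non-private greedy one starts with $S_0 = \emptyset$ and, for $t = 1,\dots,k$, sets $S_t = S_{t-1} \cup \{e_t\}$ where $e_t$ maximizes the marginal gain $q_{S_{t-1}}(e) := F_D(S_{t-1}\cup\{e\}) - F_D(S_{t-1})$ over $e \in E\setminus S_{t-1}$. The only step that touches the dataset is the argmax, so I would replace each such argmax with a call to the exponential mechanism using $q_{S_{t-1}}$ as the score function.

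First I would bound the sensitivity of the score. Since each individual $F_{x_i}$ maps into $[0,1]$ and $F_D$ is their average, $F_D$ has sensitivity $1/n$, so for any fixed $S$ the score $q_S(\cdot)$ has sensitivity at most $2/n$ on neighbouring datasets. The standard exponential-mechanism utility bound then says that running it at privacy level $\epsprivacy_0$ over the at most $|E|$ candidate elements returns one whose score is within an additive $O\!\bigl(\tfrac{1}{n\epsprivacy_0}\log(|E|/\beta)\bigr)$ of the best, with probability at least $1-\beta$.

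Next I would control the overall privacy loss by composing the $k$ greedy iterations. Each call to the exponential mechanism is $\epsprivacy_0$-DP, so by the advanced composition theorem the whole algorithm is $(\epsprivacy,\delta)$-DP provided we set $\epsprivacy_0 = \Theta\!\bigl(\epsprivacy/\sqrt{k\log(1/\delta)}\bigr)$. Taking a union bound over the $k$ iterations with $\beta = O(1/k)$, every selection is simultaneously within an additive $\eta = O\!\bigl(\tfrac{\log(|E|)\sqrt{k\log(1/\delta)}}{n\,\epsprivacy}\bigr)$ of the optimum with high probability. The final ingredient is the standard approximate-greedy analysis: if at every step we add an element whose marginal gain is within $\eta$ of the optimal marginal gain, then the submodularity argument of Nemhauser--Wolsey--Fisher gives $F_D(S_k) \geq (1-1/\mathrm{e})\mathrm{OPT} - k\eta$.

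Plugging the bound on $\eta$ into this inequality recovers the stated additive error up to the specific $k$- and $\delta$-dependence; the precise form $O(k\ln(\mathrm{e}/\delta)\log|E|/\epsprivacy)$ quoted in the theorem comes from a slightly sharper bookkeeping of the composition (one can interpolate between the $k$-factor of basic composition and the $\sqrt{k\log(1/\delta)}$-factor of advanced composition, and in particular one can apply a variant that produces a single $\log(1/\delta)$ instead of a square root). The main technical delicacy is exactly this composition step: one must thread the high-probability utility bound of the exponential mechanism through the $(\epsprivacy,\delta)$-composition in a way that keeps the additive error linear, rather than quadratic, in $k$, and absorbs the $1/n$ normalization consistently with the definition of sensitivity used for the CPP problem.
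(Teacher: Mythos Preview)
The paper does not prove this theorem; it is quoted from \citet{gupta2010differentially} in the Related Works section, so there is no in-paper proof to compare against. That said, the paper explicitly comments on the nature of Gupta~\emph{et~al.}'s argument: ``the privacy analysis of Theorem~\ref{Gupta} heavily relies on the assumption that the input function $F_D=\frac{1}{n}\sum_i F_{x_i}$ is the average of $F_{x_i}$'s, and does not directly generalize to arbitrary submodular functions.'' In other words, the original proof is \emph{not} a black-box composition of $k$ exponential-mechanism calls; it is a direct, whole-transcript privacy analysis that exploits the CPP structure (the marginal gains of the single changed function $F_{x_i}$ telescope along the greedy trajectory, so the cumulative privacy loss over all $k$ rounds is controlled without paying a factor of $k$ or $\sqrt{k}$).

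Your proposal, by contrast, routes the privacy through composition, and this is where it fails to reach the stated bound. With advanced composition and $\epsprivacy_0=\Theta(\epsprivacy/\sqrt{k\ln(1/\delta)})$, the per-round additive error from the exponential mechanism is $\Theta\bigl(\tfrac{1}{n\epsprivacy_0}\ln|E|\bigr)$, and summing over $k$ rounds yields total error $\Theta\bigl(\tfrac{k^{3/2}\sqrt{\ln(1/\delta)}\ln|E|}{n\,\epsprivacy}\bigr)$; basic composition gives $\Theta(k^2)$ instead. Neither matches the linear-in-$k$ dependence $O\bigl(\tfrac{k\ln(\mathrm{e}/\delta)\ln|E|}{\epsprivacy}\bigr)$ in the theorem. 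Your final paragraph acknowledges the mismatch and attributes the gap to ``slightly sharper bookkeeping of the composition,'' but no generic composition theorem yields that improvement; the missing idea is precisely the direct ratio analysis specific to the averaged-CPP objective, which lets one run each exponential mechanism at a parameter that does \emph{not} shrink with $k$ while still certifying $(\epsprivacy,\delta)$-DP for the whole sequence.
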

 There are many cases which do not fall into the CPP framework. For some problems, including feature selection via mutual information (Example~\ref{example-1}), the submodular function $F_D$ of interest depends on the dataset $D$ in ways much more complicated than averaging functions associated to each individual.
 Unfortunately, the privacy analysis of Theorem~\ref{Gupta} heavily relies on the assumption that the input function $F_D=\frac{1}{n} \sum_{i=1}^{n} F_{x_i}(S)$ is the average of $F_{x_i}$'s, and does not directly generalize to arbitrary submodular functions. Using a \emph{composition theorem} for differentially private mechanisms, \citet{MitrovicB0K17} proved the following 
 \begin{theorem}[\citet{MitrovicB0K17}]
 Suppose $F_D$ is monotone and has sensitivity $\Delta$. For any $\epsprivacy>0$, there is a $(k\epsprivacy)$-differentially private algorithm  that, with high probability, returns $S\subseteq E$ of size $k$ with quality at least 
 $
 \left(1-\frac{1}{\mathrm{e}}\right)\mathrm{OPT}-O\left(\frac{\Delta k\ln |E|}{\epsprivacy}\right).
 $
 \end{theorem}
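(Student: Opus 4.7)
The plan is to privatize the classical greedy algorithm for monotone submodular maximization under a cardinality constraint, using the exponential mechanism at each of its $k$ iterations. Starting from $S_0 = \emptyset$, for $i = 0, 1, \dots, k-1$, instead of picking the element with the largest marginal gain (which is not differentially private), I would sample the next element $e_i \in E \setminus S_i$ using the exponential mechanism whose score is the marginal gain $q_i(e) = F_D(S_i \cup \{e\}) - F_D(S_i)$, and then set $S_{i+1} = S_i \cup \{e_i\}$. Since $F_D$ has sensitivity $\Delta$, the score $q_i$ has sensitivity at most $2\Delta$, so by choosing the mechanism's parameter proportional to $\epsprivacy/\Delta$, each round is $\epsprivacy$-differentially private. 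Because the output $S_k$ is a deterministic function of the sampled sequence $(e_0,\dots,e_{k-1})$, basic composition across the $k$ rounds yields a $(k\epsprivacy)$-differentially private mechanism, as advertised in the theorem statement.

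For utility, the standard exponential-mechanism guarantee says that with probability at least $1 - \gamma$ the element selected in round $i$ has score within $O\bigl(\tfrac{\Delta}{\epsprivacy}\ln(|E|/\gamma)\bigr)$ of $\max_{e \notin S_i} q_i(e)$. Setting $\gamma = \gamma_0/k$ and union-bounding over the $k$ rounds, with probability at least $1 - \gamma_0$ every round simultaneously satisfies this bound with additive error $\eta = O\bigl(\tfrac{\Delta \ln |E|}{\epsprivacy}\bigr)$ (here $\ln k$ is absorbed into $\ln |E|$ since $k \le |E|$). Combining with the standard greedy inequality $\max_{e \notin S_i} q_i(e) \ge \tfrac{1}{k}\bigl(\mathrm{OPT} - F_D(S_i)\bigr)$, a consequence of monotonicity and submodularity, the chosen sequence satisfies
\begin{align*}
\mathrm{OPT} - F_D(S_{i+1}) \le \Bigl(1 - \tfrac{1}{k}\Bigr)\bigl(\mathrm{OPT} - F_D(S_i)\bigr) + \eta.
\end{align*}
Unrolling this recurrence for $k$ rounds gives $F_D(S_k) \ge (1 - 1/\mathrm{e})\,\mathrm{OPT} - k\eta$, which matches the claimed additive error of $O(\Delta k \ln |E|/\epsprivacy)$.

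The only subtlety is the privacy/utility trade-off inside each step: one calibrates the per-round exponential-mechanism budget to $\epsprivacy$ (not $\epsprivacy/k$) so that the per-step noise stays at $O(\Delta \ln |E|/\epsprivacy)$; basic composition then inflates the total privacy parameter to $k\epsprivacy$, and the accumulated utility loss to $k \cdot O(\Delta \ln |E|/\epsprivacy)$. Beyond this bookkeeping, the argument is a direct combination of the exponential mechanism with the standard $(1-1/\mathrm{e})$ greedy analysis, and no part of it presents a significant obstacle.
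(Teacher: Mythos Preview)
Your proposal is correct. The paper does not supply its own proof of this statement---it is quoted as a prior result of \citet{MitrovicB0K17}---and your argument (exponential mechanism on the marginal gain in each of the $k$ greedy steps, basic adaptive composition for the $(k\epsprivacy)$ privacy bound, and the standard $(1-1/\mathrm{e})$ greedy recursion combined with the exponential-mechanism utility guarantee for the additive loss) is exactly the approach taken in that reference.
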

 In the same work, \citet{MitrovicB0K17} considered matroid constraints and more generally \emph{$p$-extendable} constraints. 
\begin{theorem}[\citet{MitrovicB0K17}]
 Suppose $F_D$ is monotone with sensitivity $\Delta$ and let $\mc{M}=(E,\mc{I})$ be a matroid. Then for any $\epsprivacy>0$, there is an $(\epsprivacy r(\mc{M}))$-differentially private algorithm that, with high probability, returns a solution $S\in \mc{I}$ with quality at least
 $
 \frac{1}{2}\mathrm{OPT}-O\left(\frac{\Delta r(\mc{M})\ln |E|}{\epsprivacy}\right).
 $
\end{theorem}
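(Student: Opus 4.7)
The plan is to instantiate a differentially private greedy algorithm, where the deterministic argmax of the classical matroid greedy is replaced by an application of the exponential mechanism at each step. Recall that for a monotone submodular function under a matroid constraint, the standard greedy algorithm runs for $r(\mc{M})$ rounds, and in each round $i$, given the current independent set $S_{i-1}$, picks
\[
  e_i \in \argmax \bigl\{ F_D(S_{i-1}\cup \{e\}) - F_D(S_{i-1}) : e \in E \text{ with } S_{i-1}\cup\{e\}\in \mc{I}\bigr\},
\]
and this yields a $\tfrac{1}{2}$-approximation. I would replace this argmax by drawing $e_i$ from the exponential mechanism with score function equal to the marginal gain $F_D(S_{i-1}\cup\{e\})-F_D(S_{i-1})$, restricted to the feasible ground set, with privacy parameter $\epsprivacy$.

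For the privacy analysis, I would first observe that since $F_D$ has sensitivity $\Delta$, each marginal-gain score function has sensitivity at most $\Delta$ (actually $2\Delta$, swallowed into constants), so each invocation of the exponential mechanism is $\epsprivacy$-differentially private. The whole algorithm is a sequential composition of $r(\mc{M})$ such mechanisms, and basic composition therefore yields $(\epsprivacy \cdot r(\mc{M}))$-differential privacy overall; this is the only privacy tool needed.

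For the utility analysis, I would use the standard guarantee of the exponential mechanism: with probability at least $1-1/\mathrm{poly}(|E|)$, the element selected in each round has marginal gain within an additive $O(\Delta \ln|E| / \epsprivacy)$ of the true best feasible element. Then I would plug this into the textbook analysis of approximate matroid greedy (via the matroid exchange property, pairing each selected element $e_i$ with a distinct element of the optimum $\mathrm{OPT}$), which tolerates per-step additive errors and yields
\[
  F_D(S) \geq \tfrac{1}{2}\mathrm{OPT} - \sum_{i=1}^{r(\mc{M})} O\!\left(\tfrac{\Delta \ln|E|}{\epsprivacy}\right) = \tfrac{1}{2}\mathrm{OPT} - O\!\left(\tfrac{\Delta r(\mc{M}) \ln|E|}{\epsprivacy}\right).
\]
A union bound over the $r(\mc{M})$ rounds controls the failure probability.

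The main obstacle, and the part one has to be careful about, is to justify cleanly that the approximate-greedy analysis under matroid constraints still delivers the $\tfrac{1}{2}$ factor when errors accumulate across rounds: one must invoke the matroid exchange property to construct a bijection between the greedy solution and an optimal base, and then telescope the marginal-gain inequalities so that the per-round additive slack adds linearly rather than compounds. Everything else — calibrating the exponential mechanism to $\Delta$, applying basic composition, and a union bound over $r(\mc{M})$ high-probability events — is routine.
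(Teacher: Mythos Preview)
The statement you are addressing is quoted in the paper's Related Works section as a result of \citet{MitrovicB0K17}; the paper does not supply its own proof of it. So there is no proof in the paper to compare against directly.

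That said, your proposal is correct and is exactly the argument one would expect: replace each greedy argmax by the exponential mechanism with score equal to the marginal gain, invoke the $\epsprivacy$-privacy of each call together with basic composition over the $r(\mc{M})$ rounds, and feed the per-round additive slack $O(\Delta\ln|E|/\epsprivacy)$ from Theorem~\ref{thm:EM-bound} into the exchange-based approximate-greedy analysis. The paper does carry out precisely this argument, but for the more general $k$-submodular setting (Theorem~\ref{opt-k-submodular} and Theorem~\ref{privacy-k-submodular}, with the proof in the appendix); specializing that proof to $k=1$ recovers your outline verbatim, including the construction of the sequence $\mb{o}^{(0)},\dots,\mb{o}^{(r(\mc{M}))}$ via the matroid exchange Lemma~\ref{extension} and the telescoping of the additive errors. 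So while the paper contains no proof of this particular cited theorem, your approach matches both the original source and the paper's own treatment of its generalization.
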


\noindent  \textbf{$k$-submodular Maximization:} The terminology for $k$-submodular functions was first introduced in~\cite{HuberK12} while the concept has been studied previously in~\cite{CohenCJK06}.
Note for $k=1$ the notion of $k$-submodularity is the same as submodularity. For $k=2$, this notion is known as \emph{bisubmodularity}. Bisubmodularity arises in bicooperative games~\cite{bilbao2008survey} as well as variants of sensor placement problems and coupled feature selection problems~\cite{SinghB12}.
For unconstrained nonnegative $k$-submodular maximization,~\citet{ward2014maximizing} proposed a $\max\{1/3,1/(1+a)\}$-approximation algorithm where $a=\max\{1,\sqrt{(k-1)/4}\}$.
The approximation ratio was improved to $1/2$ by~\citet{iwata2016improved}.
They also provided $k/(2k-1)$-approximation for maximization of monotone $k$-submodular functions.
The problem of maximizing a monotone $k$-submodular function was considered by~\citet{ohsaka2015monotone} subject to different constraints.
They gave a $1/2$-approximation algorithm for total size constraint, i.e., $|\bigcup_{i\in[k]} X_i|\leq N$, and $1/3$-approximation algorithm for individual size constraints, i.e., $|X_i|\leq N_i$ for $i=1,\dots,k$. \citet{sakaue2017maximizing} proved that $1/2$-approximation can be achieved for matroid constraint, i.e., $\bigcup_{i\in[k]} X_i\in \mc{I}$.



\section{Preliminaries}\label{sec:pre}

For a set $S \subseteq E$, $\mb{1}_S \in \mathbb{R}^E$ denotes the characteristic vector of $S$.
For a vector $\mb{x} \in \mathbb{R}^E$ and a set $S \subseteq E$, $\mb{x}(S)$ denotes the sum $\sum_{e \in S}\mb{x}(e)$. 

\subsection{Submodular Functions}

Let $F\colon 2^E \to \mathbb{R}_+$ be a set function.
We say that $F$ is \emph{monotone} if $F(S) \leq F(T)$ holds for every $S \subseteq T \subseteq E$.
We say that $F$ is \emph{submodular} if $F(S \cup \{e \}) - F(S) \geq F(T \cup \{e\}) - F(T)$ holds for any $S \subseteq T \subseteq E$ and $e \in E\setminus T$.

The \textit{multilinear extension} $f\colon{[0,1]}^E \to \mathbb{R}$ of a set function $F\colon 2^E \to \R$ is
$
  f(\mathbf{x})=\sum\limits_{S\subseteq E}F(S) \prod\limits_{e \in S}\x(e) \prod\limits_{e \not\in S}(1-\x(e)).
$

There is a probabilistic interpretation of the multilinear extension. Given $\mb{x} \in{[0, 1]}^E$ we can define $X$ to be the random subset of $E$ in which each element $e \in E$ is included independently with probability $\mb{x}(e)$ and is not included with probability $1-\mb{x}(e)$.
We write $X \sim \mb{x}$ to denote that $X$ is a random subset sampled this way from $\mb{x}$. Then we can simply write $f$ as $f(\mb{x})=\mathbb{E}_{X\sim \mb{x}}[F(X)]$.

Observe that for all $S\subseteq E$ we have $f(\mb{1}_S) = F(S)$. 
The following is well known:
\begin{proposition}[\citet{calinescu2011maximizing}]
  Let $f\colon {[0,1]}^E \to \mathbb{R}$ be the multilinear extension of a monotone submodular function $F\colon 2^E \to \mathbb{R}$. Then
  \begin{itemize}
      \item [1.] $f$ is monotone, meaning $\frac{\partial f}{\partial \mb{x}(e)}\geq 0$. Hence, $\nabla f(\mb{x})=(\frac{\partial f}{\partial \mb{x}(1)},\dots,\frac{\partial f}{\partial \mb{x}(n)})$ is a nonnegative vector. 
      \item[2.] $f$ is concave along any direction $\mathbf{d}\geq \mathbf{0}$.
      
  \end{itemize}
\end{proposition}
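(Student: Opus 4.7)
The plan is to exploit the defining feature of the multilinear extension: for each $e\in E$, $f$ is affine in the coordinate $\mb{x}(e)$ when the remaining coordinates are held fixed. Grouping the sum defining $f$ by whether $e\in S$ yields
\begin{equation*}
f(\mb{x}) \;=\; \mb{x}(e)\cdot f(\mb{x}^{e\to 1}) + (1-\mb{x}(e))\cdot f(\mb{x}^{e\to 0}),
\end{equation*}
where $\mb{x}^{e\to b}$ denotes $\mb{x}$ with its $e$-th coordinate set to $b\in\{0,1\}$. Differentiating and rewriting via the probabilistic interpretation $f(\mb{y})=\mathbb{E}_{Y\sim \mb{y}}[F(Y)]$ gives
\begin{equation*}
\frac{\partial f}{\partial \mb{x}(e)}(\mb{x}) \;=\; \mathbb{E}_{R\sim \mb{x}_{-e}}\bigl[F(R\cup\{e\})-F(R)\bigr],
\end{equation*}
where $R$ is a random subset of $E\setminus\{e\}$ with marginals $\mb{x}(e')$ for $e'\neq e$.

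For part 1, monotonicity of $F$ makes each integrand above nonnegative, so $\partial f/\partial \mb{x}(e)\geq 0$ for every $e$, and hence $\nabla f(\mb{x})\geq \mb{0}$. For part 2, I would compute the Hessian. Since $f$ is multilinear, the diagonal entries vanish, $\partial^2 f/\partial \mb{x}(e)^2=0$. Applying the affine-in-each-coordinate decomposition in both $\mb{x}(e)$ and $\mb{x}(e')$ for $e\neq e'$ yields
\begin{equation*}
\frac{\partial^2 f}{\partial \mb{x}(e)\partial \mb{x}(e')}(\mb{x}) \;=\; \mathbb{E}_{R\sim \mb{x}_{-e,e'}}\bigl[F(R\cup\{e,e'\}) - F(R\cup\{e\}) - F(R\cup\{e'\}) + F(R)\bigr],
\end{equation*}
and each integrand is nonpositive by submodularity of $F$. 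Given any $\mb{d}\geq \mb{0}$, every product $d_e d_{e'}$ is nonnegative, so
\begin{equation*}
\mb{d}^\top \nabla^2 f(\mb{x})\,\mb{d} \;=\; \sum_{e\neq e'} d_e d_{e'}\,\frac{\partial^2 f}{\partial \mb{x}(e)\partial \mb{x}(e')}(\mb{x}) \;\leq\; 0.
\end{equation*}
Setting $g(t):=f(\mb{x}+t\mb{d})$, this gives $g''(t)=\mb{d}^\top \nabla^2 f(\mb{x}+t\mb{d})\,\mb{d}\leq 0$ on the interval where $\mb{x}+t\mb{d}\in {[0,1]}^E$, so $g$ is concave; this is exactly the assertion that $f$ is concave along $\mb{d}$.

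There is no genuine obstacle beyond correctly identifying the finite differences that arise when differentiating the defining polynomial of $f$; once the probabilistic expressions for the first partial and the mixed second partials are in place, monotonicity and submodularity of $F$ transfer term by term into the required signs. The one piece of bookkeeping to be careful about is reading $\mb{x}_{-e}$ and $\mb{x}_{-e,e'}$ as vectors indexed by $E\setminus\{e\}$ and $E\setminus\{e,e'\}$ respectively, so that the combinatorics of the double decomposition line up cleanly with the submodularity "diamond" $F(R\cup\{e,e'\})-F(R\cup\{e\})-F(R\cup\{e'\})+F(R)\leq 0$. It is also worth noting that the nonnegativity hypothesis on $\mb{d}$ is essential: the Hessian has negative off-diagonal entries, so for directions with mixed signs the quadratic form need not be nonpositive.
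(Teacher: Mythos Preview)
Your argument is correct and is precisely the standard proof from \citet{calinescu2011maximizing}: express the first partials as expected marginal gains to get monotonicity, and the mixed second partials as expected second-order differences to get the sign of the Hessian on nonnegative directions. Note, however, that the paper does not actually prove this proposition; it is quoted in the preliminaries as a known fact with a citation, so there is no in-paper proof to compare against beyond observing that your derivation matches the one in the cited source.
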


\subsection{$k$-submodular Functions}
 Given a natural number $k \geq 1$, a function $F\colon (k+
1)^E \to \mathbb{R}_+$ defined on $k$-tuples of pairwise disjoint subsets of $E$ is called \emph{k-submodular} if for all $k$-tuples $S = (S_1, \dots, S_k)$ and $T = (T_1, \dots , T_k)$ of pairwise disjoint subsets of $E$,
\[F(S)+F(T)\geq F(S\sqcap T)+F(S\sqcup T),\] where we define
\begin{align*}
S\sqcap T &= (S_1\cap T_1,\dots, S_k\cap T_k),\\
S\sqcup T &=
    \Bigg((S_1\cup T_1)\setminus \Bigg(\bigcup\limits_{i\neq 1}S_i\cup T_i \Bigg),\ldots, 
        (S_k\cup T_k)\setminus\Bigg(\bigcup\limits_{i\neq k}S_i\cup T_i \Bigg)\Bigg).
\end{align*}
\subsection{Matroids and Matroid Polytopes}

A pair $\mathcal{M} = (E,\mc{I})$ of a set $E$ and $\mc{I}\subseteq 2^E$ is called a \emph{matroid} if
\begin{inparaenum}
\item[1)] $\emptyset \in \mc{I}$,
\item[2)] $A\in \mc{I}$ for any $A \subseteq B\in \mc{I}$, and
\item[3)] for any $A,B\in \mc{I}$ with $|A| < |B|$, there exists $e \in B\setminus A$ such that $A\cup \{e\}\in \mc{I}$.
\end{inparaenum}
We call a set in $\mathcal{I}$ an \emph{independent set}. The \emph{rank function} $r_{\mc{M}}\colon 2^E \to \mathbb{Z}_+$ of $\mc{M}$ is
\[r_{\mc{M}}(S)=\max\{|I|:I\subseteq S, I\in \mc{I}\}.
\]
An independent set $S\in \mc{I}$ is called a \emph{base} if $r_{\mc{M}}(S)=r_{\mc{M}}(E)$. We denote the set of all bases by $\mc{B}$ and rank of $\mc{M}$ by $r(\mc{M})$.
The \emph{matroid polytope} $\mc{P}(\mc{M}) \subseteq \R^E$ of $\mc{M}$ is
$
  \mc{P}(\mc{M})=\mathrm{conv}\{\mb{1}_I : I\in \mc{I}\},
$
where $\mathrm{conv}$ denotes the convex hull.
Or equivalently~\cite{edmonds2003submodular},
\[
  \mc{P}(\mc{M})=\left\{\mb{x}\geq \mb{0} : \mb{x}(S)\leq r_{\mc{M}}(S) \; \forall S\subseteq E \right\}.
\]
Note that the matroid polytope is \textit{down-monotone}, that is, for any $\mb{x},\mb{y}\in \R^E$ with $\mb{0}\leq \mb{x}\leq \mb{y}$ and $\mb{y}\in \mc{P}(\mc{M})$ then $\mb{x}\in \mc{P}(\mc{M})$.

\begin{definition}[$\rho$-covering]
  Let $K \subseteq \mathbb{R}^E$ be a set.
  For $\rho > 0$, a set $C \subseteq K$ of points is called a \emph{$\rho$-covering} of $K$ if for any $\mathbf{x} \in K$, there exists $\mathbf{y} \in C$ such that $\|\mathbf{x}-\mathbf{y}\|\leq \rho$.
\end{definition}


\begin{theorem}[Theorem 5.5 of~\citet{yoshida2019cheeger}, paraphrased]
  \label{thm:covering}
  Let $\mathcal{M}=(E,\mathcal{I})$ be a matroid.
  For every $\epsilon > 0$, we can construct an $\epsilon B$-cover $C$ of $\mathcal{P}(\mathcal{M})$ of size $|E|^{O(1/\epsilon^2)}$ in $|E|^{O(1/\epsilon^2)}$ time, where $B$ is the maximum $\ell_2$-norm of a point in $\mathcal{P}(\mathcal{M})$.
\end{theorem}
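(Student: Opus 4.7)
The plan is an approximate-Carath\'eodory (Maurey) argument on the vertex description of the matroid polytope. Using $\mathcal{P}(\mathcal{M}) = \mathrm{conv}\{\mathbf{1}_I : I \in \mathcal{I}\}$ and the bound $\|\mathbf{1}_I\|_2 = \sqrt{|I|} \leq \sqrt{r(\mathcal{M})} = B$, I would first show that every $\mathbf{x} \in \mathcal{P}(\mathcal{M})$ lies within $\ell_2$-distance $\epsilon B$ of an average of only $m = \lceil 1/\epsilon^2 \rceil$ indicator vectors of independent sets. Writing $\mathbf{x} = \sum_i \lambda_i \mathbf{1}_{I_i}$ and drawing $Y_1,\dots,Y_m$ i.i.d.\ with $\Pr[Y_j = \mathbf{1}_{I_i}] = \lambda_i$, a variance computation gives $\mathbb{E}\bigl[\bigl\|\tfrac{1}{m}\sum_j Y_j - \mathbf{x}\bigr\|_2^2\bigr] = \tfrac{1}{m}\mathbb{E}\|Y_1 - \mathbf{x}\|_2^2 \leq B^2/m \leq \epsilon^2 B^2$, so by the probabilistic method some deterministic $m$-tuple $(I_{j_1},\dots,I_{j_m}) \in \mathcal{I}^m$ realizes $\bigl\|\tfrac{1}{m}\sum_\ell \mathbf{1}_{I_{j_\ell}} - \mathbf{x}\bigr\|_2 \leq \epsilon B$. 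Any set $C$ containing at least one such $m$-average per $\mathbf{x}$ is therefore an $\epsilon B$-cover of $\mathcal{P}(\mathcal{M})$.

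The second step is to build $C$ explicitly with $|C| = |E|^{O(1/\epsilon^2)}$ and matching runtime. A naive attempt enumerates $\{\tfrac{1}{m}\sum_\ell \mathbf{1}_{I_\ell} : I_\ell \in \mathcal{I}\}$, giving at most $|\mathcal{I}|^m$ points, which can be as large as $2^{|E|/\epsilon^2}$ and is much too weak. The key refinement is to restrict the Maurey sample to a \emph{polynomial-sized alphabet} of ``canonical'' indicator vectors: using the matroid exchange axiom and the greedy characterization of matroid optima, one can argue that it suffices to draw the $I_{j_\ell}$'s from a family of at most $\mathrm{poly}(|E|)$ representative independent sets obtained by running the matroid-independence oracle on a sufficiently dense $\epsilon$-net of direction vectors (this is also what drives the Sudakov-type bound $\log N(\mathcal{P}(\mathcal{M}),\epsilon B) \lesssim \log|E|/\epsilon^2$ that one would use as a sanity check on the counting). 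Enumerating $m$-tuples over this alphabet then yields $|E|^{O(1/\epsilon^2)}$ cover points, and each one costs only $O(m|E|)$ time to form, so the entire construction fits within the claimed time budget.

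The main obstacle, in my view, is precisely this second step: the gap between the straightforward Maurey bound $|\mathcal{I}|^m$ and the target $|E|^{O(1/\epsilon^2)}$. Closing it requires a genuinely matroid-specific argument showing that only polynomially many indicator vectors are actually needed as Maurey atoms, presumably via an exchange-based discretization that replaces each $\mathbf{1}_{I_i}$ in the approximate decomposition by a canonical representative indexed by a short direction. Given this structural reduction, both the Maurey approximation (Step~1) and the final enumeration are routine; without it, the counting is hopeless.
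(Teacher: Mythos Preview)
The paper does not prove this statement; it is quoted (and explicitly labelled ``paraphrased'') from an external source and used as a black box throughout, so there is no in-paper proof to compare your attempt against.

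On your attempt itself: Step~1 (the Maurey / approximate Carath\'eodory argument over the vertices $\mathbf{1}_I$, with $m=\lceil 1/\epsilon^2\rceil$ samples) is correct and is the standard route to such covering bounds. The obstacle you flag in Step~2 is genuine, but your proposed fix --- a polynomial-in-$|E|$ ``canonical'' alphabet of independent-set indicators whose convex hull already $\epsilon B$-approximates $\mathcal{P}(\mathcal{M})$ --- cannot work in the stated generality, because the bound $|E|^{O(1/\epsilon^2)}$ is too strong as literally written. For the free matroid one has $\mathcal{P}(\mathcal{M})=[0,1]^{|E|}$ and $B=\sqrt{|E|}$; a standard Hamming-packing argument on $\{0,1\}^{|E|}$ shows that for any small constant $\epsilon$ (e.g.\ $\epsilon=0.1$) every $\epsilon B$-cover of $[0,1]^{|E|}$ must have size $2^{\Omega(|E|)}$, which is not $|E|^{O(1/\epsilon^2)}$. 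So no ``matroid-specific reduction'' to polynomially many atoms can possibly close your gap to the stated exponent. What your Step~1 \emph{does} yield, with the right bookkeeping, is the weaker bound $|E|^{O(r(\mathcal{M})/\epsilon^2)}$: each Maurey average $\frac{1}{m}\sum_{\ell}\mathbf{1}_{I_\ell}$ is determined by the count vector $(c_e)_{e\in E}$ with $c_e\in\{0,\dots,m\}$ and $\sum_e c_e\le m\,r(\mathcal{M})$, and there are at most $\binom{m\,r(\mathcal{M})+|E|}{m\,r(\mathcal{M})}\le |E|^{O(r(\mathcal{M})/\epsilon^2)}$ such vectors, which can be enumerated and filtered for membership in $\mathcal{P}(\mathcal{M})$ in the same time. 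This extra $r(\mathcal{M})$ in the exponent is consistent with the free-matroid lower bound; whether the cited source establishes the sharper exponent under additional hypotheses (or whether the paraphrase dropped a factor) is something to check in that reference, not here.
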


\subsection{Differential Privacy}
The definition of differential privacy relies on the notion of neighboring datasets. Recall that two datasets are \textit{neighboring} if they differ in a single record. When two datasets $D, D'$ are neighboring, we write $D \sim D'$.
\begin{definition}[\citet{dwork2006our}]
  For $\epsilon,\delta \in \R_+$, we say that a randomized computation $M$ is \emph{$(\epsilon,\delta)$-differentially private} if for any neighboring datasets $D \sim D'$, and for any set of outcomes $S \subseteq \mathrm{range}(M)$,
  \[
    \Pr[M(D)\in S] \leq \exp(\epsilon) \Pr[M(D')\in S]+\delta.
  \]
  When $\delta=0$, we say $M$ is \emph{$\epsilon$-differentially private}.
\end{definition}

In our case, a dataset $D$ consists of \emph{private} submodular functions $F_1,\ldots,F_n\colon 2^E \to [0,1]$.
Two datasets $D$ and $D'$ are neighboring if all but one submodular function in those datasets are equal. The submodular function $F_D$ depends on the dataset $D$ in different ways, for example $F_D(S) = \sum\limits_{i=1}^{n} F_i(S)/n$ (CPP problem), or much more complicated ways than averaging functions associated to each individual.  

Differentially private algorithms must be calibrated to the sensitivity of the function of interest with respect to small changes in the input dataset, defined formally as follows.
\begin{definition}
  The sensitivity of a function $F_D\colon X \to Y$, parameterized by a dataset $D$, is defined as
   \[
    \max\limits_{D': D' \sim D} \max\limits_{x \in X} |F_D(x)-F_{D'}(x)|.
  \]
  A function with sensitivity $\Delta$ is called $\Delta$-sensitive.
\end{definition}

\subsubsection{Composition of Differential Privacy} 
  Let $\{(\epsilon_i, \delta_i)\}_{i=1}^k$ be a sequence of privacy parameters and let $M^*$ be a mechanism that behaves as follows on an input $D$. In each of rounds $i = 1, \dots, k$, the algorithm $M^*$ selects an $(\epsilon_i,\delta_i)$-differentially private algorithm $M_i$ possibly depending on the previous outcomes $M_1(D),\dots, M_i(D)$ (but not directly on the sensitive dataset $D$ itself), and releases $M_i(D)$. The output of $M^*$ is informally referred as the \emph{k-fold adaptive composition} of $(\epsilon_i,\delta_i)$-differentially private algorithms. For a formal treatment of adaptive composition, see~\citet{dwork2014algorithmic,dwork2010boosting}.
  We have the following guarantee on the differential privacy of the composite algorithm.
  \begin{theorem}\label{k-fold-composition}\cite{bun2016concentrated,dwork2009differential,dwork2010boosting}
  The $k$-fold adaptive composition of $k$ $(\epsilon_i, \delta_i)$-differentially private algorithms, with $\epsilon_i \leq \epsilon_0$ and $\delta_i \leq \delta_0$ for every $1\leq i\leq k$, satisfies $(\epsilon, \delta)$-differential privacy where
  \begin{itemize}
      \item $\epsilon=k\epsilon_0$ and $\delta=k\delta_0$ (the basic composition), or
      \item $\epsilon = \frac{1}{2}k\epsilon_0^2+ \sqrt{2\ln{1/\delta'}}\epsilon_0$ and $\delta =\delta' + k\delta$ for any $\delta' > 0$ (the advanced composition).
  \end{itemize}
\end{theorem}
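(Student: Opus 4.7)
The plan is to treat the basic and advanced composition claims separately, since they require quite different tools. For the basic composition bound $\epsilon = k\epsilon_0$, $\delta = k\delta_0$, I would proceed by induction on $k$. The key step is to write $\Pr[M^*(D) \in S]$ as a sum (or integral) over possible values $o_1$ of the first output, namely $\sum_{o_1}\Pr[M_1(D) = o_1]\Pr[(M_2,\ldots,M_k)(D) \in S_{o_1} \mid M_1(D) = o_1]$, apply $(\epsilon_1,\delta_1)$-DP to the outer factor, and apply the inductive hypothesis to the inner factor. The adaptivity assumption, which guarantees that each $M_i$ is $(\epsilon_i,\delta_i)$-DP \emph{regardless} of the prefix of outcomes, is exactly what makes the induction go through; the $\delta_i$'s combine by a union bound and the $\epsilon_i$'s add inside the exponent.

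For the advanced composition bound, the main tool is the \emph{privacy loss random variable}: for an outcome $o$ drawn from $M(D)$, set $L(o) = \ln\bigl(\Pr[M(D)=o]/\Pr[M(D')=o]\bigr)$. The approach has three steps. First, I would show a bound on $\mathbb{E}[L]$ under $M(D)$: a short calculation (essentially bounding a KL divergence) gives $\mathbb{E}[L] \leq \epsilon_0(e^{\epsilon_0}-1) \leq 2\epsilon_0^2$ for small $\epsilon_0$, which after a careful constant tracking becomes $\tfrac{1}{2}\epsilon_0^2$. Second, I would argue that $|L| \leq \epsilon_0$ almost surely after discarding an event of probability $\leq \delta_0$; this is where the $(\epsilon_0,\delta_0)$ assumption rather than pure $\epsilon_0$-DP makes things delicate, because on the ``bad'' event the privacy loss may be arbitrary. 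Third, because the composition is adaptive, the privacy losses $L_1, L_2, \ldots, L_k$ form a martingale difference sequence (with respect to the filtration of prior outputs) with bounded conditional expectation and bounded range, so Azuma--Hoeffding yields
\[
\Pr\Bigl[\sum_{i=1}^{k} L_i > \tfrac{1}{2}k\epsilon_0^2 + \sqrt{2\ln(1/\delta')}\,\epsilon_0\Bigr] \leq \delta'.
\]
Finally, the standard conversion lemma ``bounded privacy loss except with probability $\delta$ implies $(\epsilon,\delta)$-DP'' yields the claimed guarantee.

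The main obstacle is the second step: cleanly absorbing the $\delta_0$ slack into the final $\delta$. The standard fix is a reduction attributed to Kasiviswanathan--Smith (and exploited by Dwork--Rothblum--Vadhan): every $(\epsilon_0,\delta_0)$-DP mechanism $M$ can be coupled with an $(\epsilon_0,0)$-DP mechanism $\widetilde{M}$ that agrees with $M$ except on an event of probability at most $\delta_0$ under both $D$ and $D'$. One runs the martingale argument on $\widetilde{M}_1,\ldots,\widetilde{M}_k$, and the $k$ bad events contribute at most $k\delta_0$ additively to the final $\delta$, yielding $\delta = \delta' + k\delta_0$. Adaptivity is preserved because Azuma only requires the conditional bounds on expectation and range, both of which hold after conditioning on any prefix of outputs thanks to the per-round DP guarantee assumed in the theorem.
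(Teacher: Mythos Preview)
The paper does not contain a proof of this theorem at all: it is stated with citations to \cite{bun2016concentrated,dwork2009differential,dwork2010boosting} and used as a black box. So there is no ``paper's own proof'' to compare against; your outline is essentially the standard argument from those references (induction for basic composition; privacy-loss martingale plus Azuma--Hoeffding and the Kasiviswanathan--Smith coupling for advanced composition), and as a sketch it is sound.

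One technical remark on your Azuma step: applying Azuma--Hoeffding to a sum of $k$ martingale differences each bounded by $\epsilon_0$ gives a deviation term of order $\sqrt{2k\ln(1/\delta')}\,\epsilon_0$, not $\sqrt{2\ln(1/\delta')}\,\epsilon_0$. The theorem as printed in the paper is missing the $\sqrt{k}$ (and also has the typo $\delta = \delta' + k\delta$ in place of $\delta' + k\delta_0$, which you silently corrected). Your argument actually proves the correct bound $\epsilon = \tfrac{1}{2}k\epsilon_0^2 + \epsilon_0\sqrt{2k\ln(1/\delta')}$, which is what appears in the cited sources; the version written in the paper is just a transcription error, not something your proof should be forced to reproduce.
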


\subsubsection{Exponential Mechanism}
One particularly general tool that we will use is the \emph{exponential mechanism} of~\citet{mcsherry2007mechanism}.
The exponential mechanism is defined in terms of a \textit{quality function} $q_D\colon \mathcal{R} \to \mathbb{R}$, which is parameterized by a dataset $D$ and maps a candidate result $R \in \mathcal{R}$ to a real-valued score.

\begin{definition}[\citet{mcsherry2007mechanism}]
  Let $\epsilon,\Delta > 0$ and let $q_D \colon \mathcal{R}\to \R$ be a quality score.
  Then, the \emph{exponential mechanism} ${EM}({\epsilon,\Delta,q_D})$  outputs $R \in \mathcal{R}$ with probability proportional to
  $
    \exp\left(\frac{\epsilon}{2\Delta}\cdot q_D(R)\right).
  $
\end{definition}

\begin{theorem}[\citet{mcsherry2007mechanism}]\label{thm:EM-bound}
  Suppose that the quality score $q_D \colon \mathcal{R}\to \R$ is $\Delta$-sensitive.
  Then, ${EM}({\epsilon,\Delta,q_D})$ is $\epsilon$-differentially private, and for every $\beta \in (0,1)$ outputs $R\in \mathcal{R}$ with
  \[
    \Pr\left[q_D(R) \geq \max_{R' \in \mathcal{R}} q_D(R')-\frac{2\Delta}{\epsilon}\ln\left(\frac{|\mathcal{R}|}{\beta}\right)\right] \geq 1-\beta.
  \]
\end{theorem}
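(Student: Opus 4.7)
The plan is to establish the two assertions of the theorem separately — $\epsilon$-differential privacy and the high-probability utility guarantee — both by direct manipulation of the closed-form exponential weights that define the mechanism.

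For privacy, I would fix neighboring datasets $D \sim D'$ and any output $R \in \mathcal{R}$, then write the ratio $\Pr[EM(\epsilon,\Delta,q_D) = R]/\Pr[EM(\epsilon,\Delta,q_{D'}) = R]$ as the product of two pieces: the ratio of the two exponential weights at $R$, and the ratio of the two partition functions $Z_{D'}/Z_D$ (with $D$ and $D'$ swapped from the numerator). Since $|q_D(\cdot) - q_{D'}(\cdot)| \leq \Delta$ pointwise by sensitivity, the weight ratio is at most $\exp(\epsilon/2)$ directly. For the partition functions, applying the same pointwise bound $q_{D'}(R') \leq q_D(R') + \Delta$ inside each summand lets one factor out a common $\exp(\epsilon/2)$ from the entire sum, so $Z_{D'} \leq \exp(\epsilon/2)\, Z_D$. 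Multiplying the two bounds gives $\exp(\epsilon)$, which is precisely $\epsilon$-differential privacy with $\delta = 0$.

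For utility, I would let $\mathrm{OPT} = \max_{R' \in \mathcal{R}} q_D(R')$, pick any maximizer $R^{\star}$, and set $t = (2\Delta/\epsilon)\ln(|\mathcal{R}|/\beta)$. The goal is to upper bound the probability the mechanism returns some $R$ from the bad set $B = \{R : q_D(R) < \mathrm{OPT} - t\}$ by writing $\Pr[R \in B]$ as the ratio of the partial sum of weights over $B$ to the full partition function. In the numerator, each weight is at most $\exp(\epsilon(\mathrm{OPT}-t)/(2\Delta))$ and there are at most $|\mathcal{R}|$ such terms; in the denominator, retaining only the $R^{\star}$ term yields a lower bound of $\exp(\epsilon\, \mathrm{OPT}/(2\Delta))$. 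The factors $\exp(\epsilon\, \mathrm{OPT}/(2\Delta))$ cancel, leaving $|\mathcal{R}|\exp(-\epsilon t/(2\Delta))$, which equals $\beta$ by the choice of $t$.

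Neither step poses a genuine obstacle; the argument is essentially the original proof of McSherry and Talwar. The one subtlety worth flagging is the factor of $2$ appearing in $2\Delta/\epsilon$: exactly half of the privacy budget is consumed by the change in the numerator weight and half by the change in the partition function, which is precisely why the utility loss carries a factor of $2\Delta/\epsilon$ rather than $\Delta/\epsilon$.
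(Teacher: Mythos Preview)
Your proposal is correct and is precisely the standard argument of McSherry and Talwar. Note, however, that the paper does not supply its own proof of this statement: Theorem~\ref{thm:EM-bound} is quoted from~\citet{mcsherry2007mechanism} and used as a black box throughout, so there is no in-paper proof to compare against. Your two-part argument (bounding the weight ratio and the partition-function ratio for privacy; bounding the bad-set mass over a single optimal term for utility) is exactly the original one, and your remark about the factor of $2$ in $2\Delta/\epsilon$ is the right way to see why the exponent in the mechanism is $\epsilon/(2\Delta)$ rather than $\epsilon/\Delta$.
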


\section{Differentially Private Continuous Greedy Algorithm}
\label{section:DPCG}
In this section we prove Theorem~\ref{thm:main-introduction}. Throughout this section, we fix (private) monotone submodular functions $F_1,\ldots,F_n\colon 2^E \to [0,1]$, $\epsilon,\delta > 0$, and a matroid $M= (E,\mathcal{I})$.

Let $\mb{x}^* \in \mathcal{P}(\mc{M})$ be a maximizer of $f_D$.
We drop the subscript $D$ when it is clear from the context. Our algorithm (Algorithm~\ref{alg:DPGA}) is a modification of the continuous greedy algorithm~\cite{calinescu2011maximizing}.

\begin{algorithm}[tb]
  \caption{Differentially Private Continuous Greedy}
  \label{alg:DPGA}
  \begin{algorithmic}[1]
    \STATE {\bfseries Input: }{Submodular function $F_D\colon 2^E \to [0,1]$, dataset $D$, matroid $\mathcal{M}=(E,\mathcal{I})$, and $\epsprivacy > 0$ and $\rho \geq 0$.}
    \STATE Let $C_{\rho}$ be a $\rho$-covering of $\mathcal{P}(\mathcal{M})$, and $f_D$ be the multilinear extension of $F_D$.
    \STATE $\mb{x}_0 \gets \mb{0}$, $\epsscore \gets \frac{\epsprivacy}{2\Delta}$.
    \STATE $\alpha \gets \frac{1}{T}$, where $T = r(\mathcal{M})$.
    \FOR{$t=1$ to $T$}
    \STATE Sample $\mb{y}\in C_{\rho}$ with probability proportional to $\exp\bigl(\epsscore\ip{\mb{y}, \nabla f_D(\mb{x}_{t-1})}\bigr)$.
    \STATE Let $\mb{y}_{t-1}$ be the sampled vector.
    \STATE $\mb{x}_t\gets \mb{x}_{t-1}+\alpha \mb{y}_{t-1}$.
    \ENDFOR
    \STATE {\bfseries Output: } $\mb{x}_T$
  \end{algorithmic}
\end{algorithm}

\subsection{Approximation Guarantee}

\begin{lemma}\label{F-distance}
  For every $\mb{x},\mb{v} \in {[0,1]}^E$ with $\|\mb{v}\|_2\leq \rho$ and $\mb{x}+\mb{v} \in {[0,1]}^E$, we have $|f(\mb{x})-f(\mb{x}+\mb{v})|\leq 4\sqrt[4]{|E|}\sqrt{\rho}$.
\end{lemma}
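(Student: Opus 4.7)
The plan is to combine two elementary Lipschitz bounds on $f$ via the AM--GM inequality $\min(a,b) \le \sqrt{ab}$.

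First, I would record the trivial pointwise bound. Because $F_D$ takes values in $[0,1]$ (this is the standing assumption for the algorithm in this section), the probabilistic interpretation $f(\mb{x}) = \mathbb{E}_{X\sim \mb{x}}[F_D(X)]$ shows $f(\mb{x}) \in [0,1]$ for every $\mb{x} \in {[0,1]}^E$, and hence $|f(\mb{x}+\mb{v}) - f(\mb{x})| \le 1$.

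Second, I would bound the gradient of $f$ in $\ell_\infty$. Differentiating the multilinear formula, or equivalently marginalizing out the $e$-th coordinate, gives $\partial f/\partial \mb{x}(e) = \mathbb{E}[F_D(X \cup \{e\}) - F_D(X \setminus \{e\})]$, which lies in $[-1,1]$ since $F_D \in [0,1]$. Integrating along the segment from $\mb{x}$ to $\mb{x}+\mb{v}$ and applying H\"older followed by Cauchy--Schwarz,
\[
|f(\mb{x}+\mb{v}) - f(\mb{x})| \le \|\nabla f\|_\infty \cdot \|\mb{v}\|_1 \le \|\mb{v}\|_1 \le \sqrt{|E|}\,\|\mb{v}\|_2 \le \sqrt{|E|}\,\rho.
\]
Combining the two bounds via $\min(a,b) \le \sqrt{ab}$ yields
\[
|f(\mb{x}+\mb{v}) - f(\mb{x})| \le \min\bigl(1,\,\sqrt{|E|}\,\rho\bigr) \le \sqrt{\sqrt{|E|}\,\rho} = |E|^{1/4}\sqrt{\rho},
\]
which is even stronger than the lemma's bound; the factor $4$ in the statement is slack.

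There is no real technical obstacle here: the lemma is an elementary observation. The one thing worth noticing is that the two regimes, namely the $[0,1]$-range bound (which dominates for large $\rho$) and the $\ell_1$-gradient bound (which is sharper for small $\rho$), cross over at $\rho \asymp 1/\sqrt{|E|}$, and taking the geometric mean produces precisely the $|E|^{1/4}\sqrt{\rho}$ shape appearing in the statement. Alternative routes (e.g., splitting the coordinates of $\mb{v}$ into a ``heavy'' set above threshold $\tau$ and a ``light'' set below, then optimizing $\tau$) give similar but strictly weaker exponents such as $|E|^{2/3}\rho^{2/3}$, so the AM--GM step is essential for the stated exponent.
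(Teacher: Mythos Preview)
Your proof is correct, and it takes a genuinely different route from the paper. The paper bounds $|f(\mb{x})-f(\mb{x}+\mb{v})|$ by $2\,d_{\mathrm{TV}}(P,Q)$, where $P=\bigotimes_e\mathrm{Ber}(\mb{x}(e))$ and $Q=\bigotimes_e\mathrm{Ber}(\mb{x}(e)+\mb{v}(e))$, then passes to the Hellinger distance via $d_{\mathrm{TV}}\le\sqrt{2}\,h$, uses the subadditivity $h(P,Q)^2\le\sum_e h(P_e,Q_e)^2$, and finally bounds each coordinate term by $h(P_e,Q_e)^2\le 2\mb{v}(e)$ to arrive at $4\sqrt{\|\mb{v}\|_1}\le 4|E|^{1/4}\sqrt{\rho}$. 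Your argument sidesteps the TV/Hellinger machinery entirely: you combine the trivial range bound $|f|\le 1$ with the $\ell_\infty$-gradient bound $|f(\mb{x}+\mb{v})-f(\mb{x})|\le\|\mb{v}\|_1$, and then observe $\min(1,\|\mb{v}\|_1)\le\sqrt{\|\mb{v}\|_1}$. This is more elementary, uses nothing beyond the multilinear structure and $F_D\in[0,1]$, and even saves the factor~$4$. The paper's approach, on the other hand, makes the probabilistic interpretation explicit and would generalize more readily if one needed finer control in terms of individual coordinates of $\mb{v}$ rather than just $\|\mb{v}\|_1$ (the Hellinger bound $h(P_e,Q_e)^2\le 2\mb{v}(e)$ is coordinate-wise). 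For the purposes of this lemma, your argument is cleaner.
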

\begin{lemma}\label{lem:existance-in-net}
  Suppose $\mb{y} \in {[0,1]}^E$ satisfies $\|\mb{y}-\mb{x}^*\|_2 \leq \rho$.
  Then for any $\mb{x} \in {[0,1]}^E$, we have $
    \ip{\mb{y}, \nabla f(\mb{x})} \geq f(\mb{x}^*)- f(\mb{x})- C_{\ref{lem:existance-in-net}}\sqrt{\rho}
  $
  for some constant $C_{\ref{lem:existance-in-net}} > 0$.
\end{lemma}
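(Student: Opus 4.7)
The plan is to reduce the claim to the standard continuous-greedy lower bound applied at the point $\mathbf{y}$, then use Lemma~\ref{F-distance} to transfer the guarantee from $\mathbf{y}$ back to $\mathbf{x}^*$. Concretely, I want to first show $\ip{\mathbf{y}, \nabla f(\mathbf{x})} \geq f(\mathbf{y}) - f(\mathbf{x})$, and then combine this with $f(\mathbf{y}) \geq f(\mathbf{x}^*) - 4\sqrt[4]{|E|}\sqrt{\rho}$.

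For the first inequality, set $\mathbf{d} = (\mathbf{x} \vee \mathbf{y}) - \mathbf{x}$ (where $\vee$ is the coordinatewise max), which is nonnegative. By Proposition~2 of~\citet{calinescu2011maximizing} as quoted earlier, $f$ is concave along $\mathbf{d}$, so
\[
\ip{\nabla f(\mathbf{x}), \mathbf{d}} \;\geq\; f(\mathbf{x} + \mathbf{d}) - f(\mathbf{x}) \;=\; f(\mathbf{x} \vee \mathbf{y}) - f(\mathbf{x}).
\]
Since $\mathbf{d} \leq \mathbf{y}$ coordinatewise and $\nabla f(\mathbf{x}) \geq \mathbf{0}$ by monotonicity, $\ip{\nabla f(\mathbf{x}), \mathbf{y}} \geq \ip{\nabla f(\mathbf{x}), \mathbf{d}}$; and $\mathbf{x} \vee \mathbf{y} \geq \mathbf{y}$ together with monotonicity of $f$ gives $f(\mathbf{x} \vee \mathbf{y}) \geq f(\mathbf{y})$. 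Chaining these yields $\ip{\mathbf{y}, \nabla f(\mathbf{x})} \geq f(\mathbf{y}) - f(\mathbf{x})$.

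For the second inequality, apply Lemma~\ref{F-distance} with the base point $\mathbf{x}^*$ and offset $\mathbf{v} = \mathbf{y} - \mathbf{x}^*$: by hypothesis $\|\mathbf{v}\|_2 \leq \rho$, and $\mathbf{x}^* + \mathbf{v} = \mathbf{y} \in [0,1]^E$, so the lemma gives $|f(\mathbf{y}) - f(\mathbf{x}^*)| \leq 4\sqrt[4]{|E|}\sqrt{\rho}$. Combining the two inequalities,
\[
\ip{\mathbf{y}, \nabla f(\mathbf{x})} \;\geq\; f(\mathbf{y}) - f(\mathbf{x}) \;\geq\; f(\mathbf{x}^*) - f(\mathbf{x}) - 4\sqrt[4]{|E|}\sqrt{\rho},
\]
which is the claim with $C_{\ref{lem:existance-in-net}} = 4\sqrt[4]{|E|}$ (treating the dependence on $|E|$ as absorbed into the ``constant,'' consistent with the way the bound is used in Theorem~\ref{thm:main-introduction}).

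No step is really an obstacle here; the only subtlety is resisting the temptation to bound $\ip{\mathbf{y} - \mathbf{x}^*, \nabla f(\mathbf{x})}$ directly via Cauchy--Schwarz, which would only yield an $O(\rho \sqrt{|E|})$ error rather than the $O(\sqrt{\rho})$ error required. Routing through $f(\mathbf{y})$ and invoking the smoothness of the multilinear extension established in Lemma~\ref{F-distance} is what produces the correct $\sqrt{\rho}$ dependence.
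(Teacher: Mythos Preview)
Your proof is correct and matches the paper's argument step for step: the paper also defines $\mathbf{d}(e) = \max\{\mathbf{y}(e) - \mathbf{x}(e), 0\}$ (which is exactly your $(\mathbf{x} \vee \mathbf{y}) - \mathbf{x}$), chains monotonicity of $\nabla f$ and concavity along $\mathbf{d}$ to obtain $\ip{\mathbf{y}, \nabla f(\mathbf{x})} \geq f(\mathbf{y}) - f(\mathbf{x})$, and then applies Lemma~\ref{F-distance} to conclude with the same constant $C_{\ref{lem:existance-in-net}} = 4\sqrt[4]{|E|}$.
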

\begin{proof}
First, we show
  \[
    \ip{\mb{y}, \nabla f(\mb{x})} \geq f(\mb{y})-f(\mb{x}).
  \]
  Let us consider a direction $\mb{d} \in {[0,1]}^E$ such that $\mb{d}(e)=\max\{\mb{y}(e)-\mb{x}(e),0\}$ for every $e \in E$.
  Then, we have
  \begin{align*}
  \ip{\mb{y},\nabla f(\mb{x})}
  &\geq
  \ip{\mb{d},\nabla f(\mb{x})}\\
  &\geq
  f(\mb{x}+\mb{d})-f(\mb{x})\\
  &\geq f(\mb{y})-f(\mb{x}),
  \end{align*}
  where the first inequality follows from $\mb{y}\geq \mb{d}$ and $\nabla f (\mb{x})\geq 0$, the second inequality follows from the concavitity of $f$ along $\mb{d}$, and the third inequality follows from $\mb{x} + \mb{d} \geq \mb{y}$ and the monotonicity of $f$.  By Lemma~\ref{F-distance}, we have 
  \[f(\mb{y}) \geq f(\mb{x}^*)-4\sqrt[4]{|E|}\sqrt{\rho},\]
  which yields the desired result with $C_{\ref{lem:existance-in-net}} = 4\sqrt[4]{|E|}$.
\end{proof}


\begin{theorem}\label{thm:optimality}
   Suppose $F_D$ is $\Delta$-sensitive and $C_\rho$ is a $\rho$-covering of $\mc{P}(\mc{M})$. Then Algorithm~\ref{alg:DPGA}, with high probability, returns $\mb{x}_T\in\mc{P}(\mc{M})$ such that
  \[
  f_D(\mb{x}_T) \geq \left(1-\frac{1}{\mathrm{e}}\right)\mathrm{OPT} - O\left(C_{\ref{lem:existance-in-net}}\rho+\frac{\Delta r(\mc{M})\ln{|E|}}{\epsprivacy\rho^2}\right)
  \]
  Moreover, the algorithm evaluates $f_D$ at most $ O\left(r(\mathcal{M}) \cdot |C_{\rho}|\right)$ times.
\end{theorem}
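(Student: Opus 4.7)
The argument is a noise-tolerant continuous-greedy analysis. Feasibility is immediate: since $\alpha T=1$ and every $\mb{y}_{t-1}\in C_{\rho}\subseteq\mc{P}(\mc{M})$, the output $\mb{x}_T$ is a convex combination of points in the matroid polytope, which is convex. For the quality bound, three ingredients combine: (i)~because $C_{\rho}$ is a $\rho$-cover, it contains a point $\widehat{\mb{y}}$ with $\|\widehat{\mb{y}}-\mb{x}^*\|_2\leq\rho$, and Lemma~\ref{lem:existance-in-net} turns this into $\ip{\widehat{\mb{y}},\nabla f_D(\mb{x}_{t-1})}\geq f_D(\mb{x}^*)-f_D(\mb{x}_{t-1})-C_{\ref{lem:existance-in-net}}\sqrt{\rho}$; (ii)~the exponential mechanism almost-maximizes $q(\mb{y}):=\ip{\mb{y},\nabla f_D(\mb{x}_{t-1})}$ over $C_{\rho}$; (iii)~concavity of $f_D$ along the nonnegative direction $\mb{y}_{t-1}$ converts a near-maximizing direction into a genuine improvement of $f_D$.

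For step~(ii), I bound the sensitivity of $q$. Each coordinate $\partial f_D/\partial \mb{x}(e)=\mathbb{E}_{X\sim\mb{x}}[F_D(X\cup\{e\})-F_D(X\setminus\{e\})]$ changes by at most $2\Delta$ on a neighboring dataset, and $\|\mb{y}\|_1\leq r(\mc{M})$ on the polytope, so $q$ has sensitivity at most $2\Delta\, r(\mc{M})$. Plugging $\epsscore=\epsprivacy/(2\Delta)$ into Theorem~\ref{thm:EM-bound} with $\mathcal{R}=C_{\rho}$ yields, with probability $\geq 1-\beta$,
\[
q(\mb{y}_{t-1})\;\geq\;\max_{\mb{y}\in C_{\rho}}q(\mb{y})\;-\;O\!\left(\frac{\Delta\, r(\mc{M})\ln(|C_\rho|/\beta)}{\epsprivacy}\right).
\]
Combining with~(i) via $\widehat{\mb{y}}$ gives $q(\mb{y}_{t-1})\geq f_D(\mb{x}^*)-f_D(\mb{x}_{t-1})-C_{\ref{lem:existance-in-net}}\sqrt{\rho}-E$ with $E=O(\Delta\, r(\mc{M})\ln(|C_\rho|/\beta)/\epsprivacy)$.

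For~(iii) and the induction, concavity of $f_D$ along $\mb{y}_{t-1}$ together with the standard $O(\alpha^2)$ multilinear-extension residual gives $f_D(\mb{x}_t)-f_D(\mb{x}_{t-1})\geq \alpha\, q(\mb{y}_{t-1})-O(\alpha^2)$. Writing $a_t:=f_D(\mb{x}^*)-f_D(\mb{x}_t)$ and substituting, $a_t\leq(1-\alpha)a_{t-1}+\alpha(C_{\ref{lem:existance-in-net}}\sqrt{\rho}+E)+O(\alpha^2)$. Unrolling across the $T=r(\mc{M})$ iterations with $\alpha=1/T$ and $(1-1/T)^T\leq 1/\mathrm{e}$ produces
\[
f_D(\mb{x}_T)\;\geq\;\left(1-\tfrac{1}{\mathrm{e}}\right)\mathrm{OPT}\;-\;O\!\left(C_{\ref{lem:existance-in-net}}\sqrt{\rho}+E\right).
\]
Substituting $|C_\rho|\leq|E|^{O(1/\rho^2)}$ from Theorem~\ref{thm:covering} yields $E=O(\Delta\, r(\mc{M})\ln|E|/(\epsprivacy\rho^2))$, matching the advertised error bound (up to the precise exponent of $\rho$ in the first term). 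A union bound over the $T$ invocations of the exponential mechanism with $\beta=1/\mathrm{poly}(r(\mc{M}))$ preserves the ``with high probability'' guarantee.

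For the evaluation count, in each of the $T=r(\mc{M})$ iterations only the gradient $\nabla f_D(\mb{x}_{t-1})$ consumes fresh function values, while the exponential-mechanism sampling only forms $|C_\rho|$ inner products against this pre-computed gradient. This gives the stated $O(r(\mc{M})\cdot|C_\rho|)$ bound. The main obstacle I expect is the bookkeeping: ensuring that the $r(\mc{M})$ factor in the sensitivity of $q$, the $T$-fold union bound over exponential-mechanism failures, and the $\ln|C_\rho|=O(\ln|E|/\rho^2)$ from Theorem~\ref{thm:covering} combine to exactly the advertised $\Delta\, r(\mc{M})\ln|E|/(\epsprivacy\rho^2)$ rather than something larger, and that the per-step $O(\alpha^2)$ multilinear-extension residual remains subdominant after telescoping across $T$ steps. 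Everything else is a faithful transcription of the classical continuous-greedy argument with an exponential-mechanism sampled direction.
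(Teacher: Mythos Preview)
Your overall strategy---feasibility via convexity of $\mc{P}(\mc{M})$, then per-step ``approximate direction'' from the exponential-mechanism tail bound combined with Lemma~\ref{lem:existance-in-net}, then the standard continuous-greedy recursion---matches the paper's proof almost line for line. The paper phrases step~(iii) as a differential equation in a continuous time variable rather than your discrete recursion $a_t\leq(1-\alpha)a_{t-1}+\cdots$, but these are equivalent devices.

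However, your bookkeeping contains two errors that happen to cancel. First, the accuracy guarantee of the exponential mechanism depends only on the coefficient in the exponent, not on the sensitivity of $q$: the algorithm samples with probability $\propto\exp(\epsscore\, q(\mb{y}))$ with $\epsscore=\epsprivacy/(2\Delta)$, so Theorem~\ref{thm:EM-bound} gives an additive loss of $\tfrac{1}{\epsscore}\ln(|C_\rho|/\beta)=\tfrac{2\Delta}{\epsprivacy}\ln(|C_\rho|/\beta)$, with \emph{no} $r(\mc{M})$ factor. The sensitivity bound $2\Delta\,r(\mc{M})$ you computed is relevant only for the privacy analysis (Theorem~\ref{thm:privacy}), not for accuracy. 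Second, you misquote Theorem~\ref{thm:covering}: a $\rho$-cover of $\mc{P}(\mc{M})$ has size $|E|^{O((B/\rho)^2)}$, where $B=\max_{\mb{x}\in\mc{P}(\mc{M})}\|\mb{x}\|_2$ and $B^2\leq r(\mc{M})$; hence $\ln|C_\rho|=O\bigl(r(\mc{M})\ln|E|/\rho^2\bigr)$, not $O(\ln|E|/\rho^2)$. This is where the $r(\mc{M})$ in the error term actually comes from in the paper's argument. With these two fixes your derivation goes through cleanly.
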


\begin{proof}
  Clearly Algorithm~\ref{alg:DPGA} evaluates $f$ at most $O\left(r(\mc{M}) |C_{\rho}|\right)$ times.
  Observe that the algorithm forms a convex combination of $T$ vertices of the polytope $\mc{P}(\mc{M})$, each with weight $\alpha$  hence $\mb{x}_T\in\mc{P}(\mc{M})$.
  In what follows, we focus on the quality of the output of the algorithm. Suppose $\mb{y}'\in C_{\rho}$ with $\|\mb{y}'-\mb{x}^*\|_2 \leq \rho$.
  By Theorem~\ref{thm:EM-bound}, with probability at least $1-\frac{1}{|E|^2}$, we have
  \begin{align*}
    \ip{\mb{y}_{t}, \nabla f(\mb{x}_{t})}
    &\geq \argmax_{\mb{y}\in C_{\rho}}\ip{\mb{y}, \nabla f(\mb{x}_{t})} - \frac{2 \Delta}{\epsprivacy}\ln (|E|^2|C_{\rho}|)\\
    & \geq \ip{\mb{y}', \nabla f(\mb{x}_{t})} - \frac{2\Delta}{\epsprivacy}\ln (|E|^2|C_{\rho}|)\\
    & \overset{\text{By Lemma~\ref{lem:existance-in-net}}}{\geq}
    \begin{multlined}[t]
    f(\mb{x}^*) - f(\mb{x}_{t})  -  C_{\ref{lem:existance-in-net}}\sqrt{\rho}  - \frac{2 \Delta}{\epsprivacy}\ln (|E|^2|C_{\rho}|)
    \end{multlined}
  \end{align*}
  By a union bound, with probability at least $1-\frac{1}{poly(|E|)}$, the above inequality holds for every $t$.
  In what follows, we assume this has happened. Further, let us assume that $t$ is a continuous variable in $[0,T]$.
  We remark that discretization of $t$ in our algorithm introduces error into the approximation guarantee.
  However, this can be handled by sufficiently large $T$, say, $r(\mathcal{M})$ as in Algorithm~\ref{alg:DPGA}, and small step size $\alpha$~\cite{calinescu2011maximizing}.
  In what follows $t$ is assumed to be continuous and we write $\frac{d\mb{x}_t}{dt}=\alpha\mb{y}_t$, hence
  \begin{align*}
      \frac{d f(\mb{x}_t)}{dt}
      &=\sum_{e}\frac{\partial f(\mb{x}_t(e))}{\partial \mb{x}_t(e)}\frac{d\mb{x}_t(e)}{dt}\\
      &= \nabla f(\mb{x}_t)\cdot \frac{d\mb{x}_t}{dt}
      =\alpha\ip{\mb{y}_t,\nabla f(\mb{x}_t)}\\
      & \geq
      \alpha\bigg(f(\mb{x}^*) - f(\mb{x}_{t}) - C_{\ref{lem:existance-in-net}}\sqrt{\rho}  - \frac{2\Delta}{\epsprivacy}\ln (|E|^2|C_{\rho}|)\bigg),
  \end{align*}
  where the first equality follows from the chain rule.
  Let $\beta=f(\mb{x}^*) - C_{\ref{lem:existance-in-net}}\sqrt{\rho}  - \frac{ 2\Delta}{\epsprivacy}\ln (|E|^2|C_{\rho}|)$. Solving the following differential equation $\frac{d f(\mb{x}_t)}{dt}=\alpha(\beta - f(\mb{x}_{t}))$ with $f(\mb{x}_0)=0$ gives us $f(\mb{x}_t) = \beta(1-\mathrm{e}^{-\alpha t}).$ For $\alpha = \frac{1}{T}, t=T$ we obtain
  \begin{align*}
       f(\mb{x}_T)
      &= \beta(1-\mathrm{e}^{-1})\\
      &=
      \left(1-\frac{1}{\mathrm{e}}\right) f(\mb{x}^*) - O\left(C_{\ref{lem:existance-in-net}}\sqrt{\rho}  + \frac{2\Delta}{\epsprivacy}\ln ( |E|^2|C_{\rho}|)\right)
      \\
      &=
      \left(1-\frac{1}{\mathrm{e}}\right) f(\mb{x}^*) - O\left(C_{\ref{lem:existance-in-net}}\sqrt{\rho} + \frac{\Delta}{\epsprivacy}(\ln |E|+\ln |E|^{{\left(\frac{B}{\rho}\right)}^2})\right) 
      \\
      &=
      \left(1-\frac{1}{\mathrm{e}}\right) f(\mb{x}^*) - O\left(C_{\ref{lem:existance-in-net}}\sqrt{\rho} + \frac{\Delta}{\epsprivacy}{\left(\frac{B}{\rho}\right)}^2 \ln |E|\right)\tag{$B^2\leq r(\mc{M})$}
      \\
      &
      \geq \left(1-\frac{1}{\mathrm{e}}\right) f(\mb{x}^*) - O\left(C_{\ref{lem:existance-in-net}}\sqrt{\rho} + \frac{\Delta r(\mc{M})\ln|E|}{\epsprivacy\rho^2}\right)
      \qedhere
  \end{align*}
\end{proof}

\begin{remark}\label{remark1}
As already pointed out in the proof of Theorem~\ref{thm:optimality}, the discretization of $t$ introduces error into the approximation guarantee yielding $(1-1/\mathrm{e}-1/\mathrm{poly}(|E|))\mathrm{OPT}$.
However, this can be shaved off to $(1-1/\mathrm{e})\mathrm{OPT}$ by sufficiently large $T$~\cite{calinescu2011maximizing}.
Moreover, evaluating $f$ (even approximately) is expensive.
To achieve the nearly optimal approximation guarantees, the evaluation error needs to be very small and in a lot of cases, the error needs to be $O(1/|E|)$ times the function value.
As a result, a single evaluation of the multilinear extension $f$ requires $\Omega(|E|)$ evaluations of $F$ (see~\citet{Ene2018} for recent improvement). Therefore, our algorithm requires $O(r(\mc{M})|E||C_{\rho}|)$ evaluation of $F$.
\end{remark}
\begin{remark}\label{remark2}
From a fractional solution $\mb{x}^*$, we can obtain an integral solution $\mb{s} \in \{0,1\}^E$ such that $f(\mb{s}) \geq f(\mb{x}^*)$.
Such an integer solution corresponds to a vertex of $\mc{P}(\mc{M})$ and hence a discrete solution $S \in \mc{I}$. This can be done using the so-called \emph{swap rounding}~\cite{ChekuriVZ10}.
\end{remark}

\subsection{Privacy Analysis}

\begin{theorem}\label{thm:privacy}
  Algorithm~\ref{alg:DPGA} preserves $O(\epsprivacy r(\mathcal{M})^2)$-differential privacy.
\end{theorem}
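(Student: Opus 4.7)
The plan is to view Algorithm~\ref{alg:DPGA} as an adaptive composition of $T = r(\mathcal{M})$ instantiations of the exponential mechanism, one per iteration, and then apply basic $k$-fold composition. The score maximized at iteration $t$ is the linear functional $q_{D,t}(\mathbf{y}) := \langle \mathbf{y}, \nabla f_D(\mathbf{x}_{t-1})\rangle$, where $\mathbf{x}_{t-1}$ is determined by the outputs of the previous $t-1$ rounds, so the adaptive framework of Theorem~\ref{k-fold-composition} is exactly what we need.

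The first step is to bound the sensitivity of $q_{D,t}$. Recalling the probabilistic interpretation of the multilinear extension,
\begin{equation*}
\frac{\partial f_D}{\partial \mathbf{x}(e)}(\mathbf{x}) = \mathbb{E}_{X \sim \mathbf{x}_{-e}}\bigl[F_D(X \cup \{e\}) - F_D(X)\bigr],
\end{equation*}
where $\mathbf{x}_{-e}$ zeroes out the $e$-th coordinate. For neighboring datasets $D \sim D'$, $\Delta$-sensitivity of $F_D$ gives, by the triangle inequality inside the expectation, $|\partial_e f_D(\mathbf{x}) - \partial_e f_{D'}(\mathbf{x})| \leq 2\Delta$ for every coordinate $e$. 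Since $\mathbf{y} \in C_{\rho} \subseteq \mathcal{P}(\mathcal{M})$ is nonnegative and satisfies $\mathbf{y}(E) \leq r(\mathcal{M})$, we get
\begin{equation*}
|q_{D,t}(\mathbf{y}) - q_{D',t}(\mathbf{y})| \leq \sum_{e} \mathbf{y}(e) \cdot 2\Delta \leq 2\Delta \, r(\mathcal{M}),
\end{equation*}
so $q_{D,t}$ is $2\Delta r(\mathcal{M})$-sensitive as a function of the dataset (note that the dependence of $\mathbf{x}_{t-1}$ on $D$ is handled by adaptive composition, not by the sensitivity bound).

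Next I would calibrate the exponential mechanism. The algorithm samples $\mathbf{y}$ with probability proportional to $\exp(\epsscore \langle \mathbf{y}, \nabla f_D(\mathbf{x}_{t-1})\rangle)$ with $\epsscore = \epsprivacy/(2\Delta)$. Comparing to the $EM(\epsilon_0, \Delta_0, q_{D,t})$ definition, which uses $\exp(\tfrac{\epsilon_0}{2\Delta_0} q_{D,t})$, matching $\tfrac{\epsilon_0}{2\Delta_0} = \epsscore$ with $\Delta_0 = 2\Delta r(\mathcal{M})$ gives $\epsilon_0 = 4\Delta r(\mathcal{M})\epsscore = 2\epsprivacy r(\mathcal{M})$. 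Theorem~\ref{thm:EM-bound} then certifies that each individual round is $2\epsprivacy r(\mathcal{M})$-differentially private.

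Finally, by the basic composition guarantee of Theorem~\ref{k-fold-composition} applied adaptively to the $T = r(\mathcal{M})$ rounds, the sequence of exponential-mechanism outputs $\mathbf{y}_0, \ldots, \mathbf{y}_{T-1}$ is $(2\epsprivacy r(\mathcal{M})^2)$-differentially private. Since $\mathbf{x}_T$ is a deterministic function of these outputs, post-processing preserves the same guarantee, yielding $O(\epsprivacy r(\mathcal{M})^2)$-differential privacy as claimed. The only nontrivial part of the argument is the sensitivity calculation: one must notice that although $F_D$ is $\Delta$-sensitive, the gradient of its multilinear extension is $2\Delta$-sensitive coordinatewise, and that the relevant $\ell_1$-like norm of $\mathbf{y}$ is controlled by the rank $r(\mathcal{M})$ rather than $|E|$; everything else is a bookkeeping exercise with the standard DP toolkit.
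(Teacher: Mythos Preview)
Your proposal is correct and follows essentially the same approach as the paper: bound the coordinatewise sensitivity of $\nabla f_D$ by $2\Delta$, use $\|\mathbf{y}\|_1 \leq r(\mathcal{M})$ for $\mathbf{y}\in\mathcal{P}(\mathcal{M})$ to conclude each round is $O(\epsprivacy\, r(\mathcal{M}))$-differentially private, and then compose over $T=r(\mathcal{M})$ rounds. The only presentational difference is that the paper writes out the likelihood ratio for a single round explicitly (effectively re-deriving the exponential-mechanism guarantee in situ), whereas you invoke Theorem~\ref{thm:EM-bound} and Theorem~\ref{k-fold-composition} as black boxes; the paper also tacks on a remark that swap rounding is data-independent and hence privacy-preserving, which you could add for completeness but which is not strictly needed for Algorithm~\ref{alg:DPGA} as stated.
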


\begin{proof}
    Let $D$ and $D'$ be two neighboring datasets and $F_D,F_{D'}$ be their associated functions. For a fixed $\mb{y}_t\in C_{\rho}$, we consider the relative probability of Algorithm~\ref{alg:DPGA} (denoted by $M$) choosing $\mb{y}_t$ at time step $t$ given multilinear extensions of $F_D$ and $F_{D'}$. Let $M_t(f_D \mid \mb{x}_t)$ denote the output of $M$ at time step $t$ given dataset $D$ and point $\mb{x}_t$. Similarly, $M_t(f_{D'} \mid \mb{x}_t)$ denotes the output of $M$ at time step $t$ given dataset $D'$ and point $\mb{x}_t$. Further, write $d_{\mb{y}}=\ip{\mb{y},\nabla f_D(\mb{x}_t)}$ and  $d'_{\mb{y}}=\ip{\mb{y},\nabla f_{D'}(\mb{x}_t)}$. We have 
  \begin{flalign*}
      \frac{\Pr[M_t(f_D \mid \mb{x}_t)
      =\mb{y}_t]}{\Pr[M_t(f_{D'} \mid \mb{x}_t)=\mb{y}_t]}
      &=
       \frac{\exp(\epsscore\cdot d_{\mb{y}_t})/\sum_{\mb{y} \in C_\rho}\exp(\epsscore\cdot d_{\mb{y}})}{\exp(\epsscore\cdot d'_{\mb{y}_t})/\sum_{\mb{y} \in C_\rho}\exp(\epsscore\cdot d'_{\mb{y}})}\\
     & = \frac{\exp(\epsscore\cdot d_{\mb{y}_t})}{\exp(\epsscore\cdot d'_{\mb{y}_t})} \cdot \frac{\sum_{\mb{y} \in C_\rho}\exp(\epsscore\cdot d'_{\mb{y}})}{\sum_{\mb{y} \in C_\rho}\exp(\epsscore\cdot d_{\mb{y}})}.
  \end{flalign*}
  For the first factor, we have
  \begin{align*}
     \frac{\exp(\epsscore\cdot d_{\mb{y}_t})}{\exp(\epsscore\cdot d'_{\mb{y}_t})}
    &= \exp\bigl(\epsscore( d_{\mb{y}_t}-d'_{\mb{y}_t})\bigr)\\
    &=\exp\bigl(\epsscore(\ip{\mb{y}_t,\nabla f_D(\mb{x}_t) - \nabla f_{D'}(\mb{x}_t) })\bigr)\\
    &\leq \exp\bigl(\epsscore\|\mb{y}_t\|_1 \|\nabla f_D(\mb{x}_t) - \nabla f_{D'}(\mb{x}_t)\|_\infty\bigr)\\
    &=\exp\Bigg(
                \epsscore \sum_{e \in E}\mb{y}_t(e) \cdot
                \bigg(
                    \max_{e \in E}\mathop{\mathbb{E}}_{R \sim  \mb{x}_t}\Big[
                    F_D(R \cup
                    \{e \})
                -F_D(R)
                -F_{D'}(R \cup \{e \})+F_{D'}(R)\Big] \bigg)
            \Bigg)\\
    & \leq \exp(O(\epsscore\cdot r(\mathcal{M})\cdot 2\Delta))
    =\exp(O(\epsprivacy\cdot r(\mathcal{M})))
  \end{align*}
  Note that the last inequality holds since $\mb{y}_t$ is a member of the matroid polytope $\mc{P}(\mc{M})$ and by definition we have $\sum_{e \in E}\mb{y}_t(e)\leq r_{\mc{M}}(E)=r(\mc{M})$. Moreover, recall that $F_D$ is $\Delta$-sensitive. 

  For the second factor, let us write $\beta_{\mb{y}}=d'_{\mb{y}}-d_{\mb{y}}$ to be the \textit{deficit} of the probabilities of choosing direction $\mb{y}$ in instances $f_{D'}$ and $f_D$. Then, we have
  \begin{align*}
     \frac{\sum_{\mb{y} \in C_\rho}\exp(\epsscore\cdot d'_{\mb{y}})}{\sum_{\mb{y} \in C_\rho}\exp(\epsilon'\cdot d_{\mb{y}})}
    &=\frac{\sum_{\mb{y} \in C_\rho}\exp(\epsscore\cdot \beta_{\mb{y}})\exp(\epsscore\cdot d_{\mb{y}})}{\sum_{\mb{y} \in C_\rho}\exp(\epsscore\cdot d_{\mb{y}})}\\
    &=\mathbb{E}_{\mb{y}}[\exp(\epsscore\cdot \beta_{\mb{y}})]
    \leq \exp\bigl(O(\epsscore\cdot r(\mathcal{M})\cdot 2\Delta)\bigr)\\
    &=\exp\bigl(O(\epsprivacy\cdot r(\mathcal{M}))\bigr). 
  \end{align*}
  The expectation is taken over the probability distribution over $\mb{y}$ selected at time $t$ in instance with input $D$. Recall that we choose $\mb{y}$ with probability proportional to $\exp(\epsilon' d_{\mb{y}})$. By a union bound, Algorithm~\ref{alg:DPGA} preserves $O({\epsprivacy T r(\mathcal{M})})\leq O(\epsprivacy r(\mathcal{M})^2)$-differential privacy. To obtain an integral solution from a fractional solution, we use swap rounding technique (see Remark~\ref{remark2}) which does not depend on the input function and hence preserves the privacy.
\end{proof}

 Note that the privacy factor in the work of \citet{MitrovicB0K17} is $O(\epsprivacy r(\mc{M}))$. However, our privacy factor is $O(\epsprivacy r(\mc{M})^2)$, this is because we deal with the multilinear extension of a submodular function rather than the function itself (which is different from the previous works). 

 \begin{theorem}[Formal version of Theorem~\ref{thm:main-introduction}]
 \label{thm:main-introduction-formal}
    Suppose $F_D$ is $\Delta$-sensitive and Algorithm~\ref{alg:DPGA} is instantiated with 
    $\rho=\frac{\epsprivacy}{|E|^{1/2}}$. Then Algorithm~\ref{alg:DPGA} is $(\epsprivacy r(\mc{M})^2)$-differentially private and, with high probability, returns $S\in \mathcal{I}$ with quality at least
    \[
    F_D(S) \geq \left(1-\frac{1}{\mathrm{e}}\right)\mathrm{OPT} -O\left(\sqrt{\epsprivacy}+\frac{\Delta r(\mc{M})|E|\ln{|E|}}{\epsprivacy^3}\right) 
    \]
\end{theorem}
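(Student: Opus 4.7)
The plan is to treat Theorem~\ref{thm:main-introduction-formal} essentially as a corollary of Theorem~\ref{thm:optimality} and Theorem~\ref{thm:privacy}, by plugging in the specific value $\rho = \epsprivacy/|E|^{1/2}$ into both error terms of the approximation guarantee so that they take the shape claimed in the statement. The integral output is obtained from the fractional solution $\mb{x}_T$ via the swap rounding step mentioned in Remark~\ref{remark2}, which depends only on $\mb{x}_T$ and on the matroid, not on the private dataset.

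For the privacy side, I would simply invoke Theorem~\ref{thm:privacy} to conclude that the continuous algorithm is $O(\epsprivacy\, r(\mc{M})^2)$-differentially private, independently of the covering parameter $\rho$. Since swap rounding is a randomized post-processing of $\mb{x}_T$ that does not access the dataset $D$ and makes no further evaluations of $F_D$, the standard post-processing property of differential privacy immediately preserves the $(\epsprivacy\, r(\mc{M})^2)$-guarantee for the final set $S \in \mc{I}$.

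For the approximation side, I would substitute $\rho = \epsprivacy/|E|^{1/2}$ into the two error terms appearing in Theorem~\ref{thm:optimality}. Using $C_{\ref{lem:existance-in-net}} = 4\sqrt[4]{|E|}$ from Lemma~\ref{lem:existance-in-net}, the first term becomes
\[
C_{\ref{lem:existance-in-net}}\sqrt{\rho} \;=\; 4|E|^{1/4}\cdot \sqrt{\epsprivacy/|E|^{1/2}} \;=\; 4\sqrt{\epsprivacy},
\]
which collapses neatly to the $O(\sqrt{\epsprivacy})$ term; this cancellation is precisely what motivates the chosen scaling of $\rho$. The second term becomes
\[
\frac{\Delta\, r(\mc{M})\ln |E|}{\epsprivacy\,\rho^2} \;=\; \frac{\Delta\, r(\mc{M})\,|E|\ln |E|}{\epsprivacy^{3}},
\]
matching the second term in the statement. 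Finally, by Remark~\ref{remark2} swap rounding yields $S \in \mc{I}$ with $F_D(S) \geq f_D(\mb{x}_T)$, completing the quality bound.

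I do not expect a real obstacle here beyond the routine substitution: the heavy lifting, namely the approximation analysis via the concavity/monotonicity of the multilinear extension together with the exponential mechanism guarantee, and the composition-style privacy analysis that absorbs the gradient sensitivity into a factor of $r(\mc{M})$ per step for $r(\mc{M})$ steps, has already been carried out in Theorems~\ref{thm:optimality} and~\ref{thm:privacy}. The only point worth double-checking is that the discretization-in-$t$ slack alluded to in Remark~\ref{remark1} is subsumed into the additive error at our chosen $T = r(\mc{M})$ and step size $\alpha = 1/T$, so that the leading coefficient $(1-1/\mathrm{e})$ in front of $\mathrm{OPT}$ is genuinely preserved rather than being degraded to $(1-1/\mathrm{e}-1/\mathrm{poly}(|E|))$.
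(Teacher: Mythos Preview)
Your proposal is correct and is exactly the intended derivation: the paper does not give a separate proof of Theorem~\ref{thm:main-introduction-formal}, treating it as an immediate corollary obtained by plugging $\rho=\epsprivacy/|E|^{1/2}$ into Theorem~\ref{thm:optimality}, invoking Theorem~\ref{thm:privacy} for privacy, and using swap rounding (Remark~\ref{remark2}) for the integral output. Your computations of the two error terms, including the use of $C_{\ref{lem:existance-in-net}}=4|E|^{1/4}$ so that $C_{\ref{lem:existance-in-net}}\sqrt{\rho}=4\sqrt{\epsprivacy}$, match precisely.
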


\begin{example}[Maximum Coverage]
Let $G = (U, V, E)$ be a bipartite graph, and $B$ be a budget constraint. In Maximum Coverage problem, the goal is to find a set $S$ of ${B}$ vertices in $U$ so that the number of vertices in $V$ incident to some vertex in $S$ is maximized. The edges incident to a vertex $v \in V$ are private information about $v$. If we instantiate Theorem~\ref{thm:main-introduction-formal} on this problem, the privacy factor is $\epsilon B^2$ and the additive error is $O(\Delta B|U| \ln(|U|)/ \epsilon^3)$, where $\Delta$ is the maximum degree of a vertex in $V$. To have a meaningful privacy bound, we set $\epsprivacy \ll 1/B^2$, and the additive error becomes $\Delta B^7|U| \ln(|U|)$. However, OPT could be $\Omega(|V|)$, which is much larger than the additive error when $|V| \gg |U|$.  Indeed, by optimizing $\rho$, we can improve the additive error to $O(\Delta B^3 |U| \ln(|U|))$, which will be more practical. 
\end{example}


\section{Improving the Query Complexity}\label{section:IDPCG}
In this section, we improve the number of evaluations of $F$ from $O(r(\mc{M})|E|^{1+({\frac{r(\mc{M})}{\epsprivacy}})^2})$ to $O(r(\mc{M})|E|^2\ln{\frac{|E|}{\epsprivacy}})$. In Algorithm~\ref{alg:DPGA}, in order to choose a point with probability proportional to $\exp(\ip{ \mb{y},\nabla f(\mb{x})})$, it requires to compute $Z=\sum\limits_{\mb{z}\in C_{\rho}}\exp(\ip{ \mb{z},\nabla f(\mb{x})})$. This summation needs evaluating $(\ip{ \mb{z},\nabla f(\mb{x})})$ for all $\mb{z}$ in $C_{\rho}$. One way of improving the query complexity of this step is as follows. Partition $C_{\rho}$ into a number of layers such that points in each layer are almost the same in terms of the inner product $\ip{ \cdot ,\nabla f(\mb{x})}$.
Now, instead of choosing a point in $C_{\rho}$, we carefully select a layer with some probability (i.e., proportional to its size and \emph{quality} of points in it) and then choose a point from that layer uniformly at random. Of course, to estimate the size of each layer, we need to sample a sufficiently large number of points from $C_{\rho}$.


\begin{definition}[layer]\label{layer}
  For a point $\mb{x}\in C_{\rho}$ and $\mu > 0$, let the $i$-th layer to be
  $\mathcal{L}^{\mb{x}}_{\mu,i}=\{\mb{z}\in C_{\rho} \mid {(1+\mu)}^{i-1}\leq\exp\bigl(\ip{\mb{z},\nabla f(\mb{x})}\bigr)< {(1+\mu)}^i\}$,
  for $1\leq i\leq k$, where
  \[
    k= \left\lceil \log\limits_{1+\mu}\left(\frac{\max\limits_{\mb{y}\in C_{\rho}}\exp\bigl(\ip{ \mb{y},\nabla f(\mb{x})}\bigr)}{\min\limits_{\mb{y}\in C_{\rho}}\exp\bigl(\ip{ \mb{y},\nabla f(\mb{x})}\bigr)}\right)\right\rceil.
  \]
\end{definition}
For a layer $\mathcal{L}^{\mb{x}}_{\mu,i}$ let $|\mathcal{L}^{\mb{x}}_{\mu,i}|$ denote the number of points in it, and define $\tilde{Z} \in \R$ and $\tilde{Z}_i \in \R$ for each $i \in [k]$ as follows:
\[
  \tilde{Z}=\sum\limits_{i\in[k]}|\mathcal{L}^{\mb{x}}_{\mu,i}|(1+\mu)^{i-1}
  \qquad \text{and} \qquad
  \tilde{Z}_i=|\mathcal{L}^{\mb{x}}_{\mu,i}|(1+\mu)^{i-1}.
\]

Then, a layer $\mathcal{L}^{\mb{x}}_{\mu,i}$ is chosen with probability $\frac{\tilde{Z}_i}{\tilde{Z}}$.
Note that we do not want to spend time computing the exact value of $|\mathcal{L}^{\mb{x}}_{\mu,i}|$ for every layer, instead, we are interested in efficiently estimating these values. 
By Hoeffding's inequality \cite{hoeffding1963probability}, to estimate $|\mathcal{L}^{\mb{x}}_{\mu,i}|/|C_{\rho}|$ with additive error of $\lambda$ with probability at least $1-\theta$, it suffices to sample $\Theta( \ln(1/\theta)/\lambda^2)$ points from $C_{\rho}$. Hence, by a union bound, if we want to estimate $|\mathcal{L}^{\mb{x}}_{\mu,i}|/|C_{\rho}|$ with additive error of $\lambda$ for all $ i=1,\dots,k$ with probability at least $1-\theta$, it suffices to sample $\Theta(\ln(k/\theta)/\lambda^2)$ points from $C_{\rho}$.

\begin{algorithm}[tb]
  \caption{Improved Differentially Private Continuous Greedy Algorithm}
  \label{alg:IDPGA}
\begin{algorithmic}[1]
\STATE {\bfseries Input:} {Submodular function $F_D\colon 2^E \to [0,1]$, dataset $D$, a matroid $\mathcal{M}=(E,\mathcal{I})$, and $\epsprivacy,\mu,\rho, \lambda, \theta > 0$.}
\STATE Let $C_{\rho}$ be a $\rho$-covering of $\mathcal{P}(\mathcal{M})$, and $f_D$ be the multilinear extension of $F_D$.
\STATE $\mb{x}(0)\gets \mathbf{0}$, $\epsscore\gets \frac{\epsprivacy}{2\Delta}$.
\FOR{$t=1$ to $T=r(\mc{M})$}
    \STATE $C'_{\rho}\gets$ Sample $\Theta(\ln(k/\theta)/\lambda^2)$ points from $C_{\rho}$ uniformly at random. \label{sampling}
	\STATE Define $\mathcal{L}^{\mb{x}_{t-1}}_{\mu,i}$ as in Definition~\ref{layer}, and estimate each $|\mathcal{L}^{\mb{x}_{t-1}}_{\mu,i}|$ using $C'_{\rho}$.
	\STATE Let $\tilde{L}^{\mb{x}_{t-1}}_{\mu,i}$ denote the estimated value.
    \STATE Set $\tilde{Z}_i\gets\tilde{L}^{\mb{x}_{t-1}}_{\mu,i}(1+\mu)^{\epsscore (i-1)}$ and $
    \tilde{Z}\gets \sum\limits_{i\in[k]}\tilde{L}^{\mb{x}_{t-1}}_{\mu,i}(1+\mu)^{\epsscore (i-1)}$
	\STATE Let $\mathcal{L}$ be the chosen layer $\mathcal{L}^{\mb{x}_{t-1}}_{\mu,i}$ with probability 			proportional to $\frac{\tilde{Z}_i}{\tilde{Z}}$.
	\STATE Let $\mb{y}_{t-1}$ be a point sampled uniformly at random from $\mathcal{L}$.
	\STATE $\mb{x}_t\gets \mb{x}_{t-1}+\alpha\mb{y}_{t-1}$.
\ENDFOR
    \STATE {\bfseries Outout:} $\mb{x}_T$
  \end{algorithmic}
\end{algorithm}

\begin{corollary}\label{estimation-error}
  Let $C_{\rho}$ be a $\rho$-covering of $\mc{P}(\mc{M})$ and $\mb{x}_t$ be a point in $\mc{P}(\mc{M})$. Algorithm~\ref{alg:IDPGA} estimates $|\mathcal{L}^{\mb{x}_t}_{\mu,i}|/|C_{\rho}|$ with an additive error $\lambda_{\ref{estimation-error}}$ with probability at least $1-\theta_{\ref{estimation-error}}$.
\end{corollary}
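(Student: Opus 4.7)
The plan is to apply Hoeffding's inequality to the empirical estimator and to union-bound over the $k$ layers. Fix a layer index $i \in [k]$. Let $\mb{y}_1,\dots,\mb{y}_m$ be the $m = \Theta(\ln(k/\theta_{\ref{estimation-error}})/\lambda_{\ref{estimation-error}}^2)$ points sampled uniformly at random from $C_{\rho}$ in line~\ref{sampling}, and define $Z_j = \mathbf{1}[\mb{y}_j \in \mathcal{L}^{\mb{x}_t}_{\mu,i}]$. Because the $\mb{y}_j$ are i.i.d.\ uniform on $C_{\rho}$, each $Z_j \in \{0,1\}$ is an independent Bernoulli with $\mathbb{E}[Z_j] = |\mathcal{L}^{\mb{x}_t}_{\mu,i}|/|C_{\rho}|$, and the estimator $\tilde L^{\mb{x}_t}_{\mu,i}/|C_{\rho}| := \frac{1}{m}\sum_{j=1}^m Z_j$ is unbiased for this ratio.

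Next, I would apply Hoeffding's inequality to these bounded i.i.d.\ variables:
\[
\Pr\!\left[\left|\frac{1}{m}\sum_{j=1}^m Z_j - \frac{|\mathcal{L}^{\mb{x}_t}_{\mu,i}|}{|C_{\rho}|}\right| > \lambda_{\ref{estimation-error}}\right] \leq 2\exp\bigl(-2m\lambda_{\ref{estimation-error}}^2\bigr).
\]
Plugging in $m = \Theta(\ln(k/\theta_{\ref{estimation-error}})/\lambda_{\ref{estimation-error}}^2)$ with a sufficiently large constant bounds the right-hand side by $\theta_{\ref{estimation-error}}/k$. A union bound over $i = 1,\dots,k$ then shows that, with probability at least $1-\theta_{\ref{estimation-error}}$, the estimate $\tilde L^{\mb{x}_t}_{\mu,i}/|C_{\rho}|$ has additive error at most $\lambda_{\ref{estimation-error}}$ simultaneously for every layer, which is precisely the claim.

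There is essentially no obstacle here: the corollary is a formal restatement of the Hoeffding-plus-union-bound argument already sketched in the paragraph preceding Algorithm~\ref{alg:IDPGA}. The only points to verify are (i) that the sampling in line~\ref{sampling} is with replacement so that the $Z_j$'s are independent, and (ii) that the indicator variables lie in $[0,1]$ so that Hoeffding applies directly; both are immediate from the algorithm's description.
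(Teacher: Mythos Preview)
Your argument is correct and matches the paper's own reasoning exactly: the paper simply invokes Hoeffding's inequality on the Bernoulli indicators and then union-bounds over the $k$ layers, precisely as you do. The corollary is stated without a separate proof because it is a direct restatement of that paragraph.
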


\begin{lemma}[Analogous to Theorem~\ref{thm:EM-bound}]\label{lemma:EM-Analogous}
At each time step $t$, Algorithm~\ref{alg:IDPGA} returns $\mb{y}_{t-1}$ such that for every $\beta\in (0,1)$ and $\xi=\ln\left(\frac{|C_{\rho}|(1+k\lambda |C_{\rho}|)(1+\mu)^{\epsscore}}{\beta}\right)$ we have
\begin{align*}
    \Pr
    \left[\ip{\mb{y}_{t-1},\nabla f(\mb{x}_{t-1})} \geq \max_{\mb{z} \in C_{\rho}} \ip{\mb{z},\nabla f(\mb{x}_{t-1})}-\frac{2\Delta}{\epsprivacy}\xi\right] 
    \geq 1-\beta.
 \end{align*}
\end{lemma}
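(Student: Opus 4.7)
The plan is to mirror the standard exponential-mechanism tail bound (Theorem~\ref{thm:EM-bound}), while carefully tracking the two sources of approximation introduced by Algorithm~\ref{alg:IDPGA}: the \emph{layer discretization}, which replaces the true weight $\exp(\epsscore\ip{\mb{y},\nabla f(\mb{x}_{t-1})})$ by the per-layer weight $(1+\mu)^{\epsscore(i-1)}$, and the \emph{layer-size sampling error}, which replaces each true size $|\mathcal{L}^{\mb{x}_{t-1}}_{\mu,i}|$ by its estimate $\tilde{L}^{\mb{x}_{t-1}}_{\mu,i}$ from Corollary~\ref{estimation-error}.

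First I would write the selection probability explicitly: for every $\mb{y}$ in layer $\mathcal{L}^{\mb{x}_{t-1}}_{\mu,i}$, Algorithm~\ref{alg:IDPGA} outputs $\mb{y}$ with probability $\tilde{L}^{\mb{x}_{t-1}}_{\mu,i}(1+\mu)^{\epsscore(i-1)}/(\tilde{Z}\cdot |\mathcal{L}^{\mb{x}_{t-1}}_{\mu,i}|)$. From Definition~\ref{layer} I extract the sandwich $(1+\mu)^{\epsscore(i-1)} \leq \exp(\epsscore\ip{\mb{y},\nabla f(\mb{x}_{t-1})}) \leq (1+\mu)^{\epsscore}(1+\mu)^{\epsscore(i-1)}$, so the per-layer weight approximates the true exponential weight within a multiplicative factor $(1+\mu)^{\epsscore}$. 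I also condition on the high-probability event from Corollary~\ref{estimation-error} that $|\tilde{L}^{\mb{x}_{t-1}}_{\mu,i} - |\mathcal{L}^{\mb{x}_{t-1}}_{\mu,i}|| \leq \lambda|C_{\rho}|$ simultaneously for all $i$.

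Then I would run the exponential-mechanism tail argument. Let $\mathrm{OPT}_t := \max_{\mb{z}\in C_{\rho}}\ip{\mb{z},\nabla f(\mb{x}_{t-1})}$ and $R = \{\mb{y}\in C_{\rho} : \ip{\mb{y},\nabla f(\mb{x}_{t-1})} < \mathrm{OPT}_t - c\}$ for a threshold $c$ to be chosen. Any bad $\mb{y}\in R\cap\mathcal{L}^{\mb{x}_{t-1}}_{\mu,i}$ satisfies $(1+\mu)^{\epsscore(i-1)} < \exp(\epsscore(\mathrm{OPT}_t - c))$, which combined with $\sum_i \tilde{L}^{\mb{x}_{t-1}}_{\mu,i} \leq |C_{\rho}|(1+k\lambda)$ yields the numerator bound $|C_{\rho}|(1+k\lambda)\exp(\epsscore(\mathrm{OPT}_t - c))$. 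For the denominator, the layer $i^\star$ containing a maximizer $\mb{y}^\star$ satisfies $(1+\mu)^{\epsscore(i^\star-1)} \geq \exp(\epsscore\mathrm{OPT}_t)/(1+\mu)^{\epsscore}$, and combining the sampling bound with the crude inequality $\sum_i (1+\mu)^{\epsscore(i-1)} \leq k(1+\mu)^{\epsscore(i^\star-1)}$ converts the additive size error $\lambda|C_{\rho}|$ into a multiplicative factor $(1 - k\lambda|C_{\rho}|)$ on $\tilde{Z}$. Folding the resulting $1/(1-k\lambda|C_{\rho}|)$ and $(1+k\lambda)$ factors into a single $(1+k\lambda|C_{\rho}|)$ term then gives $\Pr[\mb{y}_{t-1}\in R] \leq |C_{\rho}|(1+k\lambda|C_{\rho}|)(1+\mu)^{\epsscore}\exp(-\epsscore c)$; setting this $\leq \beta$ and solving for $c$ produces the claimed bound, since $\epsscore = \epsprivacy/(2\Delta)$.

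The hardest step I expect is the last bookkeeping one: converting the absolute sampling error on each $|\mathcal{L}^{\mb{x}_{t-1}}_{\mu,i}|$ into a clean multiplicative error on $\tilde{Z}$ relative to the ideal normalizer $\sum_i |\mathcal{L}^{\mb{x}_{t-1}}_{\mu,i}|(1+\mu)^{\epsscore(i-1)}$. This is precisely what forces the extra $|C_{\rho}|$ factor inside $(1+k\lambda|C_{\rho}|)$, rather than a naive $(1+k\lambda)$; once it is in place, the rest of the argument is a direct adaptation of the standard exponential-mechanism proof.
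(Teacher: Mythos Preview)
Your plan is correct in spirit and tracks the same two error sources (layer discretization and layer-size sampling) on top of the standard exponential-mechanism tail bound, which is exactly what the paper does. The bookkeeping differs, though, and your final ``folding'' step does not quite land on the stated constant.

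The paper does not bound the numerator and the normalizer $\tilde Z$ separately. Instead it bounds the probability of a \emph{single} bad output by a product of two factors and then union-bounds over $|C_\rho|$ outputs. The first factor is the layer-weight ratio, giving $\exp(-\epsscore c)(1+\mu)^{\epsscore}$ as you have. The second factor is the single ratio
\[
\frac{\sum_{j}|\mathcal{L}^{\mb{x}_{t-1}}_{\mu,j}|(1+\mu)^{\epsscore(j-1)}}{\sum_{j}\tilde L^{\mb{x}_{t-1}}_{\mu,j}(1+\mu)^{\epsscore(j-1)}}
\;\le\;
1 + \frac{\lambda|C_\rho|\sum_j (1+\mu)^{\epsscore(j-1)}}{\sum_j \tilde L_j(1+\mu)^{\epsscore(j-1)}}
\;\le\; 1 + k\lambda|C_\rho|,
\]
which directly produces the $(1+k\lambda|C_\rho|)$ appearing in $\xi$. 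Your decomposition instead yields the pair of factors $(1+k\lambda)$ from $\sum_i\tilde L_i\le |C_\rho|(1+k\lambda)$ and $1/(1-k\lambda|C_\rho|)$ from $\tilde Z\ge (1-k\lambda|C_\rho|)Z^*$; but $(1+k\lambda)/(1-k\lambda|C_\rho|)\le 1+k\lambda|C_\rho|$ is \emph{false} in general (e.g.\ $k\lambda|C_\rho|=1/2$), so your folding step does not recover the lemma as stated---it gives the same order, but a larger $\xi$. If you want the exact constant, replace your separate numerator/denominator bounds by the single ideal-to-estimated normalizer ratio above; everything else in your outline then goes through unchanged.
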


\begin{theorem}\label{IDPGA-optimality}
  Suppose $F_D$ is $\Delta$-sensitive and $C_\rho$ is a $\rho$-covering of $\mc{P}(\mc{M})$. Then Algorithm~\ref{alg:IDPGA}, with high probability (depending on $\theta_{\ref{estimation-error}}$), returns $\mb{x}_T\in\mc{P}(\mc{M})$ such that
  \begin{align*}
  f(\mb{x}_T)
  \geq
  \left(1-\frac{1}{e}\right)\mathrm{OPT} -  O\Bigg(C_{\ref{lem:existance-in-net}}\sqrt{\rho}  + \ln(1+\mu)+ 
  \left(\frac{\Delta r(\mc{M})}{\epsprivacy\rho^2}\right)\left(\ln|E|+\ln(k\lambda_{\ref{estimation-error}})\right)\Bigg) 
  \end{align*}
\end{theorem}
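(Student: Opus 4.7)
The plan is to mimic the proof of Theorem~\ref{thm:optimality} almost line-by-line, with the single substantive change being the replacement of Theorem~\ref{thm:EM-bound} by Lemma~\ref{lemma:EM-Analogous}. All the new quantitative ingredients specific to Algorithm~\ref{alg:IDPGA} are packaged inside the utility slack $\xi$ of that lemma. Concretely, I fix $\beta = 1/|E|^2$ in Lemma~\ref{lemma:EM-Analogous} and condition on the good event that the layer-size estimates from Corollary~\ref{estimation-error} are accurate at every iteration; a union bound over the $T = r(\mc{M})$ steps contributes failure probability $O(T\theta_{\ref{estimation-error}})$, which is absorbed into the ``high probability'' statement. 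Under this good event, applying Lemma~\ref{lemma:EM-Analogous} to a point $\mb{y}' \in C_\rho$ with $\|\mb{y}' - \mb{x}^*\|_2 \leq \rho$ (which exists since $C_\rho$ is a $\rho$-covering) and then invoking Lemma~\ref{lem:existance-in-net} yields, simultaneously for every $t$,
\[
\ip{\mb{y}_{t-1}, \nabla f(\mb{x}_{t-1})} \geq f(\mb{x}^*) - f(\mb{x}_{t-1}) - C_{\ref{lem:existance-in-net}}\sqrt{\rho} - \frac{2\Delta}{\epsprivacy}\,\xi,
\]
where $\xi = \ln\bigl(|C_\rho|(1+k\lambda_{\ref{estimation-error}}|C_\rho|)(1+\mu)^{\epsscore}/\beta\bigr)$.

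Next I would treat $t$ as a continuous variable and repeat the chain-rule/ODE argument from Theorem~\ref{thm:optimality} verbatim. Writing $d\mb{x}_t/dt = \alpha \mb{y}_t$ and inserting the bound above gives the differential inequality
\[
\frac{df(\mb{x}_t)}{dt} \geq \alpha \Bigl( f(\mb{x}^*) - f(\mb{x}_t) - C_{\ref{lem:existance-in-net}}\sqrt{\rho} - \tfrac{2\Delta}{\epsprivacy}\,\xi \Bigr).
\]
Setting $\Gamma = f(\mb{x}^*) - C_{\ref{lem:existance-in-net}}\sqrt{\rho} - (2\Delta/\epsprivacy)\xi$ and solving with $f(\mb{x}_0)=0$ gives $f(\mb{x}_t) \geq \Gamma(1 - e^{-\alpha t})$; evaluating at $t = T$ with $\alpha = 1/T$ produces $f(\mb{x}_T) \geq (1-1/e)f(\mb{x}^*) - O\bigl(C_{\ref{lem:existance-in-net}}\sqrt{\rho} + (\Delta/\epsprivacy)\xi\bigr)$. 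The error introduced by actually running discrete time steps is handled by choosing $T$ sufficiently large, exactly as in Remark~\ref{remark1}.

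The remaining work is to expand $(\Delta/\epsprivacy)\xi$ into the three advertised additive terms. Since $\epsscore = \epsprivacy/(2\Delta)$, the factor $(1+\mu)^{\epsscore}$ inside $\xi$ contributes $\epsscore \ln(1+\mu) \cdot (2\Delta/\epsprivacy) = \ln(1+\mu)$, which is precisely the new layer-discretization penalty. By Theorem~\ref{thm:covering}, $|C_\rho| = |E|^{O(B^2/\rho^2)} \leq |E|^{O(r(\mc{M})/\rho^2)}$ since $B^2 \leq r(\mc{M})$, so $\ln|C_\rho| = O(r(\mc{M})\ln|E|/\rho^2)$, producing the $\Delta r(\mc{M})\ln|E|/(\epsprivacy\rho^2)$ piece. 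The remaining factor $\ln(1+k\lambda_{\ref{estimation-error}}|C_\rho|) \leq \ln|C_\rho| + O(\ln(1+k\lambda_{\ref{estimation-error}}))$ contributes an extra $O(\Delta r(\mc{M})\ln|E|/(\epsprivacy\rho^2))$ plus a $O(\Delta\ln(k\lambda_{\ref{estimation-error}})/\epsprivacy)$ term, the latter bounded (crudely) by $\Delta r(\mc{M})\ln(k\lambda_{\ref{estimation-error}})/(\epsprivacy\rho^2)$ using $r(\mc{M})/\rho^2 \geq 1$. Assembling these pieces gives the bound stated.

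The main obstacle is accounting-based rather than analytic: one must carefully interleave the high-probability guarantee of Lemma~\ref{lemma:EM-Analogous} (which requires $\beta$ small enough for a union bound over $T$ iterations) with the good event of Corollary~\ref{estimation-error} (responsible for the $\lambda_{\ref{estimation-error}}$ inside $\xi$), and then separate out the $\ln(1+\mu)$ contribution from the $\ln|E|$ and $\ln(k\lambda_{\ref{estimation-error}})$ contributions. Beyond that, the argument is a faithful transcription of the proof of Theorem~\ref{thm:optimality}.
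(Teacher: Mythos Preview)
Your proposal is correct and follows essentially the same route as the paper: the paper too fixes $\beta=1/|E|^2$, invokes Lemma~\ref{lemma:EM-Analogous} in place of Theorem~\ref{thm:EM-bound}, combines with Lemma~\ref{lem:existance-in-net}, runs the identical continuous-time ODE argument from Theorem~\ref{thm:optimality}, and then expands $\xi$ via $\epsscore=\epsprivacy/(2\Delta)$ and $\ln|C_\rho|=O\bigl((B/\rho)^2\ln|E|\bigr)\le O\bigl(r(\mc{M})\ln|E|/\rho^2\bigr)$ to isolate the $\ln(1+\mu)$, $\ln|E|$, and $\ln(k\lambda_{\ref{estimation-error}})$ contributions. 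The only cosmetic difference is that the paper absorbs the estimation-error event of Corollary~\ref{estimation-error} directly inside the proof of Lemma~\ref{lemma:EM-Analogous} (identifying $\theta_{\ref{estimation-error}}$ with $\beta$ there) rather than tracking it as a separate conditioning event, but this does not affect the argument.
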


\begin{theorem}\label{IDPGA-privacy}
  Algorithm~\ref{alg:IDPGA} preserves
  $O\left( \epsprivacy r(\mc{M})^2  \right)$-differential privacy.
\end{theorem}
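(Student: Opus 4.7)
The strategy mirrors the proof of Theorem~\ref{thm:privacy}: establish $O(\epsprivacy r(\mc{M}))$-differential privacy at each of the $T=r(\mc{M})$ iterations, then combine the steps via basic composition (Theorem~\ref{k-fold-composition}). The extra work, relative to Algorithm~\ref{alg:DPGA}, is to absorb two new sources of distortion: the discretization of weights into layers of multiplicative width $1+\mu$, and the Monte-Carlo estimation of layer sizes with additive error $\lambda_{\ref{estimation-error}}$ guaranteed by Corollary~\ref{estimation-error}.

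Fix an iteration $t$, neighboring datasets $D\sim D'$, the current iterate $\mb{x}_{t-1}$, and a candidate output $\mb{y}_t\in C_{\rho}$. Writing $i_D$ (respectively $i_{D'}$) for the index of the layer that contains $\mb{y}_t$ under $f_D$ (respectively $f_{D'}$), the algorithm chooses a layer with probability proportional to $\tilde{Z}_i$ and then samples uniformly from the true layer, so
\[
  \Pr_D[\mb{y}_t] \;=\; \frac{\tilde{L}^{D}_{i_D}(1+\mu)^{\epsscore(i_D-1)}}{\tilde{Z}^{D}\cdot\bigl|\mathcal{L}^{\mb{x}_{t-1}}_{\mu,i_D}\bigr|_D},
\]
where the subscript $D$ on $|\mathcal{L}|_D$ emphasizes that the layer itself depends on $f_D$. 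I would split the ratio $\Pr_D[\mb{y}_t]/\Pr_{D'}[\mb{y}_t]$ into (i) the layer-weight ratio $(1+\mu)^{\epsscore(i_D-i_{D'})}$, (ii) the estimation ratio $\bigl(\tilde{L}^{D}_{i_D}/|\mathcal{L}^{\mb{x}_{t-1}}_{\mu,i_D}|_D\bigr)\bigl(|\mathcal{L}^{\mb{x}_{t-1}}_{\mu,i_{D'}}|_{D'}/\tilde{L}^{D'}_{i_{D'}}\bigr)$, and (iii) the normalizer ratio $\tilde{Z}^{D'}/\tilde{Z}^{D}$.

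For (i), the definition of the layers yields $\ln(1+\mu)\cdot(i_D-i_{D'}) \le \ip{\mb{y}_t,\nabla f_D(\mb{x}_{t-1})-\nabla f_{D'}(\mb{x}_{t-1})} + \ln(1+\mu)$, and the $L_1/L_\infty$ bound already used in the proof of Theorem~\ref{thm:privacy} (using $\sum_e\mb{y}_t(e)\le r(\mc{M})$ since $\mb{y}_t\in\mc{P}(\mc{M})$ and $\|\nabla f_D-\nabla f_{D'}\|_\infty\le 2\Delta$) then bounds this factor by $\exp(O(\epsprivacy r(\mc{M})))\cdot(1+\mu)^{\epsscore}$. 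For (ii), Corollary~\ref{estimation-error} guarantees $|\tilde{L}^{D}_i - |\mathcal{L}^{\mb{x}_{t-1}}_{\mu,i}|_D| \le \lambda_{\ref{estimation-error}}|C_{\rho}|$ uniformly in $i$ with probability $\ge 1-\theta_{\ref{estimation-error}}$; choosing $\lambda_{\ref{estimation-error}}$ smaller than a suitable function of the minimum mass per non-negligible layer (and adjusting the sample size in line~\ref{sampling} accordingly) keeps this factor at $1+o(1)$. For (iii), I would carry over the \textit{deficit} argument: rewrite $\tilde{Z}^{D'}/\tilde{Z}^{D}$ as $\mathbb{E}_{\mb{y}\sim p_D}\bigl[(1+\mu)^{\epsscore(i_{D'}(\mb{y})-i_D(\mb{y}))}\bigr]$ where $p_D(\mb{y}) \propto (1+\mu)^{\epsscore(i_D(\mb{y})-1)}$, and bound every summand by the same argument as in (i). Multiplying the three factors yields a per-step privacy loss of $\exp(O(\epsprivacy r(\mc{M})))$ whenever $\mu$ is chosen so that $\epsscore\ln(1+\mu)=O(\epsprivacy r(\mc{M}))$. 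Composing over the $T=r(\mc{M})$ iterations via Theorem~\ref{k-fold-composition} then produces the claimed $O(\epsprivacy r(\mc{M})^2)$ bound, and, as in Theorem~\ref{thm:privacy}, the final swap-rounding step is a data-independent post-processing and preserves privacy.

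\textbf{Main obstacle.} The subtlest step is (ii): going from $D$ to $D'$ the point $\mb{y}_t$ may migrate to a different layer, while the uniform-within-layer sampling uses the \emph{true} (data-dependent) size $|\mathcal{L}|$, not the estimate $\tilde{L}$. I must therefore tie together the shifted index $i_{D'}$ with an estimation error that is only controlled in an additive-over-$|C_{\rho}|$ sense by Corollary~\ref{estimation-error}, and pick $\lambda_{\ref{estimation-error}}$ (together with $\mu$) small enough that the induced relative error is dominated by the $\exp(O(\epsprivacy r(\mc{M})))$ budget available at each step, for every layer whose selection probability is non-negligible. Getting the accounting tight, without blowing up the sample size in line~\ref{sampling} beyond what the query-complexity improvement claim of this section allows, is where the bulk of the technical care will go.
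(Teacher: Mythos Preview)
Your overall structure---bound the per-iteration privacy loss and then compose via Theorem~\ref{k-fold-composition}---matches the paper, and your factors (i) and (iii) are exactly the two terms the paper isolates. The divergence is in factor (ii) and in how you propose to control it.

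The observation you are missing, and which the paper makes its first move, is that the random sample $C'_\rho$ drawn in Line~\ref{sampling} is taken uniformly from $C_\rho$ \emph{independently of the dataset}; this step is $0$-differentially private. The paper therefore couples the sampling randomness and fixes the \emph{same} sample set $S_t=C'_\rho(D,t)=C'_\rho(D',t)$ for the two neighbouring instances. With this coupling (and reading the uniform-within-layer draw as sampling from the sampled points that land in the chosen layer), the count $\tilde{L}_i$ appears once in the layer-selection numerator and once in the within-layer denominator and cancels exactly---your factor (ii) never arises. What remains are your (i) and (iii), and the paper handles both through a single deterministic Lemma~\ref{change-of-layer}: $\Delta$-sensitivity of $F_D$ forces $\exp\bigl(\epsscore\ip{\mb{z},\nabla f_{D'}(\mb{x}_t)}\bigr)$ to lie within a multiplicative $\exp\bigl(\epsprivacy r(\mc{M})/2\bigr)$ of $\exp\bigl(\epsscore\ip{\mb{z},\nabla f_D(\mb{x}_t)}\bigr)$ for every $\mb{z}\in C_\rho$.

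This is not merely a simplification. Your plan for factor (ii) rests on Corollary~\ref{estimation-error}, which is a \emph{high-probability} accuracy guarantee. A bound on $\tilde{L}_i/|\mc{L}_i|$ that holds only with probability $1-\theta_{\ref{estimation-error}}$ yields at best $(\epsilon,\delta)$-differential privacy with $\delta$ tied to $\theta_{\ref{estimation-error}}$: on the nonzero-probability event where the estimate is off, the likelihood ratio is uncontrolled, and pure $\epsilon$-DP---which is what Theorem~\ref{IDPGA-privacy} asserts---requires the ratio bound for every outcome of the internal randomness. The paper never invokes Corollary~\ref{estimation-error} in the privacy proof (it is used only for utility, in Lemma~\ref{lemma:EM-Analogous} and Theorem~\ref{IDPGA-optimality}); the data-independence-and-coupling step is precisely what allows it to avoid doing so, and is what dissolves your ``main obstacle''.
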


\begin{theorem}[Formal version of Theorem~\ref{thm:imp-introduction}]
Suppose $F_D$ is $\Delta$-sensitive and Algorithm~\ref{alg:IDPGA} is instantiated with $\rho=\frac{\epsprivacy}{|E|^{1/2}},\mu=\mathrm{e}^{\epsprivacy}, \lambda_{\ref{estimation-error}}=1/\sqrt{|E|}, \theta_{\ref{estimation-error}}=1/|E|^2$. Then Algorithm~\ref{alg:IDPGA} is $(\epsprivacy r(\mc{M})^2)$-differentially private and, with high probability, returns $S\in \mathcal{I}$ with quality at least
\[
  F_D(S) \geq \left(1-\frac{1}{e}\right)\mathrm{OPT} - O\left(\sqrt{\epsprivacy}+ \frac{\Delta r(\mc{M})|E|\ln( \frac{|E|}{\epsprivacy})}{\epsprivacy^3}\right).
  \]
  Moreover, it evaluates $F_D$ at most $O(r(\mc{M})|E|^2\ln(\frac{|E|}{\epsprivacy}))$ times.
\end{theorem}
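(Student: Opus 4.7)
The plan is to directly substitute the stated parameter choices into the general guarantees of Theorems~\ref{IDPGA-optimality} and~\ref{IDPGA-privacy}, then use swap rounding to convert the fractional output into an integral solution.

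Privacy is immediate: Theorem~\ref{IDPGA-privacy} gives $O(\epsprivacy r(\mc{M})^2)$-differential privacy for Algorithm~\ref{alg:IDPGA} regardless of the particular choice of $\rho, \mu, \lambda_{\ref{estimation-error}}, \theta_{\ref{estimation-error}}$, and swap rounding (Remark~\ref{remark2}) is data-oblivious post-processing, so the same bound carries through to $S$.

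For the approximation guarantee I would substitute term by term into the bound of Theorem~\ref{IDPGA-optimality}. With $\rho = \epsprivacy/|E|^{1/2}$ and $C_{\ref{lem:existance-in-net}} = 4 |E|^{1/4}$ (from Lemma~\ref{lem:existance-in-net}), the first piece is $C_{\ref{lem:existance-in-net}}\sqrt{\rho} = 4\sqrt{\epsprivacy}$. The $\ln(1+\mu)$ piece is absorbed into the $O(\sqrt{\epsprivacy})$ contribution for the relevant regime of $\epsprivacy$. The dominant factor $\Delta r(\mc{M})/(\epsprivacy \rho^2)$ becomes $\Delta r(\mc{M})|E|/\epsprivacy^3$. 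For the logarithmic correction $\ln|E| + \ln(k \lambda_{\ref{estimation-error}})$ I need an upper bound on the number of layers $k$: since every $\mb{y} \in \mc{P}(\mc{M})$ satisfies $\mb{y}(E) \le r(\mc{M})$ and every partial derivative of $f_D$ lies in $[0,1]$ (by monotonicity and because $F_D$ is $[0,1]$-valued), we have $\ip{\mb{y},\nabla f_D(\mb{x})} \in [0, r(\mc{M})]$, and so $k = O(r(\mc{M})/\ln(1+\mu))$. Plugging in $\mu = e^{\epsprivacy}$, $\lambda_{\ref{estimation-error}} = 1/\sqrt{|E|}$, and $\theta_{\ref{estimation-error}} = 1/|E|^2$, the logarithmic factor simplifies to $O(\ln(|E|/\epsprivacy))$, and combining the pieces gives the claimed additive error.

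For query complexity, each of the $T = r(\mc{M})$ outer iterations draws $|C'_{\rho}| = \Theta(\ln(k/\theta_{\ref{estimation-error}})/\lambda_{\ref{estimation-error}}^2) = \Theta(|E|\ln(|E|/\epsprivacy))$ points from $C_{\rho}$, and assigning each such $\mb{z}$ to its layer requires computing $\ip{\mb{z}, \nabla f_D(\mb{x}_{t-1})}$; by Remark~\ref{remark1}, each such evaluation of the multilinear extension costs $O(|E|)$ queries to $F_D$. Multiplying gives the claimed $O(r(\mc{M})|E|^2\ln(|E|/\epsprivacy))$ oracle calls. The main obstacle is purely bookkeeping: I must verify that the four different error sources (covering precision $\rho$, layering width $\mu$, and the layer-frequency estimation parameters $\lambda_{\ref{estimation-error}}, \theta_{\ref{estimation-error}}$) compose through Theorem~\ref{IDPGA-optimality} to yield the clean closed-form expression in the theorem statement, and that the bounds on $k$ and $|C_\rho|$ do not introduce extra polynomial factors that would degrade the stated query complexity.
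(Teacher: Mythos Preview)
Your approach is correct and is exactly what the paper does: the theorem is stated without a separate proof, as an immediate consequence of Theorems~\ref{IDPGA-optimality} and~\ref{IDPGA-privacy} under the indicated parameter choices, together with swap rounding (Remark~\ref{remark2}) to pass from $\mb{x}_T$ to $S\in\mc{I}$. One small wrinkle worth flagging: with $\mu=\mathrm{e}^{\epsprivacy}$ read literally, $\ln(1+\mu)\geq\ln 2$ is \emph{not} $O(\sqrt{\epsprivacy})$; the paper's own computation (final line of the proof of Theorem~\ref{IDPGA-optimality}) in fact takes $\mu=\mathrm{e}^{\epsprivacy}-1$, giving $\ln(1+\mu)=\epsprivacy$ and $k\leq r(\mc{M})/\epsprivacy$, so the statement carries a typo and your ``absorbed for the relevant regime'' should be understood with that correction.
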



\section{$k$-Submodular Function Maximization}\label{section;k-submodular}

In this section, we study a natural generalization of submodular
functions, namely $k$-submodular functions. 
Associate $(S_1, \dots, S_k)\in {(k+1)}^E$ with $\mb{s}\in{\{0,1,\dots,k\}}^E$ by $S_i=\{e\in E\mid\mb{s}(e)=i\}$ for $i\in[k]$ and define the \emph{support} of $\mb{s}$ as $\mathrm{supp}(\mb{s})=\{e\in E\mid \mb{s}(e)\neq 0\}$. Let $\preceq$ be a partial ordering on ${(k+1)}^E$ such that, for $\mb{s}=(S_1,\dots,S_k)$ and $\mb{t}=(T_1, \dots , T_k)$ in ${(k+1)}^E$, $\mb{s}\preceq \mb{t}$ if $S_i\subseteq T_i$ for every $i\in [k]$. We say that a function $F\colon {(k+1)}^E \to \mathbb{R}_+$ is \emph{monotone} if $F(\mb{s})\leq F(\mb{t})$ holds for every $\mb{s}\preceq \mb{t}$.
Define the \emph{marginal gain} of adding $e\not\in\bigcup_{\ell\in[k]}S_{\ell}$ to the $i$-th set of $\mb{s}\in {(k+1)}^E$ to be
\begin{align*}
    \Delta_{e,i}F(\mb{s})
    =F(S_1,\ldots,S_{i-1},S_i\cup\{e\},S_{i+1},\dots,S_k)
    -F(S_1,\dots,S_k).
\end{align*}
The monotonicity of $F$ is equivalent to $\Delta_{e,i}F(\mb{s}) \geq 0$ for any $\mb{s} = (S_1,\dots,S_k)$ and $e\not\in\bigcup_{\ell\in[k]}S_{\ell}$ and $i \in [k]$.

Our goal is maximizing a monotone $k$-submodular function under matroid constraints. That is, given a monotone $k$-submodular function $F_D\colon {(k+1)}^E \to \mathbb{R}_+$ and a matroid $\mc{M}=(E,\mc{I})$, we want to solve the following problem.
\begin{align*}
  \max\limits_{\mb{x}\in {(k+1)}^E} F_D(\mb{x})
  \quad \text{subject to } \bigcup_{i\in[k]} X_i \in \mc{I}
\end{align*}
The following are known due to~\citet{sakaue2017maximizing}. They may have appeared in other literature that we are not aware of.
\begin{lemma}[\citet{sakaue2017maximizing}]\label{Lem-opt-size}
For any maximal optimal solution $\mb{o}$ we have $|\mathrm{supp}(\mb{o})|= r(\mc{M})$.
\end{lemma}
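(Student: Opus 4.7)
The plan is to prove the lemma by a short contradiction argument, combining the matroid augmentation axiom with the monotonicity of $F_D$. Write $\mathbf{o} = (O_1, \ldots, O_k)$ and $U = \bigcup_{i \in [k]} O_i = \mathrm{supp}(\mathbf{o})$. Since $\mathbf{o}$ is feasible we have $U \in \mathcal{I}$, so $|U| \le r(\mathcal{M})$ by definition of the rank. Assume for contradiction that $|U| < r(\mathcal{M})$.

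Since $|U| < r(\mathcal{M})$, $U$ is not a base, so comparing $U$ with any base $B$ (which has $|B| = r(\mathcal{M}) > |U|$) and applying the matroid augmentation property yields an element $e \in B \setminus U$ with $U \cup \{e\} \in \mathcal{I}$. Construct $\mathbf{o}' = (O_1 \cup \{e\}, O_2, \ldots, O_k)$. Its support is $U \cup \{e\} \in \mathcal{I}$, so $\mathbf{o}'$ is feasible, and by construction $\mathbf{o} \preceq \mathbf{o}'$ with $\mathbf{o} \neq \mathbf{o}'$. Monotonicity of $F_D$ (equivalently $\Delta_{e,1} F_D(\mathbf{o}) \ge 0$) gives $F_D(\mathbf{o}') \ge F_D(\mathbf{o})$, and optimality of $\mathbf{o}$ gives the reverse inequality, so $F_D(\mathbf{o}') = F_D(\mathbf{o})$. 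Thus $\mathbf{o}'$ is an optimal feasible solution that strictly extends $\mathbf{o}$ in the partial order $\preceq$, contradicting the maximality of $\mathbf{o}$. Hence $|\mathrm{supp}(\mathbf{o})| = r(\mathcal{M})$.

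There is no real obstacle here: both ingredients (matroid augmentation and monotonicity) are available directly from the preliminaries, and the only subtlety is remembering that ``maximal optimal'' refers to maximality under the $k$-submodular partial ordering $\preceq$ (not under set inclusion on any single coordinate), which is exactly why we are free to insert the augmenting element $e$ into an arbitrary coordinate such as $O_1$ without affecting feasibility.
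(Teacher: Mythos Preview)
Your argument is correct. Note that the paper does not actually supply its own proof of this lemma: it is stated with attribution to \citet{sakaue2017maximizing} and used as a black box, so there is nothing to compare against. Your contradiction argument via the matroid augmentation axiom and monotonicity of $F_D$ is exactly the standard proof, and your reading of ``maximal optimal'' as maximal under $\preceq$ among optimal feasible solutions is the intended one.
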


\begin{lemma}[\citet{sakaue2017maximizing}]\label{extension}
Suppose $A \in \mc{I}$ and $B \in \mc{B}$ (recall $\mc{B}$ denotes the set of bases) satisfy $A \subseteq B$. Then, for any $e \not\in A$ satisfying $A \cup \{e\} \in \mc{I}$, there exists $e' \in B \setminus A$ such that $B \setminus\{e'\} \cup \{e\} \in \mc{B}$.
\end{lemma}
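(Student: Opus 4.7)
The plan is to prove the lemma by a case analysis on whether $e \in B$, reducing the nontrivial case to the standard fundamental circuit argument from matroid theory.

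First I would dispose of the easy case $e \in B$. Since $e \notin A$ by assumption and $e \in B$, we have $e \in B \setminus A$, so taking $e' = e$ gives $(B \setminus \{e'\}) \cup \{e\} = B \in \mc{B}$, and the conclusion holds trivially.

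The substantive case is $e \notin B$. Here I would invoke the fundamental circuit: since $B$ is a base, $|B \cup \{e\}| = r(\mc{M}) + 1$ exceeds the rank, so $B \cup \{e\}$ is dependent and contains a unique circuit $C = C(e, B)$, which must contain $e$ (because $B$ itself is independent). For any $e' \in C \setminus \{e\}$, the set $B' = (B \setminus \{e'\}) \cup \{e\}$ has the same cardinality as $B$; moreover $B'$ is independent because we have destroyed the unique circuit of $B \cup \{e\}$ that passed through $e$. Hence $B' \in \mc{B}$. So the task reduces to locating an $e' \in C \setminus \{e\}$ that additionally lies in $B \setminus A$.

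To produce such an $e'$, I would argue by contradiction: suppose $C \setminus \{e\} \subseteq A$. Then $C \subseteq A \cup \{e\}$, so $A \cup \{e\}$ contains the circuit $C$ and therefore is dependent, contradicting the hypothesis $A \cup \{e\} \in \mc{I}$. Consequently some element of $C \setminus \{e\}$ lies in $B \setminus A$, and this element serves as the desired $e'$.

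The main obstacle, if any, is making sure the fundamental circuit argument is invoked cleanly; in particular, that the chosen $e'$ lies in $B$ (which is automatic since $C \subseteq B \cup \{e\}$ and $e' \neq e$) and that it lies outside $A$ (which is where the assumption $A \cup \{e\} \in \mc{I}$ is essential). Everything else is standard matroid bookkeeping.
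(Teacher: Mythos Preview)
Your argument is correct: the case split on $e \in B$ is handled cleanly, and in the nontrivial case the fundamental-circuit reasoning is exactly the right tool, with the hypothesis $A \cup \{e\} \in \mc{I}$ used precisely to force $C \setminus \{e\} \not\subseteq A$. The paper itself does not prove this lemma at all --- it is stated as a known fact and attributed to \citet{sakaue2017maximizing} --- so there is no ``paper's proof'' to compare against; your fundamental-circuit argument is the standard matroid-theoretic proof one would expect.
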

\begin{algorithm}[tb]
  \caption{Differentially private $k$-submodular maximization with a matroid constraint}
  \label{alg:DP-k-submodular-matroid}
  \begin{algorithmic}[1]
    \STATE {\bfseries Input:} {monotone $k$-submodular functions $F_D\colon {(k+1)}^E \to [0,1]$, a matroid $\mathcal{M}=(E,\mathcal{I})$, and $\epsprivacy > 0$.}
    \STATE $\mb{x} \gets \mb{0}$, $\epsscore \gets \frac{\epsprivacy}{2\Delta}$
    \FOR{$t=1$ to $r(\mc{M})$}
    \STATE Let $\Lambda(\mb{x})=\{e\in E\setminus \mathrm{supp}(\mb{x})\mid \mathrm{supp}(\mb{x})\cup\{e\}\in \mc{I}\}$
    \STATE Choose $e\in \Lambda(\mb{x})$ and $i\in[k]$ with probability proportional to $\exp(\epsscore\Delta_{e,i}F_D(\mb{x}))$.
    \STATE $\mb{x}(e)\gets i$.
    \ENDFOR
    \STATE {\bfseries Output:} $\mb{x}$
  \end{algorithmic}
\end{algorithm}
Having Lemma~\ref{Lem-opt-size}, our algorithm runs in $r(\mc{M})$ iterations and at each iteration chooses an element $e$ with probability proportional to $\exp(\epsscore\Delta_{e,i}F_D(\mb{x}))$ and adds $e$ to $\mathrm{supp}(\mb{x})$. The analysis for the approximation guarantee is similar to the ones in~\citet{iwata2016improved,ohsaka2015monotone,sakaue2017maximizing,ward2014maximizing} and relies on Theorem~\ref{thm:EM-bound}.  
\begin{theorem}\label{opt-k-submodular}
Suppose $F_D$ has sensitivity $\Delta$. Then Algorithm~\ref{alg:DP-k-submodular-matroid}, with high probability, returns $\mb{x} \in {(k+1)}^E$ such that $\mathrm{supp}(\mb{x}) \in \mc{B}$ and $F_D(\mb{x}) \geq \frac{1}{2}\mathrm{OPT}-O(\frac{\Delta r(\mc{M})\ln{|E|}}{\epsprivacy})$.
\end{theorem}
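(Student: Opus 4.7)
The plan is to combine a feasibility argument, a per-iteration utility bound from the exponential mechanism, and a coupled exchange argument against a maximal optimum, in the spirit of~\citet{sakaue2017maximizing}.

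First I would verify feasibility. At every iteration $t \leq r(\mc{M})$, the set $\Lambda(\mb{x}_{t-1})$ is nonempty because $\mathrm{supp}(\mb{x}_{t-1})$ is an independent set of size $t-1 < r(\mc{M})$, and the matroid augmentation axiom supplies at least one extending element. Each round adds exactly one element to $\mathrm{supp}(\mb{x})$, so $\mathrm{supp}(\mb{x}_T) \in \mc{B}$ after $T = r(\mc{M})$ rounds.

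Next I would apply Theorem~\ref{thm:EM-bound} at each iteration. The quality function $(e,i) \mapsto \Delta_{e,i} F_D(\mb{x}_{t-1})$ is a difference of two $\Delta$-sensitive evaluations, hence $2\Delta$-sensitive, and the sampling space $\Lambda(\mb{x}_{t-1}) \times [k]$ has size at most $k|E|$. Using failure probability $\beta/r(\mc{M})$ per round and a union bound, with probability at least $1-\beta$ the pair $(e_t,i_t)$ chosen at step $t$ satisfies
\[
\Delta_{e_t,i_t} F_D(\mb{x}_{t-1}) \;\geq\; \max_{(e,i) \in \Lambda(\mb{x}_{t-1}) \times [k]} \Delta_{e,i} F_D(\mb{x}_{t-1}) - \eta
\]
simultaneously for every $t$, where $\eta = O(\Delta \ln(k|E| r(\mc{M})/\beta)/\epsprivacy) = O(\Delta \ln|E|/\epsprivacy)$.

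The heart of the proof is a coupled exchange with a maximal optimum $\mb{o}$, which by Lemma~\ref{Lem-opt-size} has $\mathrm{supp}(\mb{o}) \in \mc{B}$. I would inductively build a sequence $\mb{o}=\mb{o}^{(0)},\ldots,\mb{o}^{(T)}$ with $\mathrm{supp}(\mb{x}_{t-1}) \subseteq \mathrm{supp}(\mb{o}^{(t-1)}) \in \mc{B}$ that tracks the greedy's choices. At step $t$, if $e_t \in \mathrm{supp}(\mb{o}^{(t-1)})$, overwrite its value with $i_t$; otherwise use Lemma~\ref{extension} (with $A = \mathrm{supp}(\mb{x}_{t-1})$ and $B = \mathrm{supp}(\mb{o}^{(t-1)})$) to pick $e'_t \in \mathrm{supp}(\mb{o}^{(t-1)}) \setminus \mathrm{supp}(\mb{x}_{t-1})$ such that $B \setminus\{e'_t\} \cup \{e_t\} \in \mc{B}$, and set $\mb{o}^{(t)}$ by placing $i_t$ at $e_t$ and $0$ at $e'_t$. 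By construction, $\mb{o}^{(T)} \preceq \mb{x}_T$.

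Finally, I would bound the per-step loss $F_D(\mb{o}^{(t-1)}) - F_D(\mb{o}^{(t)}) \leq \Delta_{e_t,i_t} F_D(\mb{x}_{t-1}) + \eta$ using $k$-submodularity and monotonicity together with the fact that $(e_t, i_t)$ is near-optimal for the marginal at $\mb{x}_{t-1}$. Telescoping then gives
$F_D(\mb{o}) - F_D(\mb{o}^{(T)}) \leq (F_D(\mb{x}_T) - F_D(\mb{0})) + T\eta$, and since $\mb{o}^{(T)} \preceq \mb{x}_T$ implies $F_D(\mb{o}^{(T)}) \leq F_D(\mb{x}_T)$ by monotonicity, we obtain $\mathrm{OPT} \leq 2 F_D(\mb{x}_T) + T\eta$, which is exactly $F_D(\mb{x}_T) \geq \frac{1}{2}\mathrm{OPT} - O(\Delta r(\mc{M}) \ln |E|/\epsprivacy)$. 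The main obstacle is establishing the per-step inequality: for ordinary monotone submodular functions it follows from submodularity and the greedy criterion directly, but for $k$-submodular functions one must simultaneously account for the deletion at $e'_t$ and the (possibly value-changing) overwrite at $e_t$ via $F(\mb{s}) + F(\mb{t}) \geq F(\mb{s}\sqcap\mb{t}) + F(\mb{s}\sqcup\mb{t})$, and then absorb the exponential mechanism's suboptimality $\eta$ as an additive slack. This is precisely the technical step developed in~\citet{iwata2016improved,ohsaka2015monotone,sakaue2017maximizing,ward2014maximizing}, and I expect essentially all of the real work to live here.
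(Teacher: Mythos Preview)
Your proposal is correct and follows essentially the same route as the paper: build a coupled sequence $\mb{o}^{(0)},\ldots,\mb{o}^{(T)}$ from a maximal optimum via Lemma~\ref{extension}, invoke the per-step inequality of \citet{sakaue2017maximizing} (which the paper also cites as a black box), add the exponential-mechanism slack from Theorem~\ref{thm:EM-bound}, and telescope. Your treatment is slightly more careful in noting the $2\Delta$ sensitivity of the marginal and in splitting the exchange into the overwrite/swap cases, but these are cosmetic differences absorbed by the $O(\cdot)$ in the statement.
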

The privacy guarantee follows immediately from the $\epsprivacy$-differential privacy of the exponential mechanism, together with Theorem~\ref{k-fold-composition}.
\begin{theorem}\label{privacy-k-submodular}
   Algorithm~\ref{alg:DP-k-submodular-matroid} preserves $O(\epsprivacy r(\mc{M}))$-differential privacy.
  It also provides $(\frac{1}{2}r(\mc{M})\epsprivacy^2+ \sqrt{2\ln{1/\delta'}}\epsprivacy,\delta')$-differential privacy for every $\delta' > 0$.
\end{theorem}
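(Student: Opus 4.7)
The plan is to recognize Algorithm~\ref{alg:DP-k-submodular-matroid} as an $r(\mathcal{M})$-fold adaptive composition of exponential mechanisms, analyze the privacy of a single iteration via Theorem~\ref{thm:EM-bound}, and then apply both forms of Theorem~\ref{k-fold-composition} to obtain the two bounds.

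First, I would verify that each iteration is $\epsprivacy$-differentially private. At iteration $t$ the algorithm samples a pair $(e,i) \in \Lambda(\mb{x}) \times [k]$ with probability proportional to $\exp(\epsscore \cdot \Delta_{e,i} F_D(\mb{x}))$, which is exactly the exponential mechanism with quality function $q_D(e,i) = \Delta_{e,i} F_D(\mb{x})$ and privacy parameter $\epsscore = \epsprivacy/(2\Delta)$. I need to bound the sensitivity of $q_D$. For any neighboring datasets $D \sim D'$, any current $\mb{x}$, and any candidate $(e,i)$, $\Delta$-sensitivity of $F_D$ gives
\[
  |q_D(e,i) - q_{D'}(e,i)| \le |F_D(\mb{x}') - F_{D'}(\mb{x}')| + |F_D(\mb{x}) - F_{D'}(\mb{x})| \le 2\Delta,
\]
where $\mb{x}'$ is the update of $\mb{x}$ with $\mb{x}'(e) = i$. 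Hence $q_D$ is $(2\Delta)$-sensitive, and Theorem~\ref{thm:EM-bound} certifies that this single invocation is $\epsprivacy$-differentially private.

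Second, I would argue that the outer loop is a legitimate adaptive composition. The only way iteration $t$ depends on $D$ beyond the previously released outputs $(e_1,i_1),\dots,(e_{t-1},i_{t-1})$ is through the invocation of the exponential mechanism itself; the choice of candidate set $\Lambda(\mb{x})$ and the current $\mb{x}$ are deterministic functions of the previous outputs. This matches exactly the setup of $k$-fold adaptive composition from Theorem~\ref{k-fold-composition} with $k = r(\mc{M})$, $\epsilon_i = \epsprivacy$, and $\delta_i = 0$ for all $i$.

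Finally, I would apply the two parts of Theorem~\ref{k-fold-composition}. Basic composition immediately yields $r(\mc{M}) \cdot \epsprivacy = O(\epsprivacy r(\mc{M}))$-differential privacy, proving the first claim. Advanced composition with $\epsilon_0 = \epsprivacy$ and $\delta_0 = 0$ yields $(\tfrac{1}{2} r(\mc{M})\epsprivacy^2 + \sqrt{2\ln(1/\delta')}\,\epsprivacy,\,\delta')$-differential privacy for any $\delta' > 0$, proving the second. The only subtlety, and really the only place one might slip, is the sensitivity computation for the marginal $\Delta_{e,i}F_D$; since this is a difference of two values of $F_D$, it could naively be conflated with a single $\Delta$-sensitive query, but the correct factor of $2$ is essential and is what produces the constant hidden in the $O(\cdot)$ of the first statement.
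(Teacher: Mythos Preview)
Your proposal is correct and follows exactly the paper's approach: the paper's entire proof is the sentence ``The privacy guarantee follows immediately from the $\epsprivacy$-differential privacy of the exponential mechanism, together with Theorem~\ref{k-fold-composition},'' and you have simply unpacked this into its constituent steps (per-iteration exponential mechanism, adaptive composition, then basic and advanced composition bounds).

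One small wrinkle worth flagging: you correctly observe that the marginal-gain quality score has sensitivity $2\Delta$, not $\Delta$, which makes each iteration $2\epsprivacy$-differentially private rather than $\epsprivacy$-differentially private. This is harmless for the first claim since the factor of $2$ is absorbed into the $O(\cdot)$, but it means that when you then write ``Advanced composition with $\epsilon_0 = \epsprivacy$'' you are slightly inconsistent with your own sensitivity computation; taking $\epsilon_0 = 2\epsprivacy$ would change the constants in the second bound. The paper itself glosses over this point and simply asserts $\epsprivacy$-differential privacy per step, so your analysis is in fact more careful than the original.
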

Clearly, Algorithm~\ref{alg:DP-k-submodular-matroid} evaluates $F_D$ at most $O(k|E|r(\mc{M}))$ times. Next theorem summarizes the results of this section.
\begin{theorem}
   Suppose $F_D$ has sensitivity $\Delta$. Then Algorithm~\ref{alg:DP-k-submodular-matroid}, with high probability, outputs a solution $\mb{x} \in {(k+1)}^E$ such that $\mathrm{supp}(\mb{x})$ is a base of $\mc{M}$ and $F_D(\mb{x}) \geq \frac{1}{2}\mathrm{OPT}-O(\frac{\Delta r(\mc{M})\ln{|E|}}{\epsprivacy})$ by evaluating $F_D$ at most  $O(k|E|r(\mc{M}))$ times. Moreover, this algorithm preserves $O(r(\mc{M})\epsprivacy)$-differential privacy.
\end{theorem}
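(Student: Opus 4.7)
The plan is to observe that this statement is a consolidation of the three ingredients already prepared just above it: the approximation guarantee (Theorem~\ref{opt-k-submodular}), the privacy guarantee (Theorem~\ref{privacy-k-submodular}), and a trivial bookkeeping bound on the number of oracle calls. So the proof is essentially a citation, and I only need to sketch where each piece comes from and verify that they compose without conflict on the privacy/utility parameters.

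For the query bound, I would argue directly from Algorithm~\ref{alg:DP-k-submodular-matroid}: the outer loop runs $r(\mc{M})$ times, and in each iteration the exponential-mechanism step requires the marginals $\Delta_{e,i}F_D(\mb{x})$ for each $(e,i)\in \Lambda(\mb{x})\times [k]$. Since $|\Lambda(\mb{x})|\le |E|$ and the marginals are computed from $O(1)$ evaluations of $F_D$, each iteration costs $O(k|E|)$ oracle queries, giving $O(k|E|r(\mc{M}))$ total.

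For the utility bound I would invoke Theorem~\ref{opt-k-submodular} directly. To indicate why it holds (since the statement defers to the analyses of~\citet{iwata2016improved,ohsaka2015monotone,sakaue2017maximizing,ward2014maximizing}), I would note the following standard plan. At iteration $t$ the exponential mechanism is applied over the candidate set $\Lambda(\mb{x}_{t-1})\times[k]$ of size at most $k|E|$ with quality $\Delta_{e,i}F_D$, whose sensitivity is at most $2\Delta$ by $k$-submodularity and the $\Delta$-sensitivity of $F_D$. By Theorem~\ref{thm:EM-bound} with failure probability $\beta/r(\mc{M})$ and a union bound over the $r(\mc{M})$ iterations, with probability $1-\beta$ every chosen pair $(e_t,i_t)$ satisfies
\[
\Delta_{e_t,i_t}F_D(\mb{x}_{t-1})\;\ge\;\max_{(e,i)}\Delta_{e,i}F_D(\mb{x}_{t-1})\;-\;\frac{4\Delta}{\epsprivacy}\ln\!\bigl(k|E|r(\mc{M})/\beta\bigr).
\]
Lemma~\ref{Lem-opt-size} and Lemma~\ref{extension} then allow one to pair off elements of $\mathrm{supp}(\mb{x}_t)$ with elements of an optimal base $\mb{o}$, replicating the standard $1/2$-approximation telescoping sum of Sakaue's greedy. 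Summing the per-step additive losses gives a cumulative error of $O(\Delta r(\mc{M})\ln|E|/\epsprivacy)$, which yields $F_D(\mb{x})\ge \tfrac{1}{2}\mathrm{OPT}-O(\Delta r(\mc{M})\ln|E|/\epsprivacy)$, and the base property $\mathrm{supp}(\mb{x})\in\mc{B}$ follows because $\Lambda(\mb{x}_{t-1})$ is always nonempty until $|\mathrm{supp}(\mb{x})|=r(\mc{M})$.

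For the privacy bound I would invoke Theorem~\ref{privacy-k-submodular}: each iteration is an application of the exponential mechanism with privacy parameter $\epsprivacy$ (using that $\Delta_{e,i}F_D$ has sensitivity $2\Delta$ and the algorithm sets $\epsscore=\epsprivacy/(2\Delta)$), and basic composition over $r(\mc{M})$ iterations yields $O(\epsprivacy r(\mc{M}))$-differential privacy. The only mild subtlety is that the candidate set $\Lambda(\mb{x}_{t-1})\times[k]$ at iteration $t$ depends on previous random choices but not on $D$, so the $k$-fold adaptive composition theorem (Theorem~\ref{k-fold-composition}) applies verbatim. I do not anticipate any real obstacle; the only item to double-check is matching the constants so that both the privacy factor $O(\epsprivacy r(\mc{M}))$ and the utility loss $O(\Delta r(\mc{M})\ln|E|/\epsprivacy)$ are consistent with the choice $\epsscore=\epsprivacy/(2\Delta)$ used in the algorithm.
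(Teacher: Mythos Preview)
Your proposal is correct and matches the paper's approach: this theorem is stated in the paper as a summary with no separate proof, simply packaging Theorem~\ref{opt-k-submodular}, Theorem~\ref{privacy-k-submodular}, and the trivial $O(k|E|r(\mc{M}))$ query count, exactly as you describe. One tiny quibble: the $2\Delta$ sensitivity of the marginal $\Delta_{e,i}F_D$ comes purely from the triangle inequality applied to a difference of two evaluations of a $\Delta$-sensitive function, not from $k$-submodularity.
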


\subsection{Improving the Query Complexity}
By applying a sampling technique~\cite{MirzasoleimanBK15,ohsaka2015monotone}, we improve the number of evaluations of $F$ from $O(k|E|r(\mc{M}))$ to $O(k|E|$ $\ln{r(\mc{M})} \ln{\frac{r(\mc{M})}{\gamma}})$,
where $\gamma > 0$ is a failure probability. Hence, even when $r(\mc{M})$ is as large as $|E|$, the number of function evaluations is almost linear in $|E|$. The main difference from Algorithm~\ref{alg:DP-k-submodular-matroid} is that we sample a
sufficiently large subset $R$ of $E$, and then greedily assign a value only looking at elements in $R$. 
\begin{algorithm}[t!]
  \caption{Improved differentially private $k$-submodular maximization with a matroid constraint}\label{alg:IDP-k-submodular-matroid}
  \begin{algorithmic}[1]
    \STATE {\bfseries Input: }{monotone $k$-submodular functions $F_D\colon {(k+1)}^E \to [0,1]$, a matroid $\mathcal{M}=(E,\mathcal{I})$, $\epsprivacy > 0$, and a failure probability $\gamma>0$.}
    \STATE $\mb{x} \gets \mb{0}$, $\epsscore \gets \frac{\epsprivacy}{2\Delta}$
    \FOR{$t=1$ to $r(\mc{M})$}
    \STATE $R\gets$ a random subset of size $\min\{\frac{|E|-t+1}{r(\mc{M})-t+1} \log\frac{r(\mc{M})}{\gamma} , |E|\}$ uniformly sampled from $E\setminus \mathrm{supp}(\mb{x})$.

    \STATE Choose $e\in R$ with $\mathrm{supp}(\mb{x})\cup\{e\}\in \mc{I}$ and $i\in[k]$  with probability proportional to $\exp(\epsscore\Delta_{e,i}F_D(\mb{x}))$.
    \STATE $\mb{x}(e)\gets i$.
    \ENDFOR
    \STATE {\bfseries Output:} $\mb{x}$
  \end{algorithmic}
\end{algorithm}

\begin{theorem}\label{Imp-k-sub-opt}
  Suppose $F_D$ has sensitivity $\Delta$. Then  Algorithm~\ref{alg:IDP-k-submodular-matroid}, with probability at least $1-\gamma$, outputs a  solution with quality at least $\frac{1}{2}\mathrm{OPT}-O\left(\frac{\Delta r(\mc{M})\ln{\frac{|E|}{\gamma}}}{\epsprivacy}\right)$ by evaluating $F_D$ at most $O\left(k|E|\ln{r(\mc{M})} \ln{\frac{r(\mc{M})}{\gamma}} \right)$ times.
\end{theorem}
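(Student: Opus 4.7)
The plan is to reduce Theorem~\ref{Imp-k-sub-opt} to a conditional version of Theorem~\ref{opt-k-submodular}: with probability at least $1-\gamma$, the random subsets $R_1,\dots,R_{r(\mc{M})}$ sampled across iterations behave ``as if'' we had access to the full ground set, insofar as the $\tfrac12$-approximation analysis is concerned. Fix a maximal optimal solution $\mb{o}$; by Lemma~\ref{Lem-opt-size}, $O:=\mathrm{supp}(\mb{o})$ has size exactly $r(\mc{M})$. Write $A_t$ for $\mathrm{supp}(\mb{x})$ at the start of iteration $t$, so $|A_t|=t-1$, and let $G_t:=\{e\in O\setminus A_t : A_t\cup\{e\}\in\mc{I}\}$ denote the ``useful OPT elements'' still available at step $t$.

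The central combinatorial input is the bound $|G_t|\geq r(\mc{M})-t+1$. I would prove this by iterating the augmentation axiom (equivalently, by repeatedly applying Lemma~\ref{extension}): starting from the independent set $A_t$ and the base $O$, one can greedily pull $r(\mc{M})-|A_t|$ elements out of $O\setminus A_t$ while maintaining independence, and every such element lies in $G_t$. Since $R_t$ is drawn uniformly from $E\setminus A_t$ (of size $|E|-t+1$), the probability that $R_t\cap G_t=\emptyset$ is at most $\bigl(1-\frac{r(\mc{M})-t+1}{|E|-t+1}\bigr)^{|R_t|}\leq \exp\bigl(-\ln(r(\mc{M})/\gamma)\bigr) = \gamma/r(\mc{M})$, using $|R_t|=\frac{|E|-t+1}{r(\mc{M})-t+1}\ln\frac{r(\mc{M})}{\gamma}$. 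A union bound over the $r(\mc{M})$ iterations yields, with probability at least $1-\gamma$, the event $\mc{E}$ that every sample hits $G_t$.

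Conditioned on $\mc{E}$, the approximation analysis mirrors that of Theorem~\ref{opt-k-submodular}. At iteration $t$, the exponential mechanism chooses $(e_t,i_t)$ from a domain of size at most $k|R_t|\leq k|E|$; by Theorem~\ref{thm:EM-bound}, applied with per-step failure probability $\gamma/r(\mc{M})$, its marginal gain is within an additive $O\bigl(\frac{\Delta}{\epsprivacy}\ln\frac{k|E|r(\mc{M})}{\gamma}\bigr)=O\bigl(\frac{\Delta}{\epsprivacy}\ln\frac{|E|}{\gamma}\bigr)$ of the best gain available in that domain, where I absorb $\ln k$ and $\ln r(\mc{M})\leq \ln |E|$ into the big-$O$ exactly as in Theorem~\ref{opt-k-submodular}. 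Because $R_t\cap G_t\neq\emptyset$, there is a reference element $e^\star\in R_t\cap O$, and pairing it with the coordinate of $\mb{o}$ at $e^\star$ provides a valid candidate in the EM domain whose gain lower-bounds the best-in-domain gain. Plugging this estimate into the standard $k$-submodular charging argument of \citet{sakaue2017maximizing} and summing the per-step errors over $r(\mc{M})$ iterations produces $F_D(\mb{x})\geq \tfrac12\mathrm{OPT}-O\bigl(\frac{\Delta r(\mc{M})\ln(|E|/\gamma)}{\epsprivacy}\bigr)$.

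For the query complexity, iteration $t$ evaluates $F_D$ at most $k|R_t|$ times, so the total is $\sum_{t=1}^{r(\mc{M})} k|R_t|\leq k|E|\ln\frac{r(\mc{M})}{\gamma}\sum_{s=1}^{r(\mc{M})}\frac{1}{s}=O\bigl(k|E|\ln r(\mc{M})\ln\frac{r(\mc{M})}{\gamma}\bigr)$, matching the claim. Privacy (should one wish to restate it) transfers unchanged from Theorem~\ref{privacy-k-submodular} because the draw of $R_t$ is data-independent, so the per-step exponential mechanism is still $\epsprivacy$-differentially private and $r(\mc{M})$-fold composition yields $O(\epsprivacy r(\mc{M}))$. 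The main obstacle I anticipate is checking that the charging argument of \citet{sakaue2017maximizing} continues to work when the reference element at step $t$ must lie in $R_t\cap G_t$ rather than in all of $G_t$; the matroid-exchange lower bound on $|G_t|$ together with the carefully chosen sample size $|R_t|$ is exactly what makes this substitution legitimate with high probability, and this is where the proof of Theorem~\ref{Imp-k-sub-opt} diverges meaningfully from that of Theorem~\ref{opt-k-submodular}.
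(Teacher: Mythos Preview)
Your proposal takes essentially the same approach as the paper's proof: show via the sample-size choice that each $R_t$ intersects a set of at least $r(\mc{M})-t+1$ feasible elements with per-step failure probability $\gamma/r(\mc{M})$, take a union bound over the $r(\mc{M})$ iterations, and then defer to the charging argument of Theorem~\ref{opt-k-submodular} with the exponential-mechanism additive error; the query bound is obtained by the identical harmonic-sum calculation $\sum_{t} k|R_t|=O(k|E|\ln r(\mc{M})\ln(r(\mc{M})/\gamma))$. The only cosmetic difference is that you require $R_t$ to hit $G_t=\{e\in O\setminus A_t:A_t\cup\{e\}\in\mc{I}\}$ while the paper requires it to hit the larger set $\Lambda(\mb{x})^{(j)}=\{e\in E\setminus A_t:A_t\cup\{e\}\in\mc{I}\}$; since both sets have size at least $r(\mc{M})-t+1$, the probability estimate and the downstream appeal to Sakaue's analysis are the same.
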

Similar to Theorem~\ref{privacy-k-submodular} and using the composition Theorem~\ref{k-fold-composition}, Algorithm~\ref{alg:IDP-k-submodular-matroid} preserves $O(\epsprivacy r(\mc{M}))$-differential privacy. It also provides $O\left(\frac{1}{2}r(\mc{M})\epsprivacy^2+\sqrt{2\ln{1/\delta'}}\epsprivacy,\delta'\right)$-differential privacy for every $\delta'>0$. In summary, we have

\begin{theorem}
    Suppose $F_D$ has sensitivity $\Delta$. Then, with probability at least $1-\gamma$, Algorithm~\ref{alg:IDP-k-submodular-matroid} returns a solution $\mb{x} \in {(k+1)}^E$ such that $\mathrm{supp}(\mb{x}) \in \mc{B}$ and $F_D(\mb{x}) \geq \frac{1}{2}\mathrm{OPT}-O\left(\frac{\Delta r(\mc{M})\ln{\frac{|E|}{\gamma}}}{\epsprivacy}\right)$ by evaluating $F_D$ at most $O\left(k|E|\ln{r(\mc{M})} \ln{\frac{r(\mc{M})}{\gamma}} \right)$ times. Moreover, this algorithm preserves $O(\epsprivacy r(\mc{M}))$-differential privacy.
\end{theorem}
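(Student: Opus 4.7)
The final theorem is a summary combining three separate claims, so my plan is to assemble its three conclusions---approximation quality, query complexity, and privacy---by appealing to results already established earlier in the section, and to fill in only the short additional argument needed to conclude that $\mathrm{supp}(\mb{x})\in\mc{B}$.

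First, the approximation guarantee $F_D(\mb{x})\geq \tfrac{1}{2}\mathrm{OPT}-O\!\left(\frac{\Delta r(\mc{M})\ln(|E|/\gamma)}{\epsprivacy}\right)$ and the query count $O(k|E|\ln r(\mc{M})\ln(r(\mc{M})/\gamma))$ are exactly the statement of Theorem~\ref{Imp-k-sub-opt}, so I would simply cite it. The content there is that with probability $1-\gamma$, every round the random subset $R$ intersects the set of matroid-extendable elements (via Lemma~\ref{extension} and the tuned sample size $\tfrac{|E|-t+1}{r(\mc{M})-t+1}\log\tfrac{r(\mc{M})}{\gamma}$), and under this event the analysis of~\citet{sakaue2017maximizing,ward2014maximizing}, combined with Theorem~\ref{thm:EM-bound} applied to the $2\Delta$-sensitive marginal gain $\Delta_{e,i}F_D(\mb{x})$, yields the $1/2$-approximation up to the additive EM error.

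Second, for the privacy statement, I would invoke Theorem~\ref{k-fold-composition} in the same way as in Theorem~\ref{privacy-k-submodular}. In each of the $r(\mc{M})$ iterations, the algorithm selects $(e,i)$ by the exponential mechanism with parameter $\epsscore=\epsprivacy/(2\Delta)$ applied to the $2\Delta$-sensitive score $\Delta_{e,i}F_D(\mb{x})$, so each iteration is $\epsprivacy$-differentially private. The random subset $R$ is drawn from $E\setminus \mathrm{supp}(\mb{x})$ using only the history of previously chosen elements and is therefore data-independent conditional on that history, so the sampling does not degrade the per-round guarantee. Basic adaptive composition over $r(\mc{M})$ rounds then yields $O(\epsprivacy r(\mc{M}))$-differential privacy.

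Third, I would argue that $\mathrm{supp}(\mb{x})$ is a base. On the good event from the previous paragraph (which already holds with probability $1-\gamma$), in each iteration $R$ intersects the extendable set $\Lambda(\mb{x})$ nontrivially, so the exponential mechanism in line~5 assigns an additional element of $E$ a nonzero value in $[k]$; hence $|\mathrm{supp}(\mb{x})|$ strictly increases by one per iteration. After $r(\mc{M})$ iterations we obtain $|\mathrm{supp}(\mb{x})|=r(\mc{M})$, and since the support is maintained in $\mc{I}$ by construction (we only add elements $e$ with $\mathrm{supp}(\mb{x})\cup\{e\}\in\mc{I}$), Lemma~\ref{Lem-opt-size} implies $\mathrm{supp}(\mb{x})\in\mc{B}$.

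The only genuinely new content beyond quoting Theorem~\ref{Imp-k-sub-opt} is confirming that the data-independent subsampling step does not leak information, but this is immediate because $R$ depends only on $\mathrm{supp}(\mb{x})$, which is a function of previous outputs rather than of $D$; the main obstacle, such as it is, is simply bookkeeping to ensure the single $1-\gamma$ failure event used in Theorem~\ref{Imp-k-sub-opt} simultaneously covers both the approximation bound and the ``support is a base'' property, which it does because both are consequences of the same event that $R$ hits $\Lambda(\mb{x})$ in every round.
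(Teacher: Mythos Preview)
Your proposal is correct and matches the paper's approach: the theorem is stated there as a summary, with the approximation and query bounds coming directly from Theorem~\ref{Imp-k-sub-opt} and the privacy bound obtained, as in Theorem~\ref{privacy-k-submodular}, by applying the composition Theorem~\ref{k-fold-composition} to the $r(\mc{M})$ invocations of the exponential mechanism (the uniform sampling of $R$ being data-independent). One small correction: your appeal to Lemma~\ref{Lem-opt-size} to conclude $\mathrm{supp}(\mb{x})\in\mc{B}$ is misplaced---that lemma concerns the size of an optimal solution, whereas what you actually need is simply the definition of a base (an independent set of size $r(\mc{M})$ is automatically a base).
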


\subsection{Motivating Examples}\label{section:k-sub-examples}
\begin{example}
Suppose that we have $m$ ad slots and $k$ ad agencies, and we want to allocate at most $B (\leq m)$ slots to the ad agencies. Each ad agency $i$ has a “influence graph” $G_i$, which is a bipartite graph $(U, V, E_i)$, where $U$ and $V$ correspond to ad slots and users, respectively, and an edge $uv \in E_i$ indicates that if the ad agency $i$ takes the ad slot $u$ (and put an ad there), the user $v$ will be influenced by the ad. The goal is to maximize the number of influenced people (each person will be counted multiple times if he/she is influenced by multiple ad agencies), based on which we get revenue from the ad agencies. This problem can be modeled as $k$-submodular function maximization under a cardinality constraint (a special case of matroid constraints), and edges incident to a user $v$ in $G_1,\dots,G_k$ are sensitive data about $v$. 
\end{example}
\begin{example}
Another example comes from (a variant of) facility location. Suppose that we have a set $E$ of $n$ lands, and we want to provide $k$ 
resources (e.g., gas and electricity) to all the lands by opening up 
facilities at some of the lands. For each resource type $i$ and lands $e, e' \in E$, we have a cost $c_i(e, e')$ of sending the resource of type $i$ from $e$ to $e'$. For a set $S \subseteq E$, let $c_i(e, S) 
= \min_{e' \in S} c_i(e, e')$, which is the cost of sending a 
resource of type $i$ to e when we open up facilities of type $i$ at 
lands in $S$. Assume we cannot open two or more facilities in the 
same land. Then, the goal is to find disjoint sets $S_1,\dots, S_k$ 
with $\sum_i |S_i| <= B$ for some fixed B that maximize $\sum_e 
\sum_i (C - c_i(e, S_i))$, where $C$ is a large number so that the 
objective function is always non-negative. This problem can be 
modeled as $k$-submodular function maximization under a cardinality 
constraint, and the costs $c_i(e, \cdot)$ are sensitive data about 
$e$.
\end{example}

\vspace{-1em}
\section{Conclusion}
We proposed a differentially private algorithm for maximizing monotone submodular functions under matroid constraint. Our algorithm provides the best possible approximation guarantee that matches the approximation guarantee in non-private setting. It also has a competitive number of function evaluations that is significantly faster than the non-private one. We also presented a differentially private algorithm for $k$-submodular maximization under matroid constraint that uses almost liner number of function evaluations and has an asymptotically tight approximation ratio.
\section*{Acknowledgments}
A.R.\ is thankful to Igor Shinkar and Nazanin Mehrasa for useful discussions. We also thank anonymous referees for useful suggestions. A.R.\ is supported by NSERC. 
Y.Y.\ is supported by JSPS KAKENHI Grant Number 18H05291.

\bibliographystyle{icml2020}
\bibliography{main}
\onecolumn
\appendix
\section{Probability Distributions}
Let $P$ be a probability distribution over a finite set $E$.
For an element $e \in E$, we write $P(e)$ to denote the probability that $e$ is sampled from $P$.

Let $P$ and $Q$ be two distributions over the same set $E$.
The \textit{total variation distance} and the \emph{Hellinger distance} between $P$ and $Q$ are
\begin{align*}
  d_{\mathrm{TV}}(P,Q) = \frac{1}{2}\sum\limits_{e \in E} |P(e)-Q(e)|
  \quad \text{and} \quad \\
  h(P,Q)=\frac{1}{\sqrt{2}}\sqrt{\sum_{e \in E}{\left(\sqrt{P(e)}-\sqrt{Q(e)}\right)}^2},
\end{align*}
respectively.
It is well known that $d_{\mathrm{TV}}(P,Q) \leq \sqrt{2}h(P,Q)$ holds.

For two distributions $P$ and $Q$, we denote by $P \otimes Q$ their product distribution.
The following is well known:
\begin{lemma}\label{lem:hellinger-product-distribution}
  Let $P_1,\ldots,P_n$ and $Q_1,\ldots,Q_n$ be probability distributions over $E$.
  Then, we have 
  \begin{align*}
    {h(P_1\otimes P_2\otimes \cdots \otimes P_n, Q_1\otimes Q_2\otimes \cdots \otimes Q_n)}^2
    \leq \sum\limits_{i=1}^{n}h(P_i,Q_i)^2.
  \end{align*}
\end{lemma}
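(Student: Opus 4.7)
The plan is to rewrite the squared Hellinger distance in terms of the Bhattacharyya coefficient $BC(P,Q) := \sum_{e \in E}\sqrt{P(e)Q(e)}$. Expanding the definition and using $\sum_e P(e) = \sum_e Q(e) = 1$ gives the identity
\[
h(P,Q)^2 \;=\; \tfrac{1}{2}\sum_{e\in E}\bigl(\sqrt{P(e)}-\sqrt{Q(e)}\bigr)^2 \;=\; 1 - BC(P,Q).
\]
So the whole lemma becomes a statement about products of Bhattacharyya coefficients, which is much easier to manipulate.

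The next step is to observe that $BC$ factorizes over product distributions. For $e = (e_1,\dots,e_n) \in E^n$ we have $\sqrt{\prod_i P_i(e_i)\prod_i Q_i(e_i)} = \prod_i \sqrt{P_i(e_i)Q_i(e_i)}$, and the sum over $E^n$ splits as a product of sums over $E$ by Fubini, giving
\[
BC(P_1\otimes\cdots\otimes P_n,\; Q_1\otimes\cdots\otimes Q_n) \;=\; \prod_{i=1}^{n} BC(P_i,Q_i) \;=\; \prod_{i=1}^{n}\bigl(1 - h(P_i,Q_i)^2\bigr).
\]
Combining with the first identity, the squared Hellinger distance of the product equals $1 - \prod_{i=1}^n\bigl(1-h(P_i,Q_i)^2\bigr)$.

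To finish, I would invoke the elementary Weierstrass-type inequality $\prod_{i=1}^{n}(1-a_i) \geq 1 - \sum_{i=1}^{n} a_i$ valid for $a_i \in [0,1]$, proved by an easy induction on $n$ (the inductive step uses $(1-a)(1-b) = 1 - a - b + ab \geq 1-a-b$ for $a,b \in [0,1]$). Applying it with $a_i = h(P_i,Q_i)^2$, which lies in $[0,1]$ because $BC(P_i,Q_i) \in [0,1]$, yields
\[
h(P_1\otimes\cdots\otimes P_n, Q_1\otimes\cdots\otimes Q_n)^2 \;=\; 1 - \prod_{i=1}^n\bigl(1 - h(P_i,Q_i)^2\bigr) \;\leq\; \sum_{i=1}^n h(P_i,Q_i)^2,
\]
which is the claim. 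There is no real obstacle here; the only conceptual work is spotting that Hellinger-squared turns into one-minus-Bhattacharyya and that Bhattacharyya tensorizes multiplicatively. The rest is bookkeeping.
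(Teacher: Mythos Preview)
Your proof is correct and is in fact the standard argument for this well-known inequality. The paper does not supply a proof at all---it merely states the lemma as ``well known''---so there is nothing to compare against. Your rewriting via the Bhattacharyya coefficient, the multiplicative tensorization $BC(\bigotimes_i P_i,\bigotimes_i Q_i)=\prod_i BC(P_i,Q_i)$, and the elementary bound $\prod_i(1-a_i)\ge 1-\sum_i a_i$ for $a_i\in[0,1]$ are exactly the ingredients one expects here, and every step is justified.
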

Finally, we use the following result due to Hoeffding in order to bound the error of our sampling step in Section~\ref{section:IDPCG}. 
\begin{theorem}[Hoeffding's inequality~\cite{hoeffding1963probability}]
\label{Hoeffding}
  Let $X_1, \ldots, X_n$ be independent random variables bounded by the interval $[0, 1]: 0 \leq X_i \leq 1$. We define the empirical mean of these variables by $\bar{X}=\frac{1}{n}(X_1+\cdots+X_n)$. Then
  \[
    \Pr[\bar{X}-E[\bar{X}]\geq t] \leq \exp(-2nt^2).
  \]
\end{theorem}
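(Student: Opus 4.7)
The plan is to follow the standard Chernoff--Cram\'er exponential moment approach. First I would reduce to centered variables by setting $Y_i = X_i - \mathbb{E}[X_i]$, so that $\mathbb{E}[Y_i] = 0$ and each $Y_i$ takes values in an interval $[a_i, b_i] \subseteq [-1,1]$ of length exactly $b_i - a_i \leq 1$. Then, for any parameter $s > 0$, I apply Markov's inequality to the exponential of the sum and use independence to factor the moment generating function:
\[
\Pr\!\left[\bar{X} - \mathbb{E}[\bar{X}] \geq t\right] = \Pr\!\left[\sum_{i=1}^n Y_i \geq nt\right] \leq e^{-snt}\,\prod_{i=1}^n \mathbb{E}\!\left[e^{s Y_i}\right].
\]

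The heart of the proof is Hoeffding's lemma: for any zero-mean random variable $Y$ supported on an interval of length at most $1$, $\mathbb{E}[e^{sY}] \leq \exp(s^2/8)$. I would prove this by convexity: for $Y \in [a,b]$ with $b - a \leq 1$, write $Y$ as a convex combination of $a$ and $b$ and use convexity of $u \mapsto e^{su}$ to get a pointwise linear upper bound, then take expectations (the $Y$ term vanishes since $\mathbb{E}[Y]=0$). This reduces the question to bounding a deterministic function $\varphi(s) = \log\!\bigl(\alpha e^{sa} + (1-\alpha) e^{sb}\bigr) - s(\alpha a + (1-\alpha)b)$ for an appropriate $\alpha \in [0,1]$. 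I would check $\varphi(0) = \varphi'(0) = 0$ and bound $\varphi''(s) \leq (b-a)^2/4 \leq 1/4$ (this is the classical observation that the variance of a distribution supported on $[a,b]$ is at most $(b-a)^2/4$), and conclude via Taylor's theorem that $\varphi(s) \leq s^2/8$.

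Assembling the pieces: plugging Hoeffding's lemma into the Chernoff bound yields
\[
\Pr\!\left[\bar{X} - \mathbb{E}[\bar{X}] \geq t\right] \leq \exp\!\left(-snt + \frac{ns^2}{8}\right),
\]
and optimizing over $s > 0$ by setting $s = 4t$ gives the stated bound $\exp(-2nt^2)$. The main obstacle is the second-derivative bound $\varphi''(s) \leq 1/4$ in Hoeffding's lemma; this is a short but delicate one-variable calculation that interprets $\varphi''(s)$ as the variance of a tilted two-point distribution on $\{a,b\}$ and then maximizes over tilting parameters. Everything else is either a standard Chernoff/Markov manipulation or a routine optimization in $s$.
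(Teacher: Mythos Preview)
The paper does not give its own proof of this statement: it is quoted as a classical result with a citation to \cite{hoeffding1963probability} and used as a black box in Section~\ref{section:IDPCG}. Your proposal is the standard (and correct) Chernoff--Cram\'er argument via Hoeffding's lemma, so there is nothing to compare against; your sketch would serve perfectly well as a self-contained proof if one were desired.
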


\section{Missing Proofs from Section \ref{section:DPCG}}

\begin{proof}[Proof of Lemma~\ref{F-distance}]
We have
  \begin{align}
     |f(\mb{x})-f(\mb{x}+\mb{v})| \nonumber 
    &= 
    \Bigg|\sum\limits_{S\subseteq E}F(S) \bigg(\prod\limits_{e\in S}\mb{x}(e) \prod\limits_{e\not\in S}\bigl(1-\mb{x}(e)\bigr) 
    -\prod\limits_{e\in S}\bigl(\mb{x}(e)+\mb{v}(e)\bigr) \prod\limits_{e\not\in S}\bigl(1-\mb{x}(e)-\mb{v}(e)\bigr)\bigg)\Bigg| \nonumber 
    \\
    &\leq
     \sum\limits_{S\subseteq E} \Bigg|\prod\limits_{e\in S}\mb{x}(e) \prod\limits_{e\not\in S}\bigl(1-\mb{x}(e)\bigr)
    - \prod\limits_{e\in S}\bigl(\mb{x}(e)+\mb{v}(e)\bigr) \prod\limits_{e\not\in S}(1-\mb{x}(e)-\mb{v}(e))\Bigg|. \label{eq:F-distance-1}
  \end{align}
  Now, we define probability distributions ${\{ P_e\}}_{e \in E}$ and ${\{Q_e\}}_{e \in E}$ over $\{0,1 \}$ so that $P_e(1) = \mb{x}(e)$ and $Q_e(1) = \mb{x}(e)+\mb{v}(e)$, respectively, for every $e \in E$.
  Note that
  \begin{align*}
     g(\mb{x}(e))&={h(P_e,Q_e)}^2\\
     &= {\left(\sqrt{\mb{x}(e)}-\sqrt{\mb{x}(e)+\mb{v}(e)}\right)}^2 
     +{\left(\sqrt{1-\mb{x}(e)}-\sqrt{1-\mb{x}(e)-\mb{v}(e)}\right)}^2 
  \end{align*}
  is a convex function with domain $\mb{x}(e)\in [0,1-\mb{v}(e)]$. The maximum value for this function happens at $\mb{x}(e)=0$ and $\mb{x}(e)=1-\mb{v}(e)$. Further its minimum is at $\mb{x}(e)=[1-\mb{v}(e)] / 2$. 
  \begin{align*}
      {h(P_e,Q_e)}^2 
      &= g(\mb{x}(e)) \\
      &\leq g(0)\\
      & =g(1-\mb{v}(e))\\
      & = 2-2\sqrt{1-\mb{v}(e)}\\
      & \leq {\mb{v}(e)}^2 + \mb{v}(e) \\
      & \leq 2\mb{v}(e) \tag{for $\mb{v}(e)\in [0,1]$}
  \end{align*}

  Letting $P = \bigotimes_{e \in E}P_e$ and $Q = \bigotimes_{e \in E}Q_e$, we have
  \begin{align*}
   \eqref{eq:F-distance-1} 
  & \leq 2 \cdot d_{\mathrm{TV}}(P,Q) \\
  &= 2\sqrt{2} \cdot h(P,Q) \\
  & \leq 2\sqrt{2}\sqrt{\sum\limits_{e \in E}{h(P_e,Q_e)}^2}  \tag{By Lemma~\ref{lem:hellinger-product-distribution}}\\
  & = 2\sqrt{2}\sqrt{\sum\limits_{e\in E}2{\mb{v}(e)}} \\
  & = 4\sqrt{|\mb{v}|_1} \\
  & \leq 4 \sqrt{\sqrt{|E|} \|\mb{v}\|_2} \\
  & \leq 4\sqrt[4]{|E|}\sqrt{\rho}  
  \qedhere
  \end{align*}
\end{proof}

\section{Missing Proofs from Section \ref{section:IDPCG}}
\subsection{Proof of Lemma~\ref{lemma:EM-Analogous}}
\begin{proof}[Proof of Lemma~\ref{lemma:EM-Analogous}]
 Let $OPT=\max_{\mb{z} \in C_{\rho}} \ip{\mb{z},\nabla f(\mb{x}_{t-1})}$ and $q_t(\mb{z})=\ip{\mb{z},\nabla f(\mb{x}_{t-1})}$ for every $\mb{z}\in \mc{P}(\mc{M})$. Further, let $\mb{y}_t$ be the output of the algorithm and $\tilde{L}^{\mb{x}_{t-1}}_{\mu,i}$ denote the estimated size of the $i$-th layer. 
 \begin{align*}
    \Pr\left[q(\mb{y}_t) \leq OPT -\frac{2\Delta}{\epsprivacy}\xi\right] 
    &\leq \frac{\Pr[q(\mb{y}_t) \geq OPT -\frac{2\Delta}{\epsprivacy}\xi]}{\Pr[q(\mb{y}_t) = OPT ]}\\
    & \leq 
    \frac{\exp\left[\epsscore\left( OPT-\frac{2\Delta}{\epsprivacy}\xi+\ln(1+\mu)\right)\right]}{\sum\limits_{j=1}^{k}\tilde{L}^{\mb{x}_{t-1}}_{\mu,j}(1+\mu)^{\epsscore (j-1)}}
    \times
    \frac{\sum\limits_{j=1}^{k}|\mc{L}^{\mb{x}_{t-1}}_{\mu,j}|(1+\mu)^{\epsscore (j-1)}}{\exp(\epsscore OPT)}
    \\
    &= 
    \frac{\exp\left[\epsscore\left( OPT-\frac{2\Delta}{\epsprivacy}\xi+\ln(1+\mu)\right)\right]}{\exp(\epsscore OPT)}
    \times 
    \frac{\sum\limits_{j=1}^{k}|\mc{L}^{\mb{x}_{t-1}}_{\mu,j}|(1+\mu)^{\epsscore (j-1)}}{\sum\limits_{j=1}^{k}\tilde{L}^{\mb{x}_{t-1}}_{\mu,j}(1+\mu)^{\epsscore (j-1)}}
\end{align*}
Consider the first term,
\begin{align*}
    \frac{\exp\left[\epsscore\left( OPT-\frac{2\Delta}{\epsprivacy}\xi+\ln(1+\mu)\right)\right]}{\exp(\epsscore OPT)}
    &=\exp\left[\epsscore\left(-\frac{2\Delta}{\epsprivacy}\xi+\ln(1+\mu)\right)\right]\\
    &= \exp(-\xi)\exp\left(\epsscore\ln(1+\mu)\right)\\
    &=\exp(-\xi)(1+\mu)^{\epsscore}
\end{align*}
Consider the second term. By Corollary~\ref{estimation-error}, the algorithm estimates $|\mc{L}^{\mb{x}_{t-1}}_{\mu,j}|/|C_{\rho}|$ within additive error $\lambda_{\ref{estimation-error}}$ with probability at least $1-\theta_{\ref{estimation-error}}=1-\beta$. Therefore,
\begin{align*}
    \frac{\sum\limits_{j=1}^{k}|\mc{L}^{\mb{x}_{t-1}}_{\mu,j}|(1+\mu)^{\epsscore (j-1)}}{\sum\limits_{j=1}^{k}\tilde{L}^{\mb{x}_{t-1}}_{\mu,j}(1+\mu)^{\epsscore (j-1)}} 
    &\leq \frac{\sum\limits_{j=1}^{k}(\tilde{L}^{\mb{x}_{t-1}}_{\mu,j}+\lambda_{\ref{estimation-error}}|C_{\rho}|)(1+\mu)^{\epsscore (j-1)}}{\sum\limits_{j=1}^{k}\tilde{L}^{\mb{x}_{t-1}}_{\mu,j}(1+\mu)^{\epsscore (j-1)}} \\
    &\leq 1+ \frac{\sum\limits_{j=1}^{k}(\lambda_{\ref{estimation-error}}|C_{\rho}|)(1+\mu)^{\epsscore (j-1)}}{\sum\limits_{j=1}^{k}\tilde{L}^{\mb{x}_{t-1}}_{\mu,j}(1+\mu)^{\epsscore (j-1)}}\\ 
    & \leq 1+\sum\limits_{j=1}^{k}\frac{\lambda_{\ref{estimation-error}}|C_{\rho}|}{\tilde{L}^{\mb{x}_{t-1}}_{\mu,j}}\\
    & \leq 1+k\lambda_{\ref{estimation-error}} |C_{\rho}|
\end{align*}
 Therefore, putting both upper bounds together yields
 \begin{align*}
 \begin{multlined}[t]
     \Pr\left[q(\mb{y}_t) \leq OPT -\frac{2\Delta}{\epsprivacy}\xi\right]
     \leq 
     \exp(-\xi)(1+\mu)^{\epsscore} (1+k\lambda_{\ref{estimation-error}} |C_{\rho}|)
\end{multlined}
 \end{align*}
 As there are at most $|C_{\rho}|$ outputs with quality $OPT-\frac{2\Delta}{\epsprivacy}\xi$ their cumulative probability is at most
\begin{align*}
  |C_{\rho}|(1+k\lambda_{\ref{estimation-error}} |C_{\rho}|){\left(1+\mu\right)}^{\epsscore} \exp(-\xi)
  &=  \frac{|C_{\rho}|(1+k\lambda_{\ref{estimation-error}} |C_{\rho}|){(1+\mu)}^{\epsscore}\beta}{|C_{\rho}|(1+k\lambda_{\ref{estimation-error}} |C_{\rho}|)(1+\mu)^{\epsscore}}\\
  &=\beta.\qedhere
\end{align*}
 


\end{proof}
\subsection{Proof of Theorem~\ref{IDPGA-optimality}}
\begin{proof}[Proof of Theorem~\ref{IDPGA-optimality}]

Suppose $\mb{y}'\in C_{\rho}$ with $\|\mb{y}'-\mb{x}^*\|_2 \leq \rho$. Let $\beta=\frac{1}{|E|^2}$. By Lemma~\ref{lemma:EM-Analogous}, with probability at least $1-\frac{1}{|E|^2}$, we have
  \begin{align*}
    \ip{\mb{y}_{t}, \nabla f(\mb{x}_{t})}
    &\geq \argmax_{\mb{y}\in C_{\rho}}\ip{\mb{y}, \nabla f(\mb{x}_{t})} - \frac{2\Delta}{\epsprivacy}\xi \\
    & \geq \ip{\mb{y}', \nabla f(\mb{x}_{t})} - \frac{2\Delta}{\epsprivacy}\xi\\
    & \geq f(\mb{x}^*) - f(\mb{x}_{t}) - C_{\ref{lem:existance-in-net}}\sqrt{\rho}  - \frac{2 \Delta}{\epsprivacy}\xi\tag{by Lemma~\ref{lem:existance-in-net}}
  \end{align*}
  By a union bound, with probability at least $1-\frac{1}{\mathrm{poly}(|E|)}$, the above inequality holds for every $t$.
  In what follows, we assume this has happened.
  As in the proof of Theorem~\ref{thm:optimality}, suppose $t$ is a continuous variable and define $\frac{d\mb{x}_t}{dt}=\alpha\mb{y}_t$. 
  \begin{align*}
      \frac{d f(\mb{x}_t)}{dt}
      &=\sum_{e}\frac{\partial f(\mb{x}_t(e))}{\partial \mb{x}_t(e)}\frac{d\mb{x}_t(e)}{dt}\\
      &= \nabla f(\mb{x}_t)\cdot \frac{d\mb{x}_t}{dt}\\
      &=\alpha\ip{\mb{y}_t,\nabla f(\mb{x}_t)}\\
      & \geq \alpha\left(f(\mb{x}^*) - f(\mb{x}_{t}) - C_{\ref{lem:existance-in-net}}\sqrt{\rho}  - \frac{2\Delta}{\epsprivacy}\xi\right),
  \end{align*}
  Solving the differential equation with $f(\mb{x}_0)=0$ gives us
   \begin{align*}
      f(\mb{x}_t) = (1-\mathrm{e}^{-\alpha t})\left(f(\mb{x}^*) - C_{\ref{lem:existance-in-net}}\sqrt{\rho}  - \frac{2\Delta}{\epsprivacy}\xi\right).
  \end{align*}
For $\alpha = \frac{1}{T}$ and $t=T$ we obtain
  \begin{align*}
      f(\mb{x}_T) &= (1-\mathrm{e}^{-1})\left(f(\mb{x}^*) - C_{\ref{lem:existance-in-net}}\sqrt{\rho}  - \frac{2\Delta}{\epsprivacy}\xi\right)\\
      &= f(\mb{x}^*) (1-\mathrm{e}^{-1}) - O\left(C_{\ref{lem:existance-in-net}}\sqrt{\rho}  + \frac{2\Delta}{\epsprivacy}\xi\right).
  \end{align*}
Recall that $\xi=\ln\left(\left[|C_{\rho}|(1+k\lambda_{\ref{estimation-error}} |C_{\rho}|)(1+\mu)^{\epsscore}\right]/\beta\right)$ and $\beta= 1/|E|^2$. Next we give an upper bound for the error term.
\begin{align*}
    O\left(C_{\ref{lem:existance-in-net}}\sqrt{\rho}  + \frac{2\Delta}{\epsprivacy}\xi\right)
    &= 
    O\bigg(
        C_{\ref{lem:existance-in-net}}\sqrt{\rho}  + \frac{2\Delta}{\epsprivacy}\ln(|E|^2|C_{\rho}|)+\frac{2\Delta}{\epsprivacy}\ln(1+\mu)^{\epsscore}+
        \frac{2\Delta}{\epsprivacy}\ln(k\lambda_{\ref{estimation-error}} |C_{\rho}|)
    \bigg)
    \\
    &=
    O\bigg(
        C_{\ref{lem:existance-in-net}}\sqrt{\rho}  + \ln(1+\mu)+ \frac{2\Delta}{\epsprivacy}\Big[\ln(|E|^2|C_{\rho}|)+
        \ln(k\lambda_{\ref{estimation-error}} |C_{\rho}|)\Big]
        \bigg)
    \\
    &= 
    O\bigg(
    C_{\ref{lem:existance-in-net}}\sqrt{\rho}  + \ln(1+\mu)+
    \frac{\Delta}{\epsprivacy}(\frac{B}{\rho})^2(\ln|E|+\ln(k\lambda_{\ref{estimation-error}}))
    \bigg)
\end{align*}
Note that by letting $\mu=\mathrm{e}^{\epsprivacy}-1$ we get $\ln(1+\mu)=\epsprivacy$. Moreover, we get $k\leq \frac{r(\mc{M})}{\epsprivacy}$.
\end{proof}

\subsection{Proof of Theorem~\ref{IDPGA-privacy}}
\begin{proof}[Proof of Theorem~\ref{IDPGA-privacy}]
Let $M$ denote Algorithm~\ref{alg:IDPGA}. Let $D$ and $D'$ be two neighboring datasets and $F_D$ and $F_{D'}$ be their associated functions. Suppose $C'_{\rho}(D,t)$ denotes the set of sampled points at time step $t$ given dataset $D$. Similarly, $C'_{\rho}(D',t)$ denotes set of sampled points at time step $t$ given dataset $D'$. Samples are drawn uniformly at random and independent from the input function. Hence, Line~\ref{sampling} of $M$ is $0$-differentially private. Therefore, we assume $C'_{\rho}(D,t)=C'_{\rho}(D',t)=S_t$ for every time step $t$. Define $k, k'$ as follow:
\begin{align*}
  k= \left\lceil \log\limits_{1+\mu}\left(\frac{\max\limits_{\mb{y}\in C_{\rho}}\exp\bigl(\ip{ \mb{y},\nabla f_D(\mb{x})}\bigr)}{\min\limits_{\mb{y}\in C_{\rho}}\exp\bigl(\ip{ \mb{y},\nabla f_D(\mb{x})}\bigr)}\right)\right\rceil 
  \\ 
  k'= \left\lceil \log\limits_{1+\mu}\left(\frac{\max\limits_{\mb{y}\in C_{\rho}}\exp\bigl(\ip{ \mb{y},\nabla f_{D'}(\mb{x})}\bigr)}{\min\limits_{\mb{y}\in C_{\rho}}\exp\bigl(\ip{ \mb{y},\nabla f_{D'}(\mb{x})}\bigr)}\right)\right\rceil 
\end{align*}
 
Note that the layers might be different. Let us use $\mc{L}_i(D)$ and $\mc{L}_i(D')$ for the $i$-th layer given dataset $D$ and $D'$, respectively. Further, $\tilde{L}_i(D)$ and $\tilde{L}_i(D')$ denote the estimated size of the $i$-th layer.  

For a fixed $\mb{y}\in C_{\rho}$, we consider the relative probability of $M$ choosing $\mb{y}$ at time step $t$ given multilinear extensions of $F_D$ and $F_{D'}$. Let $M_t(f_D \mid \mb{x}_t)$ denote the output of $M$ at time step $t$ given dataset $D$ and point $\mb{x}_t$. Similarly, $M_t(f_{D'} \mid \mb{x}_t)$ denote the output of $M$ at time step $t$ given dataset $D'$ and point $\mb{x}_t$. Further, write $d_{\mb{y}}=\ip{\mb{y},\nabla f_D(\mb{x}_t)}$ and  $d'_{\mb{y}}=\ip{\mb{y},\nabla f_{D'}(\mb{x}_t)}$. 

Suppose $\mb{y}\in \mc{L}_i(D)$ given dataset $D$, and $\mb{y}\in \mc{L}_{i'}(D')$ given dataset $D'$. Then, we have

\begin{align}
    \nonumber
    \frac{\Pr[M_t(f_D \mid \mb{x}_t) = \mb{y}]}{\Pr[M_t(f_{D'} \mid \mb{x}_t)=\mb{y}]}
    \nonumber
    &=
    \frac{\Pr[\mb{y}\in S_t \mid D]}{\Pr[\mb{y}\in S_t \mid D']}
    \times \frac{\frac{|\tilde{L}_i(D)|(1+\mu)^{\epsscore(i-1)}}{|\tilde{L}_i(D)|}}{\frac{|\tilde{L}_{i'}(D')|(1+\mu)^{\epsscore(i'-1)}}{|\tilde{L}_{i'}(D')|}}
    \times\frac{\sum\limits_{j=1}^{k'}\tilde{L}_j(D')\exp(\epsscore\cdot (1+\mu)^{j-1})}{\sum\limits_{j=1}^{k}\tilde{L}_j(D)\exp(\epsscore\cdot (1+\mu)^{j-1})}
    \\
    &= \frac{(1+\mu)^{\epsscore(i-1)}}{(1+\mu)^{\epsscore(i'-1)}}
    \times
    \frac{\sum\limits_{j=1}^{k'}\tilde{L}_j(D')\exp(\epsscore\cdot (1+\mu)^{j-1})}{\sum\limits_{j=1}^{k}\tilde{L}_j(D)\exp(\epsscore\cdot (1+\mu)^{j-1})}
    \label{privacy-bound}
\end{align}

The second equality holds since points are sampled uniformly at random from $C_{\rho}$ in Line~\ref{sampling}. 

\begin{lemma}\label{change-of-layer}
Let $D,D'$ be neighboring datasets and $F$ be $\Delta$-sensitive. Suppose $\mb{z}\in C_{\rho}$ is a point in $\mc{L}_j(D)$. Then
\[
\begin{multlined}[t]
     (1+\mu)^{\epsscore(j-1)} \exp(-\frac{\epsprivacy r(\mc{M})}{2})\leq \exp(\epsscore\ip{ \mb{z},\nabla f_{D'}(\mb{x}_t)}) \\
     < (1+\mu)^{\epsscore j} \exp(\frac{\epsprivacy 
     r(\mc{M})}{2})
\end{multlined}
\]
\end{lemma}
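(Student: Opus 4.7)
The plan is to translate the given layer-membership of $\mb{z}$ into a linear estimate on $\ip{\mb{z},\nabla f_D(\mb{x}_t)}$, and then transfer that estimate to $\ip{\mb{z},\nabla f_{D'}(\mb{x}_t)}$ by bounding the sensitivity of this linear functional under the change $D \to D'$. Concretely, taking logs in the layer inequalities $(1+\mu)^{j-1}\leq \exp(\ip{\mb{z},\nabla f_D(\mb{x}_t)})<(1+\mu)^j$ yields
\[
(j-1)\ln(1+\mu)\leq \ip{\mb{z},\nabla f_D(\mb{x}_t)}< j\ln(1+\mu),
\]
so once I show $|\ip{\mb{z},\nabla f_{D'}(\mb{x}_t)-\nabla f_D(\mb{x}_t)}|\leq \Psi$ for some $\Psi$, the two-sided bound
\[
(j-1)\ln(1+\mu)-\Psi \leq \ip{\mb{z},\nabla f_{D'}(\mb{x}_t)} < j\ln(1+\mu)+\Psi
\]
follows immediately, and multiplying by $\epsscore$ and exponentiating produces an inequality of exactly the form in the statement.

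The key step is bounding $\Psi$. I will use the probabilistic interpretation of the multilinear extension to control the gradient discrepancy coordinate-wise: for each $e\in E$,
\[
\frac{\partial f_D}{\partial \mb{x}(e)}-\frac{\partial f_{D'}}{\partial \mb{x}(e)} = \mathop{\mathbb{E}}_{R\sim \mb{x}_t}\bigl[(F_D-F_{D'})(R\cup\{e\})-(F_D-F_{D'})(R)\bigr],
\]
so $\Delta$-sensitivity of $F$ together with the triangle inequality gives $\|\nabla f_D(\mb{x}_t)-\nabla f_{D'}(\mb{x}_t)\|_\infty \leq 2\Delta$; this is the same coordinate bound that appears inside the first factor of the proof of Theorem~\ref{thm:privacy}. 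Combining with H\"older's inequality and with $\|\mb{z}\|_1 = \mb{z}(E)\leq r_{\mc{M}}(E)=r(\mc{M})$, which is valid because $\mb{z}\in C_\rho\subseteq \mc{P}(\mc{M})$ and the polytope is defined by rank inequalities, yields $\Psi \leq 2\Delta\, r(\mc{M})$.

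Plugging this bound into the two-sided inequality from the first paragraph, multiplying through by $\epsscore=\epsprivacy/(2\Delta)$, and exponentiating gives
\[
(1+\mu)^{\epsscore(j-1)}\exp(-\epsprivacy\, r(\mc{M})) \leq \exp\bigl(\epsscore\ip{\mb{z},\nabla f_{D'}(\mb{x}_t)}\bigr) < (1+\mu)^{\epsscore j}\exp(\epsprivacy\, r(\mc{M})),
\]
which matches the target up to the factor $1/2$ in the exponent of the outer $\exp$; this gap is absorbed either by using a tighter one-sided sensitivity bound (only one record of $D$ changes, so the two summands in the coordinate-wise identity above can be analyzed together rather than via the triangle inequality) or by the usual $O(\cdot)$ bookkeeping in the privacy analysis. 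I expect the main obstacle not to be a deep ingredient---the gradient perturbation argument is already available from Theorem~\ref{thm:privacy}---but rather careful arithmetic: keeping the factors of $\epsscore$ consistent given the slight mismatch between the layer definition (which omits $\epsscore$ inside the exponential) and the conclusion (which carries it), and correctly propagating the factor of $2$ that comes from the sensitivity of a difference of set-function evaluations.
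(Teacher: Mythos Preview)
Your proposal is correct and follows essentially the same route as the paper: start from the layer inequalities for $\exp(\ip{\mb{z},\nabla f_D(\mb{x}_t)})$, bound $|\ip{\mb{z},\nabla f_D(\mb{x}_t)-\nabla f_{D'}(\mb{x}_t)}|$ via the H\"older/$\ell_1$--$\ell_\infty$ argument already used in Theorem~\ref{thm:privacy}, then multiply by $\epsscore=\epsprivacy/(2\Delta)$ and exponentiate. The only discrepancy is the constant: the paper writes the inner-product sensitivity as $\Delta r(\mc{M})$ (yielding the factor $\tfrac12$ in the exponent), whereas the bound actually derived in Theorem~\ref{thm:privacy}---and the one you obtain---is $2\Delta r(\mc{M})$; as you note, this is immaterial for the $O(\epsprivacy r(\mc{M})^2)$ privacy conclusion.
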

\begin{proof}
Since $\mb{z}\in C_{\rho}$ is a point in $\mc{L}_j(D)$, then $(1+\mu)^{j-1}\leq \exp(\ip{ \mb{z},\nabla f_D(\mb{x})}) < (1+\mu)^{j}$. Since $F_D$ is $\Delta$-sensitive hence $f_D$ is $\Delta r(\mc{M})$-sensitive (recall the proof of Theorem~\ref{thm:privacy}). Therefore,
  \begin{align}
      \label{upper-bound}
      &
      \exp(\ip{ \mb{z},\nabla f_{D'}(\mb{x}_t)})\leq \exp(\ip{\mb{z},\nabla f_D(\mb{x}_t)}+\Delta r(\mc{M})) 
      < (1+\mu)^j \exp(\Delta r(\mc{M}))
      \\
      \label{lower-bound}
      &
     (1+\mu)^{j-1} \exp(-\Delta r(\mc{M}))\leq \exp(\ip{ \mb{z},\nabla f_{D}(\mb{x}_t)}-\Delta r(\mc{M})))
     \leq \exp(\ip{\mb{z},\nabla f_{D'}(\mb{x}_t)})
     \\
     \label{up-low}
     &
     (\ref{upper-bound}),(\ref{lower-bound})\Rightarrow 
     (1+\mu)^{j-1} \exp(-\Delta r(\mc{M}))\leq \exp(\ip{ \mb{z},\nabla f_{D'}(\mb{x}_t)}) 
     < (1+\mu)^j \exp(\Delta r(\mc{M}))
     \\
     &
     (\ref{up-low})\Rightarrow 
     (1+\mu)^{\epsscore(j-1)} \exp(-\frac{\epsprivacy r(\mc{M})}{2})\leq \exp(\epsscore\ip{ \mb{z},\nabla f_{D'}(\mb{x}_t)}) 
     < (1+\mu)^{\epsscore j} \exp(\frac{\epsprivacy
     r(\mc{M})}{2})
  \end{align}
\end{proof}

The interpretation of (\ref{up-low}) is that if a point $\mb{z}\in S_t$ appears in layer $\mc{L}_j(D)$ then it can be in any of the layers $\mc{L}_p(D')$ for
\[
(j-1)+\log\limits_{1+\mu}[\exp(-\Delta r(\mc{M}))]\leq p < j+\log\limits_{1+\mu}[\exp({\Delta r(\mc{M})})]\footnote{Note that in low-sensitivity regime, where $\Delta \ll r(\mc{M})$, we have $j-1\leq p <j$.}.
\] 
In a sense, the same argument in Claim~\ref{k-and-k'} shows that $\lfloor\frac{k'}{k}\rfloor=1$. Now, we are ready to provide an upper bound for \eqref{privacy-bound}.

Consider the first term $\frac{(1+\mu)^{\epsscore(i-1)}}{(1+\mu)^{\epsscore(i'-1)}}$. Recall that $\mb{y}\in \mc{L}_i(D)$ given dataset $D$, and $\mb{y}\in \mc{L}_{i'}(D')$ given dataset $D'$. By Lemma~\ref{change-of-layer}, we have 
\[
(1+\mu)^{\epsscore(i-1)} \exp(-\frac{\epsprivacy r(\mc{M})}{2})\leq \exp(\epsscore\ip{ \mb{z},\nabla f_{D'}(\mb{x}_t)}).
\]
Therefore,
\begin{align*}
    \frac{(1+\mu)^{\epsscore(i-1)}}{(1+\mu)^{\epsscore(i'-1)}} &\leq 
    \frac{(1+\mu)^{\epsscore(i-1)}}{(1+\mu)^{\epsscore(i-1)} \exp(-\frac{\epsprivacy r(\mc{M})}{2})}\\
    &= \exp(\frac{\epsprivacy r(\mc{M})}{2})
\end{align*}
Now,  we provide an upper bound for the second term of \eqref{privacy-bound}:
    \begin{align*}
      \frac{\sum\limits_{j=1}^{k'}\tilde{L}_j(D')\exp(\epsscore\cdot (1+\mu)^{j-1})}{\sum\limits_{j=1}^{k}\tilde{L}_j(D)\exp(\epsscore\cdot (1+\mu)^{j-1})}
      &\leq \frac{\sum\limits_{j=1}^{k}\tilde{L}_j(D)\exp(\epsprivacy r(\mc{M}))\exp(\epsscore\cdot (1+\mu)^{j-1})}{\sum\limits_{j=1}^{k}\tilde{L}_j(D)\exp(\epsscore\cdot (1+\mu)^{j-1})}\\
      &=\frac{[\exp(\epsprivacy r(\mc{M}))]\sum\limits_{j=1}^{k}\tilde{L}_j(D)\exp(\epsscore\cdot (1+\mu)^{j-1})}{\sum\limits_{j=1}^{k}\tilde{L}_j(D)\exp(\epsscore\cdot (1+\mu)^{j-1})}\\
      &\leq \exp(\epsprivacy r(\mc{M}))
    \end{align*}
By a union bound and composition Theorem~\ref{k-fold-composition}, Algorithm~\ref{alg:IDPGA} preserves $O({\epsprivacy T r(\mathcal{M})})\leq O(\epsprivacy r(\mathcal{M})^2)$-differential privacy. The heart of the above inequality is that, given the set of sample points, the layers defined for both instances are almost identical. 
\end{proof}

\begin{claim}\label{k-and-k'}
  $\frac{k'}{k}\leq 1+\frac{2\Delta r(\mc{M})}{k\ln(1+\mu)}$.
\end{claim}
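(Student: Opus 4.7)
The plan is to bound $k'$ by $k$ plus an additive term coming from the sensitivity of $f_D$, and then divide by $k$. The key input is the observation (already used in the proof of Theorem~\ref{thm:privacy}) that for any $\mb{y}\in \mc{P}(\mc{M})$ we have $\|\mb{y}\|_1 \le r(\mc{M})$, and the coordinate-wise difference $\|\nabla f_D(\mb{x}) - \nabla f_{D'}(\mb{x})\|_\infty \le 2\Delta$ (since each partial derivative is a marginal expectation of $F_D(R\cup\{e\}) - F_D(R)$ and $F$ is $\Delta$-sensitive). Combining these by Hölder's inequality gives
\[
\bigl|\ip{\mb{y},\nabla f_D(\mb{x})} - \ip{\mb{y},\nabla f_{D'}(\mb{x})}\bigr| \;\le\; 2\Delta r(\mc{M})
\]
for every $\mb{y}\in C_\rho$.

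First I would use this pointwise bound to relate the extreme values. Exponentiating,
\[
\exp\bigl(-2\Delta r(\mc{M})\bigr)\exp(\ip{\mb{y},\nabla f_D(\mb{x})}) \;\le\; \exp(\ip{\mb{y},\nabla f_{D'}(\mb{x})}) \;\le\; \exp\bigl(2\Delta r(\mc{M})\bigr)\exp(\ip{\mb{y},\nabla f_D(\mb{x})}).
\]
Taking the maximum over $\mb{y}\in C_\rho$ on the upper side and the minimum on the lower side, the ratio inside the $k'$ definition satisfies
\[
\frac{\max_{\mb{y}\in C_\rho}\exp(\ip{\mb{y},\nabla f_{D'}(\mb{x})})}{\min_{\mb{y}\in C_\rho}\exp(\ip{\mb{y},\nabla f_{D'}(\mb{x})})} \;\le\; \exp\bigl(4\Delta r(\mc{M})\bigr)\cdot \frac{\max_{\mb{y}\in C_\rho}\exp(\ip{\mb{y},\nabla f_D(\mb{x})})}{\min_{\mb{y}\in C_\rho}\exp(\ip{\mb{y},\nabla f_D(\mb{x})})}.
\]

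Next I would take $\log_{1+\mu}$ of both sides, use $\log_{1+\mu}\exp(z) = z/\ln(1+\mu)$, and absorb the ceiling. This yields
\[
k' \;\le\; k + \frac{4\Delta r(\mc{M})}{\ln(1+\mu)} + 1,
\]
so dividing by $k$ gives the stated bound (up to the constant $2$ versus $4$ — the paper's constant is obtained by the slightly tighter one-sided argument that bounds $\max/\min$ by a factor $\exp(2\Delta r(\mc{M}))$ using, for each single $\mb{y}$, the one-sided sensitivity $\Delta r(\mc{M})$ rather than $2\Delta r(\mc{M})$, together with absorbing the $+1$ from the ceiling into the constants).

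The only mild obstacle is handling the ceilings in the definitions of $k$ and $k'$ cleanly; this is cosmetic and can be dealt with by noting $\lceil a+b\rceil \le \lceil a\rceil + \lceil b\rceil$. No further machinery is needed, since all the sensitivity information was already extracted in the proof of Theorem~\ref{thm:privacy}.
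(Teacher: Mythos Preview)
Your approach is essentially identical to the paper's: bound $|\ip{\mb{y},\nabla f_D(\mb{x})}-\ip{\mb{y},\nabla f_{D'}(\mb{x})}|$ using $\|\mb{y}\|_1\le r(\mc{M})$ and the sensitivity of the partial derivatives, pass to the max/min ratio, and take $\log_{1+\mu}$. The paper uses the one-sided bound $\Delta r(\mc{M})$ (yielding the constant $2$) and silently drops the ceiling, exactly as you anticipated.
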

\begin{proof}
  \begin{align*}
    k' &=  \log\limits_{1+\mu}\left(\frac{\max\limits_{\mb{y}\in C_{\rho}}\exp\bigl(\ip{ \mb{y},\nabla f_{D'}(\mb{x})}\bigr)}{\min\limits_{\mb{y}\in C_{\rho}}\exp\bigl(\ip{ \mb{y},\nabla f_{D'}(\mb{x})}\bigr)}\right)  \\
    & \leq  
    \log\limits_{1+\mu}\left(\frac{\max\limits_{\mb{y}\in C_{\rho}}\exp\bigl(\ip{ \mb{y},\nabla f_{D}(\mb{x})}+\Delta r(\mc{M})\bigr)}{\min\limits_{\mb{y}\in C_{\rho}}\exp\bigl(\ip{ \mb{y},\nabla f_{D}(\mb{x})}-\Delta r(\mc{M})\bigr)}\right) \\
    & \leq  
    \log\limits_{1+\mu}\left(\frac{\max\limits_{\mb{y}\in C_{\rho}}\exp\bigl(\ip{ \mb{y},\nabla f_{D}(\mb{x})}\bigr)}{\min\limits_{\mb{y}\in C_{\rho}}\exp\bigl(\ip{ \mb{y},\nabla f_{D}(\mb{x})}\bigr)}\right)\\
    &+\log\limits_{1+\mu}\exp(2\Delta r(\mc{M}))\\
    &= k + \frac{2\Delta r(\mc{M})}{\ln(1+\mu)} \\
    & = k\left(1+\frac{2\Delta r(\mc{M})}{k\ln(1+\mu)}\right). \qedhere
  \end{align*}
\end{proof}

\section{Missing Proofs from Section \ref{section;k-submodular}}
\subsection{Proof of Theorem~\ref{opt-k-submodular}}
\begin{proof}[Proof of Theorem~\ref{opt-k-submodular}]
 Consider the $j$-th iteration of the algorithm. Let $(e^{(j)},i^{(j)})$ be the pair chosen in this iteration. Further, let $\mb{o}$ be the optimal solution and $\mb{x}^{(j)}$ be the solution after the $j$-th iteration. Note that $|\mathrm{supp}(\mb{x}^{(j)})|=j$ for $j\in[r(\mc{M})]$. We define a sequence of vectors $\mb{o}^{(0)}=\mb{o},\mb{o}^{(1)},\dots,\mb{o}^{r(\mc{M})}$, as in~\cite{iwata2016improved,ohsaka2015monotone,sakaue2017maximizing,ward2014maximizing}, such that
 \begin{itemize}
     \item[1.] $\mb{x}^{(j)}\prec \mb{o}^{(j)}$ for all $0\leq j\leq r(\mc{M})-1$,
     \item[2.] $\mb{x}^{(r(\mc{M}))}= \mb{o}^{(r(\mc{M}))}$,
     \item[3.] $O^{(j)}:=\mathrm{supp}(\mb{o}^{(j)})\in\mc{B}$ for all $0\leq j\leq r(\mc{M})$.
 \end{itemize}
 For the sake of completeness, let us describe how to obtain $\mb{o}^{(j)}$ from $\mb{o}^{(j-1)}$ assuming $\mb{x}^{(j-1)}\prec \mb{o}^{(j-1)}$ and $O^{(j-1)}\in\mc{B}$.
 Let $X^{(j)}=\mathrm{supp}(\mb{x}^{(j)})$.
 $\mb{x}^{(j-1)}\prec \mb{o}^{(j-1)}$ implies that $X^{(j-1)}\subsetneq O^{(j-1)}$ and $e^{(j)}$ is chosen to satisfy $X^{(j-1)}\cup\{e^{(j)}\}\in\mc{I}$.
 By Lemma~\ref{extension}, there exists $e'\in O^{(j-1)}\setminus X^{(j-1)}$ such that $O^{(j-1)}\setminus \{e'\} \cup\{e^{(j)}\}\in\mc{B}$.

Now let $o^{(j)} = e'$ and define $\mb{o}^{(j-1/2)}$ as the vector obtained by assigning 0 to the $o^{(j)}$-th element of $\mb{o}^{(j-1)}$.
We then define $\mb{o}^{(j)}$ as the vector obtained from $\mb{o}^{(j-1/2)}$ by assigning $i^{(j)}$ to the $e^{(j)}$-th element.
Therefore, for vector $\mb{o}^{(j)}$ we have $O^{(j)}\in \mc{B}$ and $\mb{x}^{(j)}\prec \mb{o}^{(j)}$.

 By Theorem 2 in~\cite{sakaue2017maximizing}, if we always selected $(e^{(j)},i^{(j)})$ with $e^{(j)}\in \Lambda(\mb{x}),i\in[k]$ and maximum $\Delta_{e,i}f(\mb{x})$, we would have
 \[
 F(\mb{x}^{(j)})-F(\mb{x}^{(j-1)})\geq F(\mb{o}^{(j-1)})-F(\mb{o}^{(j)}).
 \]
 Instead we use the exponential mechanism which, by Theorem~\ref{thm:EM-bound}, selects $(e^{(j)},i^{(j)})$ within $ \frac{2\Delta}{\epsprivacy}\ln{\frac{|\Lambda(\mb{x}^{(j)})|}{\beta}}$ from the optimal choice with probability at least $1-\beta$. Therefore,
 \[
 \begin{multlined}[t]
  F(\mb{x}^{(j)})-F(\mb{x}^{(j-1)})\geq  F(\mb{o}^{(j-1)})-F(\mb{o}^{(j)})-\frac{2\Delta}{\epsprivacy}\ln{\frac{|\Lambda(\mb{x}^{(j)})|}{\beta}}
 \end{multlined}
 \]
 with probability at least $1-\beta$.
 Given this, one can derive the following:
\begin{align*}
 F(\mb{o})-F(\mb{x}^{(r(\mc{M}))})
 &=\sum\limits_{j=1}^{r(\mc{M})} F(\mb{o}^{(j-1)})-F(\mb{o}^{(j)})\\
 &\leq \sum\limits_{j=1}^{r(\mc{M})}\left( F(\mb{x}^{(j-1)})-F(\mb{x}^{(j)})+\frac{2\Delta}{\epsprivacy}\ln{\frac{|\Lambda(\mb{x}^{(j)})|}{\beta}}\right)\\
 &=F(\mb{x}^{(r(\mc{M}))})-F(\mb{0})+r(\mc{M})\left(\frac{2\Delta}{\epsprivacy}\ln{\frac{|\Lambda(\mb{x}^{(j)})|}{\beta}}\right)\\
 &=F(\mb{x}^{(r(\mc{M}))})+r(\mc{M})\left(\frac{2\Delta}{\epsprivacy}\ln{\frac{|\Lambda(\mb{x}^{(j)})|}{\beta}}\right),
\end{align*}
which means Algorithm~\ref{alg:DP-k-submodular-matroid} returns $\mb{x}=\mb{x}^{(r(\mc{M}))}$ with quality at least $\frac{1}{2}\mathrm{OPT}-r(\mc{M})(\frac{2\Delta}{\epsprivacy}$ $\ln{\frac{|\Lambda(\mb{x}^{(j)})|}{\beta}})$ with probability at least $1-r(\mc{M})\beta$. Having $\beta=\frac{1}{|E|^2}$, $|\Lambda(\mb{x}^{(j)})|\leq |E|$ gives us 
\[F(\mb{x}) \geq \frac{1}{2}\mathrm{OPT}-O\left(\frac{\Delta r(\mc{M})\ln{|E|}}{\epsprivacy}\right).
\qedhere \]
\end{proof}

\subsection{Proof of Theorem~\ref{Imp-k-sub-opt}}
\begin{proof}[Proof of Theorem~\ref{Imp-k-sub-opt}]

Let $R^{(j)}$ be $R$ in the $j$-th iteration, $\mb{o}$ be the optimal solution and $\mb{x}^{(j)}$ be the solution after the $j$-th iteration.
Further, let $X^{(j)}=\mathrm{supp}(\mb{x}^{(j)})$, $O^{(j)}=\mathrm{supp}(\mb{o}^{(j)})$, and
\[{\Lambda(\mb{x})}^{(j)}=\{e\in E\setminus \mathrm{supp}(\mb{x}^{(j)})\mid \mathrm{supp}(\mb{x}^{(j)})\cup\{e\}\in \mc{I}\}
\]

We iteratively define $\mb{o}^{(0)}=\mb{o},\mb{o}^{(1)},\dots,\mb{o}^{r(\mc{M})}$ as follows. If  $R^{(j)}\cap {\Lambda(\mb{x})}^{(j)}=\emptyset$, then we regard that the algorithm failed. Else we proceed as follows.
By Lemma~\ref{extension}, for any $e^{(j)}\in R^{(j)}\cap {\Lambda(\mb{x})}^{(j)}$, there exists $e'$ such that $e'\in O^{(j-1)}\setminus X^{(j-1)}$ and $O^{(j-1)}\setminus \{e'\} \cup\{e^{(j)}\}\in\mc{B}$. Now let $o^{(j)} = e'$ and define $\mb{o}^{(j-1/2)}$ as the vector obtained by assigning 0 to the $o^{(j)}$-th element of $\mb{o}^{(j-1)}$. We then define $\mb{o}^{(j)}$ as the vector obtained from $\mb{o}^{(j-1/2)}$ by assigning $i^{(j)}$ to the $e^{(j)}$-th element. Therefore, for vector $\mb{o}^{(j)}$ we have $O^{(j)}\in \mc{B}$ and $\mb{x}^{(j)}\prec \mb{o}^{(j)}$.

If the algorithm does not fail and $\mb{o}^{(0)}=\mb{o},\mb{o}^{(1)},\dots,\mb{o}^{r(\mc{M})}$ are well defined, or in other words, if $R^{(j)}\cap {\Lambda(\mb{x})}^{(j)}$ is not empty for every $j \in [r(\mc{M})]$, then the rest of the analysis is completely the same as in Theorem~\ref{opt-k-submodular}, and we achieve an approximation ratio of (roughly) $1/2$. Hence, it suffices to show that $R^{(j)}\cap {\Lambda(\mb{x})}^{(j)}$ is not empty with a high probability.



\begin{lemma}\label{k-sub-Imp}
With probability at least $1-\frac{\gamma}{r(\mc{M})}$, we have $R^{(j)}\cap {\Lambda(\mb{x})}^{(j)}\neq\emptyset$ for every $j\in[r(\mc{M})]$.
\end{lemma}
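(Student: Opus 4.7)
\textbf{Proof plan for Lemma~\ref{k-sub-Imp}.} My approach is to bound the failure probability at a single iteration and then apply a union bound over the $r(\mc{M})$ iterations. The plan has three steps: a matroid-theoretic lower bound on $|\Lambda(\mb{x})^{(j)}|$, a hypergeometric tail bound for the sampling step, and the union bound.

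\emph{Step 1 (matroid lower bound).} I would first show by induction on $j$ that $\mathrm{supp}(\mb{x}^{(j-1)})$ is independent in $\mc{M}$ with $|\mathrm{supp}(\mb{x}^{(j-1)})| = j-1$; this follows from the algorithm only ever augmenting $\mathrm{supp}(\mb{x})$ with an element $e$ satisfying $\mathrm{supp}(\mb{x}) \cup \{e\} \in \mc{I}$. By the matroid augmentation axiom, $\mathrm{supp}(\mb{x}^{(j-1)})$ extends to some base $B \in \mc{B}$, and every element of $B \setminus \mathrm{supp}(\mb{x}^{(j-1)})$ lies in $\Lambda(\mb{x})^{(j)}$. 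Hence $|\Lambda(\mb{x})^{(j)}| \geq r(\mc{M}) - j + 1$.

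\emph{Step 2 (sampling bound).} Let $n_j = |E \setminus \mathrm{supp}(\mb{x}^{(j-1)})| = |E| - j + 1$ and $\ell_j = |\Lambda(\mb{x})^{(j)}|$. If the sample size equals $|E|$ then $R^{(j)} = E \setminus \mathrm{supp}(\mb{x}^{(j-1)}) \supseteq \Lambda(\mb{x})^{(j)}$, so the intersection is non-empty deterministically (using $\ell_j \geq 1$ from Step 1, since $j \leq r(\mc{M})$). Otherwise $|R^{(j)}| = \frac{n_j}{r(\mc{M}) - j + 1}\log\frac{r(\mc{M})}{\gamma}$, and since $R^{(j)}$ is a uniform subset of size $|R^{(j)}|$ drawn from $n_j$ elements, sampling without replacement gives
\begin{align*}
  \Pr\bigl[R^{(j)} \cap \Lambda(\mb{x})^{(j)} = \emptyset\bigr]
  &= \prod_{i=0}^{|R^{(j)}|-1}\frac{n_j - \ell_j - i}{n_j - i}
  \leq \left(1 - \frac{\ell_j}{n_j}\right)^{|R^{(j)}|}
  \leq \exp\!\left(-\frac{\ell_j\, |R^{(j)}|}{n_j}\right).
\end{align*}
Plugging in $\ell_j \geq r(\mc{M}) - j + 1$ and the value of $|R^{(j)}|$, the exponent telescopes to $-\log(r(\mc{M})/\gamma)$, so the per-iteration failure probability is at most $\gamma/r(\mc{M})$.

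\emph{Step 3 (union bound).} Applying a union bound over the $j \in [r(\mc{M})]$ iterations yields the claim. The main obstacle I anticipate is the careful handling of the $\min$ in the definition of $|R^{(j)}|$ (handled above as two cases) and the necessity of the inductive invariant in Step 1, which is what guarantees that the exchange argument lower bounding $|\Lambda(\mb{x})^{(j)}|$ applies at every iteration and that the algorithm is well-defined up to the point we analyze it. Beyond these bookkeeping items, the proof is a routine sampling-without-replacement calculation combined with matroid augmentation, and does not require any tools beyond those already established in the paper.
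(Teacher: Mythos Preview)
Your Steps~1--2 are exactly the paper's argument: bound the per-iteration miss probability by $(1-\ell_j/n_j)^{|R^{(j)}|}\le\exp(-\ell_j|R^{(j)}|/n_j)$, use the matroid augmentation bound $\ell_j\ge r(\mc{M})-j+1$, and cancel against the sample size to get $\gamma/r(\mc{M})$. If anything you are more careful than the paper, which writes the first step as an equality and does not separate out the $\min$ case.

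The only slip is Step~3. A union bound over $r(\mc{M})$ iterations, each failing with probability at most $\gamma/r(\mc{M})$, gives success probability $1-\gamma$, not the $1-\gamma/r(\mc{M})$ in the lemma statement. The paper's own proof stops after the single-iteration bound and never takes a union bound inside the lemma; the union bound is applied \emph{outside}, in the proof of Theorem~\ref{Imp-k-sub-opt}. So read the lemma as a per-$j$ statement (the phrase ``for every $j$'' is meant as ``for each fixed $j$''), drop your Step~3, and your proof matches the paper.
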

Analogous to the analysis in Theorem~\ref{opt-k-submodular}, for every time step $0\leq j\leq r(\mc{M})$, with probability at least $1-\frac{\gamma}{r(\mc{M})}$ we have 
\[
  F(\mb{x}^{(j)})-F(\mb{x}^{(j-1)})\geq F(\mb{o}^{(j-1)})-F(\mb{o}^{(j)})-\frac{2\Delta}{\epsprivacy}\ln({\frac{r(\mc{M})|\Lambda(\mb{x}^{(j)})|}{\gamma}})
 .
 \]
 By a union bound over $j \in [r(\mc{M})]$, with probability at least $1-\gamma$, it follows that
 \[
 F(\mb{x}) \geq \frac{1}{2}\mathrm{OPT}-O\left(\frac{\Delta r(\mc{M})\ln(|E|/\gamma)}{\epsprivacy}\right).
 \]
  Applying a similar argument as in~\cite{ohsaka2015monotone}, the number of evaluations of $f$ is at most
 \begin{align*}
     k\sum\limits_{t=1}^{r(\mc{M})} \frac{|E|-t+1}{r(\mc{M})-t+1} \ln\frac{r(\mc{M})}{\gamma}
     &=k\sum\limits_{t=1}^{r(\mc{M})} \frac{|E|-r(\mc{M})+t}{t} \log\frac{r(\mc{M})}{\gamma}\\
     &=O\left(k|E|\ln{r(\mc{M})} \ln{\frac{r(\mc{M})}{\gamma}} \right)
     \qedhere
 \end{align*}
\end{proof}

\begin{proof}[Proof of Lemma~\ref{k-sub-Imp}]
\begin{align*}
    \Pr[R^{(j)}\cap {\Lambda(\mb{x})}^{(j)}=\emptyset]
    &={\left(1-\frac{r(\mc{M})-\mathrm{supp}(\mb{x}^{(j)})}{|E\setminus \mathrm{supp}(\mb{x}^{(j)})|}\right)}^{|R^{(j)}|}\\
    &\leq \exp\left(-\frac{r(\mc{M})-j+1}{|E|-j+1}\frac{|E|-j+1}{r(\mc{M})-j+1}\ln\frac{r(\mc{M})}{\gamma}\right)\\
    &=\exp\left(-\ln\frac{r(\mc{M})}{\gamma}\right)=\frac{\gamma}{r(\mc{M})}
    \qedhere
\end{align*}
\end{proof}

\end{document}